\pgfplotsset{compat=1.18}
\newtheorem{theorem}{Theorem}[section]
\newtheorem{corollary}[theorem]{Corollary}
\newtheorem{lemma}[theorem]{Lemma}
\newtheorem{proposition}[theorem]{Proposition}
\newtheorem{problem}[theorem]{Problem}
\theoremstyle{remark}
\newtheorem{definition}[theorem]{Definition}
\newtheorem{remark}[theorem]{Remark}
\newtheorem{example}[theorem]{Example}
\newcommand{\fei}[3]{$\alpha$-$\operatorname{EI}_\mathbb{F}(#1, #2, #3)$}
\begin{document}

\title[Dimension-counting bounds for equi-isoclinic subspaces]{Dimension-counting bounds for \\ equi-isoclinic subspaces}
\author{Joseph W.\ Iverson
\qquad
Kaysie Rose O.}
\address{Department of Mathematics, Iowa State University, Ames, IA}

\email{jwi@iastate.edu}
\email{kaysie@iastate.edu}

\begin{abstract}
We make four contributions to the theory of optimal subspace packings and equi-isoclinic subspaces: 
(1) a new lower bound for block coherence, (2) an exact count of equi-isoclinic subspaces of even dimension $r$ in $\mathbb{R}^{2r+1}$ with parameter $\alpha \neq \tfrac{1}{2}$, (3) a new upper bound for the number of $r$-dimensional equi-isoclinic subspaces in $\mathbb{R}^d$ or $\mathbb{C}^d$, and (4) a proof that when $d=2r$, a further refinement of this bound is attained for every $r$ in the complex case and every $r=2^k$ in the real case.
For each of these contributions, the proof ultimately relies on a dimension count.
\end{abstract}

\maketitle

\section{Introduction}

In a finite-dimensional Hilbert space $\mathcal{H}$, the \textbf{first principal angle} between nontrivial subspaces $V,W \leq \mathcal{H}$ is defined as the smallest angle between lines in the two spaces, namely,
\[
\theta_1(V,W) := \arccos \max\{ | \langle v,w \rangle | : v \in V,\, w \in W, \, \| v \| = \| w \| = 1\}.
\]
First considered by Jordan~\cite{Jordan:75}, this notion (or more precisely, its sine) corresponds to the distance between $V$ and $W$ when they are viewed as sets in projective space.
For a given field $\mathbb{F} \in \{ \mathbb{R}, \mathbb{C} \}$ and parameters $d,r \geq 1$ and $n \geq 2$, the \textit{spectral packing problem}~\cite{DhillonHST:08,FickusJMW:17} asks to find a sequence $\{W_j\}_{j=1}^n$ of $r$-dimensional subspaces 
$W_j \leq \mathcal{H}$ in a $d$-dimensional Hilbert space over $\mathbb{F}$ for which the smallest principal angle $\min_{i \neq j} \theta_1(W_i,W_j)$ is as large as possible.
Equivalently, the \textbf{block coherence}
\[
\mu := \cos \min_{i \neq j} \theta_1(W_i,W_j) 
= \max_{i \neq j} \cos \theta_1(W_i,W_j)
\]
should be as small as possible.
This terminology comes from 
compressed sensing~\cite{Tropp:04,EldarKB:10,CalderbankTX:15,EldarM:09}, where subspaces with small block coherence 
allow recovery of block-sparse signals from few linear measurements.
(The smaller the block coherence, the more nonzero blocks are allowed in the signal.)
Motivated by such applications, we seek: (1) lower bounds on block coherence, and (2) theoretical limits on subspaces that achieve those bounds.

For example, the \textbf{Welch bound}~\cite{Welch:74,DhillonHST:08,CalderbankTX:15,FickusJMW:17} states that
\begin{equation}
\label{eq:Welch}
\mu \geq \sqrt{\frac{\tfrac{n}{d/r}-1}{n-1}}.
\end{equation}
Conditions for equality can be understood in terms of the orthogonal projections $P_1,\ldots,P_n$ for $W_1,\ldots,W_n$, where the Welch bound is attained if and only if:
    \begin{itemize}
    \item[(i)]
    $\exists \alpha$ with $\| P_i x \| = \alpha \| x \|$ for every $x \in W_j$ and $i \neq j$, and
    \smallskip
    \item[(ii)]
    $\exists c$ with $P_1 + \dotsb + P_n = cI$.
\end{itemize}
If (i) holds, then the subspaces are \textbf{equi-isoclinic} with parameter $\alpha \in [0,1]$, and we call $\{W_j\}_{j=1}^n$ an $\alpha$-$\operatorname{EI}_{\mathbb{F}}(d,r,n)$.\footnote{Some texts (like~\cite{LemmensS:73}) call $\alpha^2$ the parameter instead.}
If (ii) holds, then we call $\{W_j\}_{j=1}^n$ a \textbf{tight fusion frame}, abbreviated $\operatorname{TFF}_{\mathbb{F}}(d,r,n)$.
If both (i) and (ii) hold (that is, if the Welch bound is attained), then we call $\{W_j\}_{j=1}^n$ an \textbf{equi-isoclinic tight fusion frame}, abbreviated $\operatorname{EITFF}_{\mathbb{F}}(d,r,n)$.

Equi-isoclinic subspaces can be viewed as higher-dimensional analogs of equiangular lines, the study of which Godsil and Royle called ``one of the founding problems of algebraic graph theory''~\cite{GodsilR:01}.
Accordingly, equi-isoclinic subspaces have been studied by experts in the algebraic combinatorics, data science, and harmonic analysis communities~\cite{CalderbankTX:15,DhillonHST:08,EldarKB:10,EtTaoui:06,EtTaoui:18,EtTaoui:20,FGLI:25,FIJM:24,FickusIJM:23,FickusIJM:24,FickusJMW:17,GodsilH:92,Hoggar:76,Hoggar:77,IversonKM:21,LemmensS:73,Waldron:20}.
The main problem in this area is to determine parameters for which equi-isoclinic subspaces exist, and in particular, to determine the maximum number $v_{\mathbb{F}}(r,d)$ of distinct equi-isoclinic $r$-dimensional subspaces in $\mathbb{F}^d$.
This question is relevant for the spectral packing problem since its answer provides a necessary condition for Welch bound equality.

In this paper, we make four contributions to the spectral packing problem and the study of equi-isoclinic subspaces.
First, we provide a new lower bound on block coherence (Theorem~\ref{thm: spark bound for coherence}).
Second, we determine the existence of $\alpha$-$\operatorname{EI}_{\mathbb{R}}(2r+1,r,n)$ when $r$ is even and $\alpha \neq \tfrac{1}{2}$ (Theorem~\ref{thm:d2r1}).
Third, we provide a new upper bound on $v_{\mathbb{F}}(r,d)$ (Theorem~\ref{thm:biggerzonbound}).
Fourth, we show a refinement of this upper bound is achieved infinitely often (Theorem~\ref{thm:RHdimKn=n}).

All these contributions are joined by a common theme: in each case, the proof boils down to a non-obvious dimension count.
For example, our new upper bound on $v_{\mathbb{F}}(r,d)$ follows a strategy of Gerzon reported by Lemmens and Seidel in 1973~\cite{LemmensS:73}.
Gerzon observed that when $\alpha \neq 1$, the orthogonal projections of an $\alpha$-$\operatorname{EI}_{\mathbb{F}}(d,r,n)$ are linearly independent (Proposition~\ref{prop: independence}), so $n$ is at most the dimension of $\mathbb{F}_H^{d\times d} := \{ M \in \mathbb{F}^{d \times d} : M^* = M\}$, namely, $\frac{d(d+1)}{2}$ when $\mathbb{F}=\mathbb{R}$ and $d^2$ when $\mathbb{F}=\mathbb{C}$.
Lemmens and Seidel further noted these projections reside in a certain subspace $\mathcal{K}_1 \leq \mathbb{F}_H^{d\times d}$ whose dimension is easy to compute; this gives the bound
\[
n \leq \dim \mathcal{K}_1 = \dim \mathbb{F}_H^{d\times d} - \dim \mathbb{F}_H^{r \times r}+1.
\]
Refining this technique, we identify a nesting family of 
subspaces
\[
\mathcal{K}_n \leq \mathcal{K}_{n-1} \leq \cdots \leq \mathcal{K}_1,
\]
each of which contains all the projections.
This gives a sequence of $n$ inequalities $n \leq \dim \mathcal{K}_j$ for $j \leq n$.
Unfortunately, it is harder to compute $\dim \mathcal{K}_j$ when $j>1$; fortunately, we accomplish it for $j=2$ and $j=3$, finding (Theorem~\ref{thm: dim K3})
\begin{equation}
\label{eq: K3 bound}
n \leq \dim \mathcal{K}_3 = \dim \mathbb{F}_H^{d\times d} - 3\dim \mathbb{F}_H^{r \times r}+3.
\end{equation}
This improves Lemmens and Seidel's 1973 bound when $r>1$.

Furthermore, we show the final inequality $n \leq \dim \mathcal{K}_n$ is achieved with $n = \dim \mathcal{K}_n = v_{\mathbb{F}}(r,2r)$ infinitely often in the case $d = 2r$: it happens for every $r$ when $\mathbb{F} = \mathbb{C}$ and for every $r=2^k$ when $\mathbb{F} = \mathbb{R}$.
Since the inequality $n \leq \dim \mathcal{K}_n$ is a refinement of Gerzon's bound, this last result can be viewed as an analog of \textit{Zauner's conjecture}~\cite{Zauner:11} for higher-dimensional subspaces.

Our upper bound~\eqref{eq: K3 bound} makes progress on a question of Balla and Sudakov~\cite{BallaS:19}, who asked to determine the asymptotic dependence of $v_{\mathbb{F}}(r,d)$ on $r$ as $r \to \infty$.
This is a subtle question since there are many ways for $d$ to grow along with $r$.
For example, when $d=2r$, it is known that $v_{\mathbb{F}}(r,2r)=\rho_{\mathbb{F}}(r)+2$, where $\rho_{\mathbb{F}}(r)$ denotes the Radon--Hurwitz number~\cite{LemmensS:73,Hoggar:76}.
The value of $\rho_{\mathbb{F}}(r)$ is given in~\eqref{eq: radon hurwitz number values} and depends on the largest power of 2 that divides~$r$; hence $\rho_{\mathbb{F}}(r) = O(\log r)$.
When $d=mr$ for some other fixed $m$, very little is known.
Lemmens and Seidel's bound states 
\[
v_{\mathbb{F}}(r,mr) 
\leq \dim \mathcal{K}_1 
= \begin{cases}
\frac{m^2-1}{2} r^2 + \frac{m-1}{2} r + 1 & \mathbb{F} = \mathbb{R}, \\[5 pt]
(m^2-1)r^2 + 1 & \mathbb{F} = \mathbb{C};
\end{cases}
\]
\eqref{eq: K3 bound} improves this to 
\[
v_{\mathbb{F}}(r,mr) 
\leq \dim \mathcal{K}_3 
= \begin{cases}
\frac{m^2-3}{2} r^2 + \frac{m-3}{2} r + 3 & \mathbb{F} = \mathbb{R}, \\[5 pt]
(m^2-3)r^2 + 3 & \mathbb{F} = \mathbb{C},
\end{cases}
\]
decreasing the leading coefficient.
While this bound remains far from the true value of $v_{\mathbb{F}}(r,mr)$ when $m=2$, we find it promising that $n = \dim \mathcal{K}_n=v_{\mathbb{F}}(r,2r)$ infinitely often even in this case.

\section{Preliminaries}

For the remainder of the paper, we focus on the Hilbert space $\mathcal{H} = \mathbb{F}^d$.
We begin with a short review of subspace packings and fusion frames.

\subsection{Principal angles and isoclinism}

Given a pair of $r$-dimensional subspaces $V,W \leq \mathbb{F}^d$, select orthonormal bases to form the columns of corresponding isometries $\Phi,\Psi \in \mathbb{F}^{d \times r}$.
Denoting $P=\Phi \Phi^*$ for orthogonal projection onto $V$, the first principal angle between $V$ and $W$ is given by
\[
\cos \theta_1(V,W) = \max\{ \| P w \| : w \in W, \, \| w \| = 1 \} = \| \Psi^* \Phi \|_{\text{op}},
\]
where $\| \cdot \|_{\text{op}} = \sigma_{\text{max}}(\cdot)$ denotes the operator norm.
Indeed, the first equation holds since any unit vectors $v \in V$ and $w \in W$ satisfy $| \langle v, w \rangle | = |\langle v, Pw \rangle | \leq \| Pw \|$, with equality when $Pw \in \operatorname{span}\{v\}$, and the second equation holds since $\Phi^* P \Psi = \Phi^* \Psi$ is the matrix of the transformation $W \to V \colon x \mapsto Px$ relative to the chosen bases.
More generally, for $j \in \{1,\ldots,r\}$, the $j$th \textbf{principal angle} between $V$ and $W$ is defined as $\theta_j(V,W) := \arccos \sigma_j(\Phi^* \Psi)$.
It does not depend on the choice of bases since $(\Phi U)^*(\Psi U')= U^*(\Phi^* \Psi)U'$ and $\Phi^* \Psi$ have the same singular values when $U,U' \in \mathbb{F}^{r \times r}$ are unitary.
Furthermore, $\theta_j(W,V) = \theta_j(V,W)$ since $\Psi^* \Phi$ and $\Phi^* \Psi$ have the same singular values.
It is easy to prove the following, which summarizes Theorem~2.3 of~\cite{LemmensS:73}.

\begin{lemma}[\cite{LemmensS:73}]
\label{lem: isoclinism}
Given $r$-dimensional subspaces $V,W \in \mathbb{F}^d$, select corresponding isometries $\Phi,\Psi \in \mathbb{F}^{d \times r}$, and let $P,Q \in \mathbb{F}^{d \times d}$ be the corresponding orthogonal projections.
Then the following are equivalent for any choice of $\alpha \in [0,1]$:
    \begin{itemize}
    \item[(i)]
    $\cos \theta_j(V,W) = \alpha$ for every $j \in [r]$,
    \smallskip
    \item[(ii)]
    $\| P x \| = \alpha \| x \|$ for every $x \in W$,
    \smallskip
    \item[(iii)]
    $\Phi^* \Psi = \alpha U$ for some unitary $U \in \mathbb{F}^{r \times r}$,
    \smallskip
    \item[(iv)]
    $\Psi^* P \Psi = \alpha^2 I$,
    \smallskip
    \item[(v)]
    $QPQ = \alpha^2 Q$.
    \end{itemize}
\end{lemma}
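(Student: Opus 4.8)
The plan is to funnel all five conditions through the single matrix $M := \Phi^* \Psi \in \mathbb{F}^{r \times r}$, whose singular values $\sigma_1(M) \geq \cdots \geq \sigma_r(M)$ are by definition exactly $\cos\theta_1(V,W), \ldots, \cos\theta_r(V,W)$. Concretely, I would show that each of (i)--(v) is equivalent to the single clean identity
\begin{equation}
\label{eq: MstarM}
M^* M = \alpha^2 I,
\end{equation}
after which the lemma follows immediately by transitivity.

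The equivalences of (iii), (iv), (v) with \eqref{eq: MstarM} are essentially bookkeeping. For (iv), I would expand $\Psi^* P \Psi = \Psi^* \Phi \Phi^* \Psi = M^* M$ using $P = \Phi\Phi^*$, so that (iv) is \emph{literally} \eqref{eq: MstarM}. For (v), writing $P = \Phi\Phi^*$ and $Q = \Psi\Psi^*$ gives $QPQ = \Psi (M^* M) \Psi^*$ and $\alpha^2 Q = \Psi (\alpha^2 I) \Psi^*$; since $\Psi$ has orthonormal columns ($\Psi^* \Psi = I$), left-multiplying by $\Psi^*$ and right-multiplying by $\Psi$ cancels the outer factors, so (v) holds iff \eqref{eq: MstarM} does. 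For (iii), I would invoke the singular value decomposition: a square matrix satisfies $M^* M = \alpha^2 I$ iff all its singular values equal $\alpha$ (recall $\alpha \geq 0$), iff $M = \alpha U$ for some unitary $U$. This simultaneously settles (i) $\iff$ \eqref{eq: MstarM}, since condition (i) says precisely that every singular value of $M$ equals $\alpha$.

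It remains to connect (ii). Every $x \in W$ has the form $x = \Psi y$ for a unique $y \in \mathbb{F}^r$, and since $\Psi$ is an isometry, $\|x\| = \|y\|$ while $\|Px\| = \|\Phi M y\| = \|My\|$. Thus (ii) is equivalent to the identity $\|My\|^2 = \alpha^2 \|y\|^2$, i.e.\ $y^*(M^* M - \alpha^2 I)y = 0$, holding for every $y \in \mathbb{F}^r$. Because $M^* M - \alpha^2 I$ is Hermitian, polarization upgrades this to $M^* M - \alpha^2 I = 0$, which is \eqref{eq: MstarM}. I do not anticipate a serious obstacle here; the only steps requiring genuine care are the singular value characterization used for (i) $\iff$ (iii) and the polarization argument for (ii), where one should note that Hermiticity of $M^* M - \alpha^2 I$ lets the real polarization identity apply uniformly over both $\mathbb{F} = \mathbb{R}$ and $\mathbb{F} = \mathbb{C}$.
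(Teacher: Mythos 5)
Your proof is correct, and it is essentially the argument the paper has in mind: the paper omits the proof entirely, citing Theorem~2.3 of Lemmens--Seidel and remarking that the lemma is ``easy to prove,'' and your reduction of all five conditions to the single identity $M^*M = \alpha^2 I$ with $M = \Phi^*\Psi$ is exactly the route suggested by the paper's own setup, which defines the principal angles via the singular values of $\Phi^*\Psi$. All the individual steps check out, including the two you flag as needing care (the singular-value characterization for (i)$\iff$(iii) and the polarization argument for (ii), where Hermiticity of $M^*M-\alpha^2 I$ is indeed what makes the real case go through).
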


When the equivalent conditions above hold, we say $V$ and $W$ are $\alpha$-\textbf{isoclinic}.
Evidently, a sequence $\{W_j\}_{j=1}^n$ of $r$-dimensional subspaces $W_j \leq \mathbb{F}^d$ is equi-isoclinic with parameter $\alpha$ if and only if $W_i$ and $W_j$ are $\alpha$-isoclinic whenever $i \neq j$.
Furthermore, any $\alpha$-$\operatorname{EI}$ has block coherence $\mu = \alpha$.
In particular, an $\alpha$-$\operatorname{EI}_{\mathbb{F}}(d,r,n)$ is an EITFF if and only if
\begin{equation}
\label{eq:EITFF alpha}
\alpha = \sqrt{\frac{\tfrac{n}{d/r}-1}{n-1}}.
\end{equation}
Notably, if $d<2r$ then any $\alpha$-$\operatorname{EI}(d,r,n)$ has $\alpha=1$ since any pair of $r$-dimensional subspaces in $\mathbb{F}^d$ intersect nontrivially, for block coherence $\mu = 1$.
Considering~\eqref{eq:EITFF alpha}, an $\operatorname{EITFF}(d,r,n)$ exists only if $d=r$ or $d \geq 2r$.

\begin{remark}
This paper relies on the first principal angle to measure distance between subspaces, but there are many other such notions, all of which involve the principal angles.
See~\cite{ConwayHS:96,DhillonHST:08} for a summary and discussion of the related subspace packing problems.

Likewise, there are other ways to generalize the notion of equiangular lines to higher-dimensional subspaces, besides equi-isoclinic subspaces.
Examples include \textit{equi-chordal} subspaces~\cite{FickusJMW:17} and (much more generally) \textit{equiangular subspaces}~\cite{Blokhuis:93,BallaDKS:17,BallaS:19,WangY:25}.
\end{remark}

\subsection{Fusion frames}
Let $\{W_j\}_{j=1}^n$ be a sequence of $r$-dimensional subspaces $W_j \leq \mathbb{F}^d$, and let $P_j \in \mathbb{F}^{d \times d}$ be orthogonal projection onto $W_j$.
We call $\{W_j\}_{j=1}^n$ a \textbf{fusion frame} if $W_1+\dotsb+W_n = \mathbb{F}^d$.
Recall that $\{W_j\}_{j=1}^n$ is a \textit{tight fusion frame} if $P_1+\dotsb+P_n = c I$ for some $c$, necessarily $c = \tfrac{rn}{d}$, as seen by taking a trace.
For another perspective, select an isometry $\Phi_j \in \mathbb{F}^{d \times r}$ with image $W_j$ for each $j$, so that $P_j = \Phi_j \Phi_j^*$, and consider the \textbf{fusion synthesis matrix}
\[
\Phi := \begin{bmatrix} \Phi_1 & \cdots & \Phi_n \end{bmatrix} \in (\mathbb{F}^{d \times r})^{1 \times n} \cong \mathbb{F}^{d \times rn}.
\]
Then $P_1 + \dotsb + P_n$ coincides with the \textbf{fusion frame operator} $\Phi \Phi^*$, which shares its nonzero eigenvalues with the \textbf{fusion Gram matrix} $\Phi^* \Phi = \left[ \Phi_i^* \Phi_j \right]_{i,j=1}^n$.
In particular, $\{W_j\}_{j=1}^n$ is a tight fusion frame if and only if $\tfrac{d}{rn} \Phi^* \Phi$ is an orthogonal projection of rank~$d$.
In that case, if $d \neq rn$ then $I - \tfrac{d}{rn} \Phi^* \Phi$ is an orthogonal projection of rank~$rn-d$, and it follows that $\tfrac{rn}{rn-d}(I - \tfrac{d}{rn} \Phi^* \Phi)$ is the fusion Gram matrix of some $\operatorname{TFF}_{\mathbb{F}}(rn-d,r,n)$.
Any such TFF is called a \textbf{Naimark complement} for the TFF $\{W_j\}_{j=1}^n$.
Furthermore, if $\{W_j\}_{j=1}^n$ is an $\operatorname{EITFF}_{\mathbb{F}}(d,r,n)$ with isoclinism parameter $\alpha$, then each off-diagonal block of $\Phi^*\Phi$ is $\alpha$ times a unitary by Lemma~\ref{lem: isoclinism}(iii); if $d \neq rn$ then a similar consideration shows any Naimark complement is an $\operatorname{EITFF}_{\mathbb{F}}(rn-d,r,n)$.

\begin{remark}
While we were writing Section~\ref{sec: lower bound}, the ratio $d/r\geq 1$ emerged as a fundamental parameter, and we found it possible to write all the hypotheses and conclusions in that section in terms of the pair $(d/r,n)$ only, without need for the triple $(d,r,n)$.
We find it remarkable that such a reduction was possible.
In fact, this phenomenon extends to much of the basic theory.
Below, we mention several instances for a sequence of $r$-dimensional subspaces $W_1,\ldots,W_n \leq \mathbb{F}^d$:

    \begin{itemize}
    \item[(a)]
    $W_j = \mathbb{F}^d$ if and only if $d/r = 1$,
    \smallskip
    \item[(b)]
    $W_1+\dotsb+W_n = \mathbb{F}^d$ only if $d/r \leq n$,
    \smallskip
    \item[(c)]
    $W_1 \oplus \dotsb \oplus W_n = \mathbb{F}^d$ only if $d/r = n$,
   \smallskip
    \item[(d)]
    the Naimark complement of a $\operatorname{TFF}_{\mathbb{F}}(d,r,n)$ is a $\operatorname{TFF}_{\mathbb{F}}(D,r,n)$, where $D/r=n-d/r$,
    \smallskip
    \item[(e)]
    $\{W_j\}_{j=1}^n$ is the Naimark complement of a $\operatorname{TFF}_{\mathbb{F}}(r,r,n)$ only if $d/r = n-1$,
    
    \smallskip
    \item[(f)]
    apart from the trivial cases $d/r \in \{1,n-1,n\}$, an $\operatorname{EITFF}(d,r,n)$ exists only if $d/r \in [2,n-2]$,
    \smallskip
    \item[(g)]
    given an $\operatorname{EITFF}_{\mathbb{F}}(d,r,n)$ and an $\operatorname{EITFF}_{\mathbb{F}}(d',r',n)$ with $d/r = d'/r'$, a ``direct sum'' procedure yields an $\operatorname{EITFF}_{\mathbb{F}}(d+d',r+r',n)$, and this preserves the ratio $(d+d')/(r+r')=d/r=d'/r'$ (see~\cite{FickusMW:21}),
    \smallskip
    \item[(h)]
    when $\mathbb{F}=\mathbb{C}$, Hoggar's ``$\mathbb{C}$-to-$\mathbb{R}$ trick'' converts any $\operatorname{EITFF}_{\mathbb{C}}(d,r,n)$ into an $\operatorname{EITFF}_{\mathbb{R}}(2d,2r,n)$, and this preserves the ratio $(2d)/(2r)=d/r$ (see~\cite{Hoggar:77}),
    \smallskip
    \item[(i)]
    the Welch bound~\eqref{eq:Welch} is a function of $(d/r,n)$, and
    \smallskip
    \item[(j)]
    the ``spark bound for coherence'' (Theorem~\ref{thm: spark bound for coherence} below) is a function of $d/r$.
    \end{itemize}
Despite these last two facts, the optimal block coherence cannot always be expressed as a function of $(d/r,n)$.
There are many examples where a $d \times n$ complex \textit{equiangular tight frame} (ETF) is known to exist, but a $d \times n$ real ETF provably does not exist~\cite{FickusM:15}.
For any such example, Hoggar's ``$\mathbb{C}$-to-$\mathbb{R}$'' trick provides an $\operatorname{EITFF}_{\mathbb{R}}(2d,2,n)$ that achieves the Welch bound, where no $\operatorname{EITFF}_{\mathbb{R}}(d,1,n)$ exists to achieve the same bound.
Similarly, an $\operatorname{EITFF}_{\mathbb{C}}(4,2,6)$ exists by virtue of Hoggar's ``$\mathbb{H}$-to-$\mathbb{C}$'' trick applied to a $2 \times 6$ quaternionic ETF~\cite{CohnKM:16}, but there is no $\operatorname{EITFF}_{\mathbb{C}}(2,1,6)$ by virtue of Gerzon's bound $n \leq d^2$.
\end{remark}

\section{A lower bound for block coherence}
\label{sec: lower bound}

When $1 < d/r < 2$, the Welch bound~\eqref{eq:Welch} severely underestimates the block coherence of $n \geq 2$ subspaces with dimension $r$ in $\mathbb{F}^d$.
Indeed, the Welch bound is strictly less than~1 in this setting, but any two such subspaces have nontrivial intersection since $d<2r$, and so the true block coherence is $\mu=1$.
In this section, we provide a new lower bound for block coherence that captures this simple dimension-counting argument, generalizing it for larger values of $d/r$.

\begin{theorem}[Spark bound for coherence]
\label{thm: spark bound for coherence}
Choose $d,r \geq 1$ and $n \geq 2$ with $1 < d/r < n$.
Then any sequence of $n$~subspaces with dimension~$r$ in $\mathbb{F}^d$ has block coherence
\begin{equation}
\label{eq: spark bound}
\mu \geq \frac{1}{\lfloor d/r \rfloor}.
\end{equation}
In particular, there does not exist an $\alpha$-$\operatorname{EI}_{\mathbb{F}}(d,r,n)$ for any $\alpha < \frac{1}{\lfloor d/r \rfloor}$.
\end{theorem}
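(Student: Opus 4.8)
The plan is to promote the elementary observation that two $r$-dimensional subspaces of $\mathbb{F}^d$ must intersect when $d < 2r$ into a quantitative statement, replacing the threshold $2$ by $m+1$ where $m := \lfloor d/r\rfloor$. First I would extract from the hypothesis $1 < d/r < n$ the two facts I need: that $m \ge 1$, and that $n > d/r \ge m$ forces $n \ge m+1$, so that $m+1$ of the given subspaces are available; relabel them $W_1,\dots,W_{m+1}$. The operative dimension count is $(m+1)r > d$, which is exactly the statement $d < (m+1)r$ encoded in $m = \lfloor d/r\rfloor$.

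Second, I would turn this count into a linear dependence. Picking isometries $\Phi_j \in \mathbb{F}^{d\times r}$ with image $W_j$ and forming the fusion synthesis matrix $\Phi = [\Phi_1\ \cdots\ \Phi_{m+1}] \in \mathbb{F}^{d \times (m+1)r}$, the inequality $(m+1)r > d$ forces $\Phi$ to have nontrivial kernel; equivalently, the fusion Gram matrix $G = \Phi^*\Phi$ is singular. Thus there is a nonzero block vector $c = (c_1,\dots,c_{m+1})$ with $c_j \in \mathbb{F}^r$ and $Gc = 0$.

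The decisive step reads off the block structure of $Gc = 0$. Because each $\Phi_j$ is an isometry, the diagonal blocks are $G_{jj} = I_r$, while the preliminaries identify the off-diagonal blocks via $\|G_{ij}\|_{\mathrm{op}} = \cos\theta_1(W_i,W_j) \le \mu$. The $i$-th block row then gives $c_i = -\sum_{j\neq i} G_{ij} c_j$, so by the triangle inequality and submultiplicativity, $\|c_i\| \le \mu \sum_{j\neq i}\|c_j\|$. Choosing $i$ to maximize $\|c_j\|$ (a positive quantity since $c \neq 0$) bounds the sum by $m\|c_i\|$, and cancelling $\|c_i\|$ yields $1 \le m\mu$, i.e.\ $\mu \ge 1/\lfloor d/r\rfloor$. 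The final clause is then immediate, since every $\alpha$-$\operatorname{EI}$ has $\mu = \alpha$.

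I expect the main obstacle to be conceptual rather than computational: the arithmetic above is short, and the real work is recognizing the correct model. The key is to see that the right higher-dimensional analog of ``a dependent set of vectors'' is the singularity of the block Gram matrix of $m+1$ subspaces, and that feeding the resulting kernel vector through its largest block converts a purely dimensional fact into a coherence estimate. This is precisely the block version of the classical bound relating the spark of a frame to its mutual coherence, which is what the theorem's name signals.
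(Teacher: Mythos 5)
Your proposal is correct and follows essentially the same route as the paper: both arguments rest on the dimension count $(\lfloor d/r\rfloor+1)r > d$ forcing the block Gram matrix of $\lfloor d/r\rfloor+1$ of the subspaces to be singular, combined with a block diagonal-dominance estimate ($\|G_{ij}\|_{\mathrm{op}}\le\mu$ off the diagonal, $G_{ii}=I_r$) showing this is impossible unless $\mu\ge 1/\lfloor d/r\rfloor$. The only cosmetic difference is that you inline the maximal-block eigenvector argument where the paper invokes the block Gershgorin theorem through its intermediate ``coherence bound for spark'' proposition.
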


A proof appears below.
Recall that a sequence $\{W_j\}_{j=1}^n$ of subspaces in $\mathbb{F}^d$ is called \textbf{linearly dependent} if there exist vectors $w_1,\ldots,w_n$, not all zero, such that $w_j \in W_j$ for every $j \in [n]$ and $w_1+\dotsb + w_n = 0$.
The \textbf{spark} of a linearly dependent sequence $\{W_j\}_{j=1}^n$ of subspaces in $\mathbb{F}^d$ is the smallest length of a linearly dependent subsequence.
We call Theorem~\ref{thm: spark bound for coherence} the ``spark bound for coherence'' since it follows quickly from the following ``coherence bound for spark'', which is well known for~${r=1}$~{\cite{FickusJKM:18,BandeiraFMW:13}}, and which appears implicitly in~\cite{EldarKB:10}.
Our contribution in this section is the observation that the ``coherence bound for spark'' sometimes yields a competitive lower bound for block coherence.

\begin{proposition}[Coherence bound for spark]
\label{prop: coherence bound for spark}
For $n \geq 2$, let $\{ W_j \}_{j=1}^n$ be a sequence of $r$-dimensional subspaces in $\mathbb{F}^d$ with block coherence $\mu \in (0,1]$.
Then any subsequence of $k \leq \min\{n,\lceil 1/\mu \rceil\}$ subspaces from $\{W_j\}_{j=1}^n$ is linearly independent.
Thus, if $\{W_j\}_{j=1}^n$ is linearly dependent, then its spark is at least $\lceil 1/\mu \rceil+1$.
\end{proposition}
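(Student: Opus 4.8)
The plan is to prove a single quantitative statement from which both conclusions follow: any nontrivial linear dependence among the $W_j$ must involve at least $\lceil 1/\mu\rceil+1$ of the subspaces. The ``linear independence'' claim is then exactly the assertion that no dependence of length $k\le\lceil 1/\mu\rceil$ exists, and the spark bound is immediate since the spark is by definition the smallest length of a linearly dependent subsequence.

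First I would extract the usable consequence of block coherence. From the first-principal-angle description of $\mu$ recalled before Lemma~\ref{lem: isoclinism}, namely $\mu=\max_{i\neq j}\cos\theta_1(W_i,W_j)$ with $\cos\theta_1(V,W)=\max\{|\langle v,w\rangle|:\|v\|=\|w\|=1\}$, homogeneity gives the bound $|\langle v,w\rangle|\le\mu\,\|v\|\,\|w\|$ for any $i\neq j$, $v\in W_i$, and $w\in W_j$. This is the only property of $\mu$ I will use.

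Next, suppose $\sum_j w_j=0$ with $w_j\in W_j$ and not all $w_j=0$. Let $S$ be the set of indices with $w_j\neq 0$, and for $j\in S$ set $c_j:=\|w_j\|>0$ and $u_j:=w_j/c_j$, so that $\sum_{j\in S}c_j u_j=0$. Pairing this relation with each $u_i$ for $i\in S$ and isolating the diagonal term $\langle u_i,u_i\rangle=1$ gives $c_i=-\sum_{j\in S\setminus\{i\}}c_j\langle u_j,u_i\rangle$, whence $c_i\le\mu\sum_{j\in S\setminus\{i\}}c_j$. Summing over $i\in S$ and writing $T:=\sum_{j\in S}c_j>0$ yields $T\le\mu(|S|-1)T$, so that $|S|\ge 1+1/\mu$. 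One may phrase the identical estimate through Gershgorin's theorem applied to the Gram matrix $[\langle u_i,u_j\rangle]_{i,j\in S}$, which is singular because $(c_j)_{j\in S}$ lies in its kernel while its diagonal entries equal $1$ and its off-diagonal entries are bounded by $\mu$; I expect either formulation to be routine.

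The step I expect to require the most care is the final passage from the real inequality $|S|\ge 1+1/\mu$ to the stated integer bound: since $|S|$ is an integer, $|S|\ge\lceil 1+1/\mu\rceil=\lceil 1/\mu\rceil+1$. Because $S$ is contained in the index set of any dependent subsequence, every such subsequence has length at least $\lceil 1/\mu\rceil+1$, giving the spark bound; contrapositively, a subsequence of length $k\le\lceil 1/\mu\rceil$ admits no nontrivial dependence and is therefore linearly independent. The one piece of bookkeeping that makes the ``spark'' and ``linear independence'' phrasings line up is the observation that a minimal-length dependent subsequence must have all of its terms nonzero, so its length is exactly $|S|$; dropping any index with $w_j=0$ would otherwise shorten it.
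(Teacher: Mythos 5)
Your proof is correct, but it takes a genuinely different (if related) route from the paper's. The paper works directly with the fusion Gram matrix $\Phi^* \Phi$ of the $k$ chosen subspaces, viewed as a $k \times k$ array of $r \times r$ blocks with identity diagonal blocks and off-diagonal blocks $\Phi_i^* \Phi_j$ of operator norm at most $\mu$; it then invokes the \emph{block} Gershgorin theorem to conclude $\lambda_{\min}(\Phi^*\Phi) \geq 1-(k-1)\mu > 0$, so the synthesis matrix has trivial kernel. You instead argue by contrapositive: you extract from a hypothetical dependence a family of unit vectors $u_j$, one per supporting subspace, and run a row-sum (scalar Gershgorin) estimate on their $|S|\times|S|$ scalar Gram matrix, using only the elementary consequence $|\langle v,w\rangle|\leq\mu\|v\|\|w\|$ of the definition of block coherence. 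Your key inequality $c_i \leq \mu\sum_{j\neq i}c_j$ and the summation yielding $|S|\geq 1+1/\mu$ are sound, the passage to $|S|\geq\lceil 1/\mu\rceil+1$ via integrality is correct (ceiling commutes with adding an integer), and your bookkeeping relating $|S|$ to the length of a dependent subsequence is fine. What each approach buys: yours is more elementary and self-contained, needing no citation for block Gershgorin; the paper's gives slightly more, namely a quantitative lower Riesz bound $\Phi^*\Phi \succeq (1-(k-1)\mu)I$ on the whole fusion Gram matrix rather than mere nonsingularity.
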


We include the simple proof for the sake of clarity.

\begin{proof}[Proof of Proposition~\ref{prop: coherence bound for spark}]
Given any subsequence of $k \leq \min \{ n, \lceil 1/\mu \rceil \}$ subspaces from $\{W_j\}_{j=1}^n$, choose respective isometries $\Phi_1,\ldots,\Phi_k \in \mathbb{F}^{d \times r}$, and consider the fusion synthesis matrix
\[
\Phi := \begin{bmatrix} \Phi_1 & \cdots & \Phi_k \end{bmatrix} \in ( \mathbb{F}^{d \times r} )^{1 \times k} \cong \mathbb{F}^{d \times rk}.
\]
To show the subsequence is linearly independent, it suffices to show $\Phi$ has trivial kernel, or equivalently, $0 \notin \sigma(\Phi^* \Phi)$.
Toward this end, we apply Gershgorin's theorem for block matrices~\cite[Theorem~1.13.1]{Tretter:08}.
In the present setting, $\Phi^*\Phi$ is Hermitian with identity blocks $\Phi_i^* \Phi_i = I_r$ on the diagonal, so Gershgorin's theorem states that
\[
\sigma( \Phi^* \Phi ) \subseteq \bigcup_{j=1}^k D \biggl(1, \sum_{i \neq j} \| \Phi_i^* \Phi_j \|_{\text{op}} \biggr),
\]
where $D(a,R) \subseteq \mathbb{C}$ denotes the disk of radius $R$ centered at $a$.
Applying the definition of block coherence to enlarge the radius of each disk above, we find that
\[
\sigma( \Phi^* \Phi ) \subseteq D(1,(k-1)\mu).
\]
Here, $k-1 < 1/\mu$ since $k \leq \lceil 1/\mu \rceil$.
Therefore,
\[
\lambda_{\text{min}}(\Phi^* \Phi) \geq 1 - (k-1) \mu > 0.
\qedhere
\]
\end{proof}

\begin{proof}[Proof of Theorem~\ref{thm: spark bound for coherence}]
Let $\{W_j\}_{j=1}^n$ be a sequence of $r$-dimensional subspaces in $\mathbb{F}^d$, and let $\mu$ be its coherence.
Since $nr > d$, these subspaces are linearly dependent, and $\mu \neq 0$.
Next, set $k: = {\lfloor d/r \rfloor + 1}$, which is the smallest integer with $k > d/r$.
Then $n \geq k$, and since $kr > d$, every subsequence of $k$ subspaces from $\{W_j\}_{j=1}^n$ is linearly dependent.
In other words, 
\[
k \geq \operatorname{spark}(\{W_j\}_{j=1}^n)
\geq \lceil 1 /\mu \rceil+1,
\]
where we have applied Proposition~\ref{prop: coherence bound for spark}.
Consequently,
\[
\lfloor d/r \rfloor = k-1
\geq \lceil1/\mu\rceil
\geq 1/\mu.
\]
Rearranging gives~\eqref{eq: spark bound}.
\end{proof}

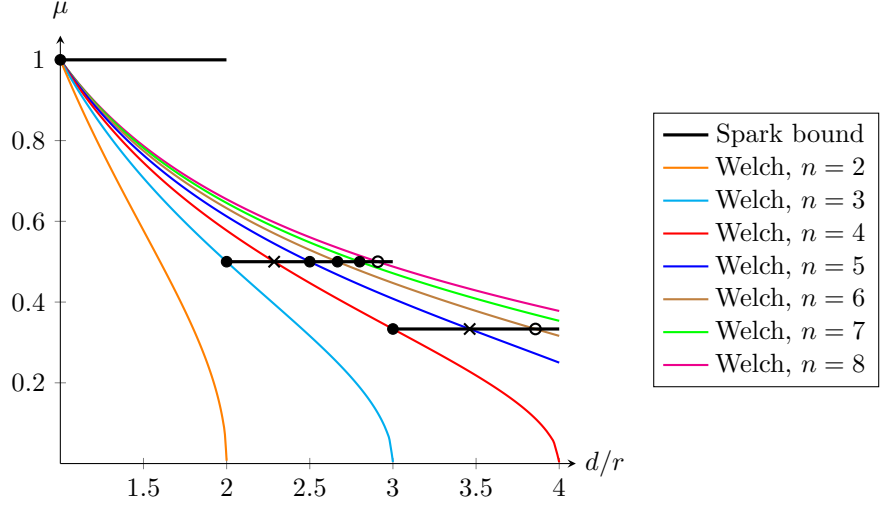
\begin{figure}
\begin{tikzpicture}
    \begin{axis}[
        y post scale=1,
        xlabel=$d/r$,
        ylabel=$\mu$,
        xmin=1, xmax=4.10,
        ymin=0, ymax=1.06,
        grid=none,
        xtick={1,1.5,2,2.5,3,3.5,4},
        ytick={0,0.2,0.4,0.6,0.8,1},
        axis lines=middle,
        legend style={at={(1.15,0.82)},anchor=north west},
        every axis x label/.style={at={(current axis.right of origin)},anchor=west},
        every axis y label/.style={at={(current axis.north west)},above=1.1mm},
        reverse legend
    ]

        \addplot[domain=1:4, samples=100, smooth, thick, magenta] {sqrt((8/x-1)/7)};
        \addlegendentry{Welch, $n=8$}

        \addplot[domain=1:4, samples=100, smooth, thick, green] {sqrt((7/x-1)/6)};
        \addlegendentry{Welch, $n=7$}

        \addplot[domain=1:4, samples=100, smooth, thick, brown] {sqrt((6/x-1)/5)};
        \addlegendentry{Welch, $n=6$}

        \addplot[domain=1:4, samples=100, smooth, thick, blue] {sqrt((5/x-1)/4)};
        \addlegendentry{Welch, $n=5$}

        \addplot[domain=1:4, samples=100, smooth, thick, red] {sqrt((4/x-1)/3)};
        \addlegendentry{Welch, $n=4$}

        \addplot[domain=1:3, samples=100, smooth, thick, cyan] {sqrt((3/x-1)/2)};
        \addlegendentry{Welch, $n=3$}

        \addplot[domain=1:2, samples=100, smooth, thick, orange] {sqrt((2/x-1)/1)};
        \addlegendentry{Welch, $n=2$}

        \addplot[domain=1:1.999, samples=100, smooth, very thick, black] {1/floor(x)};
        \addplot[domain=2:2.999, samples=100, smooth, very thick, black, forget plot] {1/floor(x)};
        \addplot[domain=3.001:4, samples=100, smooth, very thick, black, forget plot] {1/floor(x)};
        \addlegendentry{Spark bound}

        \addplot [only marks, forget plot] table {
            1 1
            2 0.5
            2.5 0.5
            2.6666667 0.5
            2.8 0.5
            3 0.3333333333
        };

        \addplot [only marks, mark=o, mark options={thick}, forget plot] table {
            2.90909090 0.5
            3.85714286 0.3333333333
        };

        \addplot [only marks, mark=x, mark options={scale=1.5,thick}, forget plot] table {
            2.28571429 0.5
            3.46153846 0.3333333333

        };
    \end{axis}
\end{tikzpicture}

{\footnotesize
\caption{ The \textit{spark bound} from Theorem~\ref{thm: spark bound for coherence} provides a lower bound for block coherence that exceeds the Welch bound~\eqref{eq:Welch} in some cases, as shown wherever a black line hovers over a colored line above. Where the spark and Welch bounds coincide, a point is marked. Filled circles represent known subspace packings, while open circles represent packings whose existence or nonexistence is unknown by the authors. At the points marked with an {\tt x}, no $\operatorname{EITFF}_{\mathbb{F}}(d,r,n)$ exists since $d/r > n-2$ and $d/r \notin\{n,n-1\}$.}
\label{fig: coherence bounds}
}
\end{figure}

\begin{example}
Given any dimensions $d$ and $r$ with $2 < d/r < 3$ and any $\alpha \in [\tfrac{1}{2},1)$, there exists an $\alpha$-$\operatorname{EI}_{\mathbb{C}}(d,r,n)$ with $n = 3$; taking $\alpha = \tfrac{1}{2}$ achieves the spark bound $\mu \geq \tfrac{1}{2}$.
Indeed, Lemma~\ref{lem:normalization} inspires the following construction.
Since $\tfrac{1}{2} \leq \alpha < 1$, it is not hard to show that 
\[
x_1 := \frac{3 \alpha^2 - 1}{2 \alpha^3} 
\]
satisfies $-1 \leq x_1 < 1$, and so we can choose $c>0$ for which
\[
x_2:= \frac{3 \alpha^2 - 1}{2 \alpha^3} + c^2 \cdot \frac{1-\alpha^2}{2 \alpha^3}
\]
also lies in $[-1,1]$.
Next, define $y_j := \sqrt{1 - x_j^2}$ and $\lambda_j := x_j + y_j \mathrm{i}$ for $j=1,2$, so that
\[
U:= \begin{bmatrix}
\lambda_2 I_{d-2r} & 0 \\
0 & \lambda_1 I_{3r-d}
\end{bmatrix}
\in \mathbb{C}^{r \times r}
\]
is unitary.
Then put $Y:=\begin{bmatrix} c I_{d-2r} & 0 \end{bmatrix} \in \mathbb{C}^{d-2r \times r}$ and $\beta:=\sqrt{1-\alpha^2}$, and define
\[
\Phi_1 := \begin{bmatrix}
I_r \\
0_r \\
0_{d-2r,r}
\end{bmatrix},
\quad
\Phi_2 := \begin{bmatrix}
\alpha I_r \\
\beta I_r \\
0_{d-2r,r}
\end{bmatrix},
\quad
\Phi_3 := \begin{bmatrix}
\alpha I_r \\
\tfrac{\alpha}{\beta} U - \tfrac{\alpha^2}{\beta} I_r \\
Y
\end{bmatrix}.
\]
With these definitions, it is straightforward (if a little tedious) to show that each $\Phi_j \in \mathbb{C}^{d \times r}$ is an isometry, and that the corresponding subspaces are equi-isoclinic with parameter~$\alpha$.
\end{example}

While the ``coherence bound for spark'' is well known in the case $r=1$, the ``spark bound for coherence'' is not useful in that case since it reads $\mu \geq \frac{1}{d}$, and one can easily show the Welch bound~\eqref{eq:Welch} is at least as strong under the hypotheses of Theorem~\ref{thm: spark bound for coherence} (namely, $d < n$).
However, for subspaces of larger dimension, there are cases where the ``spark bound for coherence'' is more informative,
as detailed below and depicted in Figure~\ref{fig: coherence bounds}.

\begin{corollary}
\label{cor: EITFF nonexistence}
Choose $d,r \geq 1$ and $n \geq 2$ with $1 < d/r < n$, and consider the Welch and spark bounds given by~\eqref{eq:Welch} and~\eqref{eq: spark bound}, respectively.
The spark bound exceeds the Welch bound if and only if one of the following holds:
    \begin{itemize}
    \item[(i)]
    $1 < d/r < 2$,
    \smallskip
    \item[(ii)]
    $n = \lceil d/r \rceil$ and $d/r \notin \mathbb{Z}$,
    \smallskip
    \item[(iii)]
    $n = \lceil d/r \rceil + 1$ and $f + \tfrac{f^2-f}{f^2+f+1} < d/r$, where $f = \lfloor d/r \rfloor$,
    \smallskip
    \item[(iv)]
    $5 \leq n \leq 8$ and $\tfrac{4n}{n+3} < d/r < 3$, or
    \smallskip
    \item[(v)]
    $n = 6$ and $\tfrac{27}{7} < d/r < 4$.
    \end{itemize}
In particular, if any of the above holds then there does not exist an $\operatorname{EITFF}_{\mathbb{F}}(d,r,n)$.
\end{corollary}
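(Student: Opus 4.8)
Write $t := d/r$ and $f := \lfloor t \rfloor$, so that the hypothesis $1 < t < n$ forces $f \ge 1$ and, since $n$ is an integer exceeding $t \ge f$, also $n \ge f+1$. Under this hypothesis the Welch bound is real and strictly positive (its radicand is $\tfrac{n-t}{t(n-1)}>0$) and the spark bound is $1/f>0$, so I would begin by squaring both sides of the desired inequality and clearing denominators, all of which are positive. A short manipulation turns ``spark bound $>$ Welch bound'' into the equivalent criterion
\[
t > g(n), \qquad g(n) := \frac{f^2 n}{f^2 + n - 1}.
\]
The corollary then reduces to deciding, for each $f$, which admissible pairs $(t,n)$ satisfy $t > g(n)$, and checking that these are exactly the pairs described in (i)--(v).

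Two elementary facts drive the analysis. First, writing $g(n) = f^2 - \frac{f^2(f^2-1)}{f^2+n-1}$ shows $g$ is weakly increasing in $n$, and strictly so when $f \ge 2$. Second, I would record the boundary values $g(f+1) = f$ and $g(f+2) = f + \frac{f^2 - f}{f^2+f+1}$, the latter obtained by direct simplification. Because $t < f+1$ always, the criterion $t > g(n)$ can hold only when $g(n) < f+1$; solving this last inequality and using the factorization $f^3 + f^2 - f - 1 = (f-1)(f+1)^2$ gives the clean cutoff
\[
g(n) < f + 1 \iff n < \frac{(f-1)(f+1)^2}{f^2 - f - 1},
\]
valid for $f \ge 2$. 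This is the source of finiteness: for each fixed $f$, only finitely many $n$ can work.

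With these in hand the proof splits by the value of $f$. When $f = 1$ we have $g \equiv 1$, so the criterion is just $t > 1$, always true, giving case (i). When $f \ge 2$: the value $n = f+1 = \lceil t \rceil$ gives $g = f$, hence the criterion $t > f$, i.e.\ $t \notin \mathbb{Z}$ (case (ii)); the value $n = f+2 = \lceil t \rceil + 1$ gives exactly the threshold of case (iii); and the cutoff above controls all larger $n$. A short check shows the cutoff lies below $f+3$ — equivalently $f^2 - 3f - 2 > 0$ — precisely when $f \ge 4$, so cases (ii) and (iii) exhaust $f \ge 4$. For $f = 2$ the cutoff reads $n \le 8$, producing the additional values $n \in \{5,6,7,8\}$ of case (iv), with threshold $g(n)=\tfrac{4n}{n+3}$; for $f = 3$ it reads $n \le 6$, producing the single extra value $n = 6$ of case (v), with threshold $g(6)=\tfrac{27}{7}$. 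Finally I would note that an integer $t = f$ never satisfies the strict inequality, since $g(n) \ge f$ for all admissible $n$ with equality only at $n = f+1$; this confirms that the non-integrality implicit in (ii)--(v) is exactly right and that no listed case is spurious.

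I expect the main obstacle to be bookkeeping rather than any single hard estimate: one must verify that the index ranges and interval endpoints in (ii)--(v) match the computed thresholds precisely, and in particular that the extra families for $f = 2,3$ arise solely because $f^2 - 3f - 2 < 0$ there while it is positive for $f \ge 4$. The concluding sentence is then immediate: an $\operatorname{EITFF}_{\mathbb{F}}(d,r,n)$ is an $\alpha$-$\operatorname{EI}$ whose parameter $\alpha$ equals the Welch bound by~\eqref{eq:EITFF alpha}, whereas Theorem~\ref{thm: spark bound for coherence} forbids any $\alpha$-$\operatorname{EI}$ with $\alpha$ below the spark bound; so whenever the spark bound exceeds the Welch bound, no such EITFF can exist.
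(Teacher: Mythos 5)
Your proposal is correct and follows essentially the same route as the paper: both reduce ``spark $>$ Welch'' to a single algebraic inequality (your $t > f^2n/(f^2+n-1)$ is exactly the paper's condition \eqref{eq: p bound} after substituting $p = t-f$, $k = n-f$) and then obtain finiteness from $t < f+1$, which is the paper's use of $p<1$ to force $k < (2f^2-1)/(f^2-f-1)$. The resulting case analysis over $f$ and $n$, the thresholds for cases (ii)--(v), and the concluding EITFF-nonexistence deduction all match the paper's argument.
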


To be clear, EITFF nonexistence is novel only in cases (iv) and (v), since the necessary condition $d/r \in \{1,n-1,n\}\cup[2,n-2]$
fails in cases (i)--(iii).
In those ``trivial'' cases, the spark bound for coherence improves on EITFF nonexistence by explicitly bounding the block coherence away from Welch.

\begin{proof}[Proof of Corollary~\ref{cor: EITFF nonexistence}]
Write $d/r = f + p$ with $f = \lfloor d/r \rfloor \geq 1$ and $p \in [0,1)$, and put $k := n-f \geq 1$.
It is easy to show the spark bound exceeds the Welch bound if and only if
\begin{equation}
\label{eq: p bound}
p > \frac{(f^2-f)k-(f^2-f)}{f^2+f+k-1}.
\end{equation}
This happens in each of (i)--(v), where (i) says $f=1$ and $p>0$, (ii) says $k=1$ and $p>0$, (iii) interprets~\eqref{eq: p bound} for $k=2$, (iv) interprets it for $f=2$ and $3 \leq k \leq 6$, and (v) interprets it for $f=3$ and $k=3$.

To show these are the only possibilities, we may assume $f \geq 2$ and $k \geq 3$.
Since $p<1$, \eqref{eq: p bound}~holds only if
\[
\frac{(f^2-f)k-(f^2-f)}{f^2+f+k-1} < 1,
\]
that is,
\[
k < \frac{2f^2-1}{f^2-f-1}.
\]
This fails when $f \geq 4$ since $k \geq 3$;
for $f \in \{2,3\}$, it limits $k$ to the values given in (iv) and (v), respectively.
\end{proof}

\begin{table}[b]
\begin{center}
\small
\begin{tabular}{crrrrrrrrrrrrrrrrrr} 
\multicolumn{1}{c|}{$\bm{d}$} & 
8 & 11 & 11 & 13 & 14 & 14 & 16 & 17 & 17 & 17 & 18 & 19 & 19 & 20 & 20 & 20 & 21 & 22\\ 
\multicolumn{1}{c|}{$\bm{r}$} & 3 & 4 & 4 & 5 & 5 & 5 & 6 & 6 & 6 & 6 & 7 & 7 & 7 & 7 & 7 & 7 & 8 & 8\\ 
\multicolumn{1}{c|}{$\bm{n}$} & 5 & 5 & 6 & 5 & 5 & 6 & 5 & 5 & 6 & 7 & 5 & 5 & 6 & 5 & 6 & 7 & 5 & 5 \\
~ \\
\multicolumn{1}{c|}{$\bm{d}$} & 22 & 23 & 23 & 23 & 23 & 24 & 25 & 25 & 26 & 26 & 26 & 26 & 27 & 27 & 28 & 28 & 28 & 29\\ 
\multicolumn{1}{c|}{$\bm{r}$} & 8 & 8 & 8 & 8 & 9 & 9 & 9 & 9 & 9 & 9 & 9 & 10 & 10 & 10 & 10 & 10 & 11 & 10\\ 
\multicolumn{1}{c|}{$\bm{n}$} & 6 & 5 & 6 & 7 & 5 & 5 & 5 & 6 & 5 & 6 & 7 & 5 & 5 & 6 & 5 & 6 & 5 & 5 \\
\end{tabular}
\end{center}
\caption{For each column above, Corollary~\ref{cor: EITFF nonexistence} gives the nonexistence of an $\operatorname{EITFF}_{\mathbb{F}}(d,r,n)$ for either $\mathbb{F}\in \{ \mathbb{R}, \mathbb{C} \}$.
Considering Naimark complements, no $\operatorname{EITFF}_{\mathbb{F}}(rn-d,r,n)$ exists, either.
}
\label{tbl: nonexistence}
\end{table}

\begin{example}
Table~\ref{tbl: nonexistence} gives some parameters of EITFFs whose nonexistence follows from case (iv) or (v) of Corollary~\ref{cor: EITFF nonexistence}.
As far as the authors know, all these results are new.
\end{example}

There are a few places where the Welch and spark bounds coincide, as seen in Figure~\ref{fig: coherence bounds}.
Curiously, EITFFs are known to exist at several of these intersection points; when this happens, the spark bound is sharp.

\begin{example}
Given any integers $r,m \geq 1$, put $d := mr$ and $n:=m+1$.
Then the Welch and spark bounds coincide, where
\[
\sqrt{ \frac{\tfrac{n}{d/r} -1}{n-1} } = \frac{1}{m} = \frac{1}{\lfloor d/r \rfloor}.
\]
Both bounds are saturated since there exists an $\operatorname{EITFF}_{\mathbb{F}}(mr,r,m+1)$, namely, the Naimark complement of an $\operatorname{EITFF}_{\mathbb{F}}(r,r,m+1)$ that consists of $m+1$ copies of the ambient space $\mathbb{F}^r$.
In Figure~\ref{fig: coherence bounds}, this construction accounts for the filled circles at coordinates $(1,1)$, $(2,\tfrac{1}{2})$, and $(3,\tfrac{1}{3})$.
\end{example}

\begin{example}
\label{ex: totally symmetric spark bound}
Figure~\ref{fig: coherence bounds} shows three other filled circles where the Welch and spark bounds coincide.
An $\operatorname{EITFF}_{\mathbb{R}}(5,2,5)$ appears in~\cite{EtTaoui:06}, an $\operatorname{EITFF}_{\mathbb{R}}(8,3,6)$ appears in~\cite{IversonKM:21}, and the Naimark complement of an $\operatorname{EITFF}_{\mathbb{R}}(14,5,7)$ appears in~\cite{FickusIJM:24}.
Interestingly, all three circles can also be filled by examples in~\cite{FIJM:24}, which gives a common construction based on symmetry.
\end{example}

There are four other intersection points marked in Figure~\ref{fig: coherence bounds}.
At the points marked {\tt x}, EITFF nonexistence is trivial.
We do not know if EITFFs exist for the other two points (marked with open circles), and we leave this as an open problem.

\begin{problem}
Does there exist an $\operatorname{EITFF}_{\mathbb{F}}(d,r,n)$ with either
    \begin{itemize}
    \item[(a)]
    $(d/r,n) = (\tfrac{32}{11},8)$ or
    \smallskip
    \item[(b)]
    $(d/r,n) = (\tfrac{27}{7},6)$?
    \end{itemize}
\end{problem}


\section{An exact count of equi-isoclinic subspaces}

In this section, we determine the existence of $\alpha$-$\operatorname{EI}_{\mathbb{R}}(2r+1,r,n)$ when $r$ is even and $\alpha \neq \tfrac{1}{2}$.
We show such subspaces reside in a space of dimension $2r$, so they can be viewed as an $\alpha$-$\operatorname{EI}_{\mathbb{R}}(2r,r,n)$.
Existence of the latter was resolved by Lemmens and Seidel in terms of the (real) \textbf{Radon--Hurwitz number}, defined as $\rho_{\mathbb{R}}(r)=8b+2^c$ when $r=(2a+1)2^{4b+c}$ for nonnegative integers $a,b,c$ with $c \leq 3$.

\begin{proposition}[\cite{LemmensS:73}]
\label{prop:d2r}
For any $r \geq 1$, $n \geq 2$, and $\alpha \in (0,1)$, an $\alpha$-$\operatorname{EI}_{\mathbb{R}}(2r,r,n)$ exists if and only if:
    \begin{itemize}
        \item[(i)]
        $2(\alpha^2 - 1) < (2\alpha^2-1)n$ and $n \leq \rho_{\mathbb{R}}(r)+1$, or
       
        \smallskip
        
        \item[(ii)]
        $2(\alpha^2-1) = (2 \alpha^2 - 1)n$ and $n \leq \rho_{\mathbb{R}}(r)+2$.
    \end{itemize}
\end{proposition}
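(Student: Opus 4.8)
The plan is to convert the existence problem into a question about anticommuting complex structures on $\mathbb{R}^r$, whose maximal number is $\rho_{\mathbb{R}}(r)-1$ by the classical Hurwitz--Radon theorem. After an orthogonal change of coordinates I would model $\mathbb{R}^{2r}=\mathbb{R}^r\oplus\mathbb{R}^r$ and take $W_1=\mathbb{R}^r\oplus 0$. By Lemma~\ref{lem: isoclinism}(iii), every other $r$-subspace that is $\alpha$-isoclinic to $W_1$ is the column space of $\begin{bmatrix}\alpha I\\ \beta S_j\end{bmatrix}$ for some orthogonal $S_j\in\mathbb{R}^{r\times r}$, where $\beta=\sqrt{1-\alpha^2}$. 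A short computation with Lemma~\ref{lem: isoclinism} then shows that $W_i$ and $W_j$ (for $i,j\ge 2$) are $\alpha$-isoclinic exactly when $S_i^\top S_j+S_j^\top S_i=\gamma I$, where $\gamma:=2-\alpha^{-2}$. Normalizing $S_2=I$ by an orthogonal transformation of the second factor, the case $i=2$ reads $S_j+S_j^\top=\gamma I$ for $j\ge 3$.

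For the generic regime $\alpha>\tfrac12$ (equivalently $\gamma\in(-2,2)$) I would set $\nu:=\sqrt{1-\gamma^2/4}$ and $J_j:=\nu^{-1}\bigl(S_j-\tfrac\gamma2 I\bigr)$ for $j\ge 3$. The relations $S_j+S_j^\top=\gamma I$ and $S_j^\top S_j=I$ become $J_j^\top=-J_j$ and $J_j^2=-I$, so each $J_j$ is a complex structure, while the remaining pairwise conditions become $J_iJ_j+J_jJ_i=\delta I$ with $\delta:=-2\gamma/(2+\gamma)$. Thus $J_3,\dots,J_n$ are skew-symmetric matrices with scalar symmetrized products, i.e. symmetrized Gram matrix $G$ having diagonal $-1$ and off-diagonal entries $\delta/2$. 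A purely algebraic check then identifies the two cases: rearranging shows the inequality in (i) is equivalent to $\alpha^2>\tfrac{n-2}{2(n-1)}$ (strictly above the Welch bound~\eqref{eq:Welch}), the equality in (ii) is the EITFF condition~\eqref{eq:EITFF alpha}, and these translate respectively to $G\prec 0$ and to $G\preceq 0$ with a one-dimensional kernel spanned by $\mathbf{1}$.

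The heart of the argument is to read off the Hurwitz--Radon count. Because the $J_j$ are skew, the symmetrized product of any two real linear combinations is scalar, and replacing $(J_j)$ by $(\sum_l A_{jl}J_l)$ changes $G$ to $AGA^\top$; by Sylvester's law of inertia I may choose $A$ realizing any congruent form. In case (i), $G$ is negative definite of rank $n-2$, hence congruent to $-I_{n-2}$, so existence of the $J_j$ is equivalent to the existence of $n-2$ \emph{anticommuting} complex structures on $\mathbb{R}^r$, giving $n-2\le\rho_{\mathbb{R}}(r)-1$. In case (ii), the kernel of $G$ forces the skew matrix $J_3+\dots+J_n$ to have square $0$ and hence to vanish, so the system reduces to $n-3$ structures with negative-definite Gram, yielding $n-3\le\rho_{\mathbb{R}}(r)-1$. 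Every step is reversible, so these counts furnish both necessity and a construction, producing exactly the bounds $n\le\rho_{\mathbb{R}}(r)+1$ and $n\le\rho_{\mathbb{R}}(r)+2$.

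It remains to dispose of the parameters lying in neither (i) nor (ii). If $\alpha^2<\tfrac{n-2}{2(n-1)}$ then the Welch bound~\eqref{eq:Welch} forbids existence outright, since an $\alpha$-$\operatorname{EI}$ has block coherence $\mu=\alpha$. The degenerate threshold $\alpha\le\tfrac12$ (where $\nu=0$) is handled directly: for $n=2$ any orthogonal $S_2$ yields a valid pair, while for $n\ge 3$ one has $\alpha^2\le\tfrac14\le\tfrac{n-2}{2(n-1)}$, so the only admissible point is $n=3$ at $\alpha=\tfrac12$ (the EITFF point of~\eqref{eq:EITFF alpha}), realized explicitly by $S_2=I$, $S_3=-I$. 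I expect the main difficulty to be the Clifford-theoretic step of the third paragraph: justifying the congruence reduction cleanly in both the definite and the degenerate semidefinite cases, deducing $\sum_j J_j=0$ in the latter, and matching the output precisely to the Hurwitz--Radon value $\rho_{\mathbb{R}}(r)-1$.
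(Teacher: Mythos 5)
The paper does not actually prove Proposition~\ref{prop:d2r}; it is imported verbatim from Lemmens and Seidel, so there is no internal proof to compare against. Judged on its own terms, your proposal is a correct reconstruction of the classical argument, and it is the same circle of ideas the paper itself deploys in Section~6 (Proposition~\ref{prop:RHform} and the $\rho$-space/simplex formalism) for the EITFF subcase. I checked the key computations: the isoclinism condition for $i,j\ge 2$ does reduce to $S_i^\top S_j+S_j^\top S_i=(2-\alpha^{-2})I$; the affine substitution $J_j=\nu^{-1}(S_j-\tfrac{\gamma}{2}I)$ turns the relations into $J_j^\top=-J_j$, $J_j^2=-I$, $J_iJ_j+J_jJ_i=\tfrac{-2\gamma}{2+\gamma}I$; the eigenvalues of $G$ are $-1-\tfrac{\delta}{2}$ and $-1+(n-3)\tfrac{\delta}{2}$, and the second is negative (resp.\ zero) exactly when $\alpha^2>\tfrac{n-2}{2(n-1)}$ (resp.\ $=$), matching (i) and (ii); the congruence step preserves skewness and is invertible in both the definite and the rank-$(n-3)$ semidefinite cases; and in case (ii) the skew matrix $\sum_j J_j$ has square $(\mathbf{1}^\top G\mathbf{1})I=0$ and hence vanishes, so the system is equivalent to $n-3$ anticommuting complex structures (your verification that $J_n:=-(J_3+\dots+J_{n-1})$ is again a complex structure with the right symmetrized products is the one step you should write out, but it works: $\mathbf{1}^\top G'\mathbf{1}=-1$ when $\delta=\tfrac{2}{n-3}$). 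The Welch-bound disposal of $\alpha^2<\tfrac{n-2}{2(n-1)}$ and the explicit $(n,\alpha)=(3,\tfrac12)$ construction close the remaining cases. One cosmetic slip: for $\alpha\in(\tfrac12,1)$ one gets $\gamma\in(-2,1)$ rather than all of $(-2,2)$, but only $|\gamma|<2$ is needed for $\nu$ to be real and nonzero, so nothing breaks.
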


Proposition~\ref{prop:d2r} makes it easy to determine the maximum number of subspaces in any $\alpha$-$\operatorname{EI}_{\mathbb{R}}(2r,r,n)$ as a function of $\alpha$ and $r$.
Our main result in this section is that the same characterization applies for $\alpha$-$\operatorname{EI}_{\mathbb{R}}(2r+1,r,n)$, provided $r$ is even and $\alpha \neq \tfrac{1}{2}$.

\begin{theorem}
\label{thm:d2r1}
Let $r\geq 1$ be even, let $n\geq 2$, and let $\alpha \in (0,1)$ satisfy $\alpha \neq \tfrac{1}{2}$.
If $\{W_j\}_{j=1}^n$ is an $\alpha$-$\operatorname{EI}_{\mathbb{R}}(2r+1,r,n)$, then there exists $V \leq \mathbb{R}^{2r+1}$ with $\dim V = 2r$ and $W_j \leq V$ for every~$j$.
Consequently:
    \begin{itemize}
        \item[(a)]
        an $\alpha$-$\operatorname{EI}_{\mathbb{R}}(2r+1,r,n)$ exists if and only if an $\alpha$-$\operatorname{EI}_{\mathbb{R}}(2r,r,n)$ exists (cf.\ Proposition~\ref{prop:d2r}), and
        \smallskip
        
        \item[(b)]
        an $\operatorname{EITFF}_{\mathbb{R}}(2r+1,r,n)$ exists if and only if $r=2$ and $n=5$.
    \end{itemize}
\end{theorem}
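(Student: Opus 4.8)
The plan is to prove the structural claim---that all the $W_j$ share a common orthogonal direction---and then read off (a) and (b). Finding such a direction is the same as showing the fusion frame operator $S=\sum_j P_j$ is singular, i.e.\ $\sum_j W_j\neq\mathbb{R}^{2r+1}$, where $P_j$ is the orthogonal projection onto $W_j$. If $n=2$ this is immediate, since $\dim(W_1+W_2)\le 2r$; so I would assume $n\ge 3$. For $n\ge 3$ the spark bound (Theorem~\ref{thm: spark bound for coherence}), applied with $d/r=2+1/r\in(2,3)$ and $\lfloor d/r\rfloor=2$, forces $\alpha\ge\tfrac12$, so the hypothesis $\alpha\neq\tfrac12$ upgrades to the strict inequality $\alpha>\tfrac12$; this strictness is exactly what will power the parity argument below. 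The key reduction is to triples: fix the pair $W_1,W_2$, note $W_1\cap W_2=0$ because $\alpha<1$, so $\dim(W_1+W_2)=2r$ and $L:=(W_1+W_2)^\perp=\operatorname{span}(u)$ is a line with $P_1u=P_2u=0$. It then suffices to prove $P_ju=0$ for every $j\ge 3$, since this puts $u\in\bigcap_j W_j^\perp$ and makes $V:=u^\perp$ the desired $2r$-dimensional space.

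Next I would analyze one triple $W_1,W_2,W_3$ in adapted coordinates. Using the canonical form for two $\alpha$-isoclinic subspaces (Lemma~\ref{lem: isoclinism}), pick an orthonormal basis $e_1,\dots,e_{2r+1}$ with $u=e_{2r+1}$ so that $W_1=\operatorname{span}(e_1,\dots,e_r)$ and $W_2=\operatorname{span}(\alpha e_i+\beta e_{r+i}:1\le i\le r)$, where $\beta=\sqrt{1-\alpha^2}$. Recording a spanning isometry $\Phi_3$ of $W_3$ by its three block rows $A,B\in\mathbb{R}^{r\times r}$ (on $e_1,\dots,e_r$ and $e_{r+1},\dots,e_{2r}$) and $c^\top\in\mathbb{R}^{1\times r}$ (on $e_{2r+1}$), the goal $P_3u=0$ becomes $\Phi_3 c=0$, i.e.\ $c=0$. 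Now $A=\Phi_1^\top\Phi_3$ and $\alpha A+\beta B=\Phi_2^\top\Phi_3$ are each $\alpha$ times an orthogonal matrix by Lemma~\ref{lem: isoclinism}(iii), so each $X$ satisfies $XX^\top=X^\top X=\alpha^2 I$; together with $\Phi_3^\top\Phi_3=I$ this yields, after setting $a:=\alpha^{-1}A\in O(r)$ and $C:=a^\top B$, the identities $C^\top C=CC^\top=\beta^2 I-cc^\top$ and $C+C^\top=\tfrac{\beta}{\alpha^2}\bigl((2\alpha^2-1)I+cc^\top\bigr)$. The essential observation---the step I expect to be least obvious, and which crucially uses isoclinism of the pair $(2,3)$ in both the $\Phi^\top\Phi$ and $\Phi\Phi^\top$ forms---is that the first identity says $C$ is a \emph{normal} matrix.

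The contradiction is then extracted spectrally, and this is where the parity of $r$ enters. Since $C$ is normal it commutes with $C^\top C=\beta^2 I-cc^\top$ and preserves its eigenspaces. If $c\neq 0$, then $\operatorname{span}(c)$ is a one-dimensional $C$-invariant subspace (carrying a real eigenvalue), and its orthogonal complement $c^\perp$ is a $C$-invariant subspace of dimension $r-1$ on which $C^\top C=\beta^2 I$ and $C+C^\top=\tfrac{\beta(2\alpha^2-1)}{\alpha^2}I$. Hence $O:=\beta^{-1}C|_{c^\perp}$ is an orthogonal matrix whose symmetric part equals $tI$ with $t=\tfrac{2\alpha^2-1}{2\alpha^2}=1-\tfrac{1}{2\alpha^2}$; note $\alpha>\tfrac12$ gives $t\in(-1,1)$, while the excluded value $\alpha=\tfrac12$ is exactly $t=-1$. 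Every eigenvalue of $O$ thus has real part $t\in(-1,1)$, so $O$ has no real eigenvalue and all its eigenvalues occur in genuinely complex conjugate pairs---forcing $\dim c^\perp=r-1$ to be even, which contradicts $r$ even. Therefore $c=0$, completing the triple step and hence the structural claim. The main obstacle throughout is recognizing the normality of $C$ and packaging the pairwise isoclinism so that the leftover odd dimension $r-1$ collides with complex-pairing parity.

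Finally I would deduce the two consequences. For (a), the structural claim realizes any \rei{2r+1}{r}{n} inside $V\cong\mathbb{R}^{2r}$ as an \rei{2r}{r}{n}, while conversely any \rei{2r}{r}{n} gives one in $\mathbb{R}^{2r+1}$ via the isometric inclusion $\mathbb{R}^{2r}\hookrightarrow\mathbb{R}^{2r+1}$; existence then matches Proposition~\ref{prop:d2r}. For (b), an $\operatorname{EITFF}_{\mathbb{R}}(2r+1,r,n)$ is a tight fusion frame, so $\sum_j P_j$ is a positive multiple of the identity and the $W_j$ span $\mathbb{R}^{2r+1}$; by the structural claim this is impossible when $\alpha\neq\tfrac12$, so the isoclinism parameter must be $\alpha=\tfrac12$. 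Substituting $\alpha=\tfrac12$ and $d/r=(2r+1)/r$ into~\eqref{eq:EITFF alpha} gives $n(2r-1)=3(2r+1)$, i.e.\ $n=3+\tfrac{6}{2r-1}$, so $2r-1$ divides $6$; with $r$ even this forces $r=2$ and $n=5$. Since an $\operatorname{EITFF}_{\mathbb{R}}(5,2,5)$ exists, (b) follows.
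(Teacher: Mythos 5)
Your proposal is correct and follows essentially the same route as the paper: after normalizing the first two subspaces, both arguments show that the last block of each remaining isometry vanishes by producing a real orthogonal matrix whose relevant eigenvalues are forced to have a fixed real part in $(-1,1)$, so that they come in conjugate pairs and an odd-dimensional count (your $\dim c^\perp = r-1$, versus the paper's multiplicity of $0$ as an eigenvalue of the rank-$\le 1$ matrix $Y_j^*Y_j$) yields a contradiction, with $\alpha\neq\tfrac12$ entering in both proofs precisely to exclude the offending real eigenvalue. The only cosmetic differences are that the paper packages your hand-derived identities (normality of $C$, the formulas for $C^*C$ and $C+C^*$) into Lemma~\ref{lem:normalization}, and that your appeal to the spark bound to upgrade to $\alpha>\tfrac12$ is not actually needed, since for $\alpha<\tfrac12$ your $|t|>1$ already contradicts orthogonality.
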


\begin{remark}
The hypothesis $\alpha \neq \tfrac{1}{2}$ is essential since a $\tfrac{1}{2}$-$\operatorname{EI}_{\mathbb{R}}(5,2,5)$ exists (Example~\ref{ex: totally symmetric spark bound}) but a $\tfrac{1}{2}$-$\operatorname{EI}_{\mathbb{R}}(4,2,5)$ does not (Proposition~\ref{prop:d2r}).
\end{remark}

Our proof of Theorem~\ref{thm:d2r1} relies on a certain normalization of equi-isoclinic subspaces, detailed in Lemma~\ref{lem:normalization} below.
(We will also use this lemma in Section~\ref{sec: upper bounds}.)
To enunciate it, we first articulate the notion of ``equivalence'' for subspace sequences.

\begin{lemma}[Lemma 2.1 of \cite{FGLI:25}]
\label{lem:equiv}
Suppose that $\{V_j\}_{j=1}^n$ and $\{W_j\}_{j=1}^n$ are sequences of $r$-dimensional subspaces of $\mathbb{F}^d$. 
Let $\{P_j\}_{j=1}^n$ and $\{Q_j\}_{j=1}^n$ be the associated projections, and let $\{\Phi_j\}_{j=1}^n$ and $\{\Psi_j\}_{j=1}^n$ be associated isometries. 
Then the following are equivalent:
\begin{itemize}
\item[(i)] 
there is a unitary $U \in \mathbb{F}^{d \times d}$ such that $W_j = UV_j$ for all $j \in [n]$,
\item[(ii)] 
there is a unitary $U \in \mathbb{F}^{d \times d}$ such that $P_j = U Q_jU^*$ for all $j \in [n]$,
\item[(iii)] 
there exist unitaries $\{Z_j\}_{j=1}^n \in \mathbb{F}^{r \times r}$, $U \in \mathbb{F}^{d \times d}$, such that $\Psi_j = U \Phi_j Z_j$ for all $j \in [n]$,
\item[(iv)] 
there exist unitaries $\{Z_j\}_{j=1}^n \in \mathbb{F}^{r \times r}$ such that $\Psi_i^*\Psi_j = Z_i^*\Phi_i^*\Phi_jZ_j$ for all $i,j \in [n]$. 
\end{itemize}
\end{lemma}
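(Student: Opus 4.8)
The plan is to prove the four conditions equivalent via the chain $(\mathrm{i}) \Rightarrow (\mathrm{iii}) \Rightarrow (\mathrm{iv}) \Rightarrow (\mathrm{i})$ together with the easy equivalence $(\mathrm{i}) \Leftrightarrow (\mathrm{ii})$; morally, each condition merely rephrases ``$\{W_j\}$ is a unitary image of $\{V_j\}$'' at the level of subspaces, projections, isometries, or Gram blocks. Throughout I would use that $P_j = \Phi_j\Phi_j^*$ is orthogonal projection onto $V_j$ and $Q_j = \Psi_j\Psi_j^*$ onto $W_j$, and that a unitary carries orthogonal projections to orthogonal projections: projection onto $UV_j$ equals $UP_jU^*$. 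Thus $(\mathrm{i})$, which reads $W_j = UV_j$, forces $Q_j = UP_jU^*$, i.e. $P_j = U^* Q_j U$; since $U$ ranges over all unitaries, this is exactly $(\mathrm{ii})$ after replacing $U$ by $U^*$. Conversely, reading off column spaces, $\operatorname{im}(UQ_jU^*) = U\operatorname{im}(Q_j) = UW_j$, so $(\mathrm{ii})$ returns $(\mathrm{i})$.

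For $(\mathrm{i}) \Rightarrow (\mathrm{iii})$: if $W_j = UV_j$, then $U\Phi_j$ and $\Psi_j$ are two isometries with the common image $W_j$, and any two isometries with a common image differ by a unitary factor on the right. Concretely, I would set $Z_j := (U\Phi_j)^*\Psi_j = \Phi_j^* U^*\Psi_j$ and verify $U\Phi_j Z_j = Q_j\Psi_j = \Psi_j$ (using $\Psi_j = Q_j\Psi_j$) together with $Z_j^* Z_j = Z_j Z_j^* = I_r$ (using $Q_j\Psi_j = \Psi_j$ and $U^* Q_j U = P_j = \Phi_j\Phi_j^*$). The step $(\mathrm{iii}) \Rightarrow (\mathrm{iv})$ is immediate: substituting $\Psi_j = U\Phi_j Z_j$ and $U^* U = I_d$ gives $\Psi_i^*\Psi_j = Z_i^*\Phi_i^* U^* U\Phi_j Z_j = Z_i^*\Phi_i^*\Phi_j Z_j$.

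The crux is $(\mathrm{iv}) \Rightarrow (\mathrm{i})$, which I expect to be the main obstacle. The idea is to package the blocks into the two synthesis matrices $A := [\,\Phi_1 Z_1 \mid \cdots \mid \Phi_n Z_n\,]$ and $B := [\,\Psi_1 \mid \cdots \mid \Psi_n\,]$ in $\mathbb{F}^{d\times rn}$, so that $(\mathrm{iv})$ says precisely, block by block, that $A^* A = B^* B$. I would then invoke the linear-algebra fact that two matrices of equal size with equal Gram matrices are related by a left unitary: the rule $U_0(Ax) := Bx$ defines a well-defined, inner-product-preserving map $\operatorname{im}(A) \to \operatorname{im}(B)$, both properties following from $\|Ax\|^2 = x^* A^* A x = x^* B^* B x = \|Bx\|^2$. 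Since $\operatorname{rank}(A) = \operatorname{rank}(A^* A) = \operatorname{rank}(B^* B) = \operatorname{rank}(B)$, the orthogonal complements of the two images have equal dimension, so $U_0$ extends to a unitary $U \in \mathbb{F}^{d\times d}$ with $B = UA$. Reading off the $j$th block yields $\Psi_j = U\Phi_j Z_j$, whence $W_j = \operatorname{im}(\Psi_j) = U\operatorname{im}(\Phi_j) = UV_j$, which is $(\mathrm{i})$. The only genuine subtlety is confirming that the complement dimensions match so that the partial isometry $U_0$ truly extends to a unitary on all of $\mathbb{F}^d$; this is exactly where the equal-rank observation does the work.
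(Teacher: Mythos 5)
The paper does not prove this lemma itself---it is quoted verbatim from Lemma~2.1 of \cite{FGLI:25}---so there is no in-paper argument to compare against. Your proposal is a correct and complete proof by the standard route: the cycle $(\mathrm{i})\Leftrightarrow(\mathrm{ii})$, $(\mathrm{i})\Rightarrow(\mathrm{iii})\Rightarrow(\mathrm{iv})$ are routine, and the key step $(\mathrm{iv})\Rightarrow(\mathrm{i})$ via ``equal Gram matrices of the synthesis operators $[\Phi_1Z_1\,\cdots\,\Phi_nZ_n]$ and $[\Psi_1\,\cdots\,\Psi_n]$ imply left-unitary equivalence'' is exactly the expected argument, with the rank/complement-dimension point correctly identified as the only place needing care.
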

Following \cite{FGLI:25}, we call $\{V_j\}_{j=1}^n$ and $\{W_j\}_{j=1}^n$ \textbf{equivalent} if they satisfy any of the conditions (i)--(iv) of Lemma \ref{lem:equiv}. 
It is easy to see this equivalence preserves the properties of being an $\alpha$-EI, a TFF, or an EITFF.
The following normalization of EIs, with precursors in~\cite{Wong:61,FGLI:25} for the case $d=2r$.

\begin{lemma}
\label{lem:normalization}
Given $d,r \geq 1$ with $d \geq 2r$, and given $n \geq 2$ and $\alpha \in [0,1)$, any $\alpha$-$\operatorname{EI}_{\mathbb{F}}(d,r,n)$ is equivalent to a sequence of subspaces in $\mathbb{F}^d$ with isometries
    \begin{equation}
    \label{eq:normalizedIsometries}
    \Phi_1 = \begin{bmatrix}
        I_r \\
        0 \\
        0
    \end{bmatrix}, 
    \quad
    \Phi_2 = \begin{bmatrix}
        \alpha I_r \\
        \beta I_r \\
        0
    \end{bmatrix}, 
    \quad
    \Phi_{j} = \begin{bmatrix}
        \alpha I_r \\
        X_j \\
        Y_j
    \end{bmatrix} 
    \quad
    \text{for every }j \geq 3,
    \end{equation}
    where $\beta := \sqrt{1 - \alpha^2}$, each $X_j \in \mathbb{F}^{r \times r}$, and each $Y_j \in \mathbb{F}^{(d - 2r) \times r}$. 
  Furthermore, if $\alpha \neq 0$ then for any $\alpha$-$\operatorname{EI}_{\mathbb{F}}(d,r,n)$ with isometries as in~\eqref{eq:normalizedIsometries} and every $j \geq 3$, the following hold:
 
 \begin{itemize}
 \item[(a)] $U_j:=\alpha I_r + \frac{\beta}{\alpha} X_j$ is unitary,
 \item[(b)] $X_j$ is an invertible normal matrix, and
 \item[(c)] $Y_j^*Y_j = \tfrac{1 - 3 \alpha^2}{1 - \alpha^2}I_r + \frac{\alpha^3}{1-\alpha^2}(U_j + U_j^*)$.  
 \end{itemize}
    
    \end{lemma}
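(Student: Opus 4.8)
The plan is to establish the normalized form first by exploiting the two equivalence moves from Lemma~\ref{lem:equiv}: a single global unitary $U \in \mathbb{F}^{d\times d}$ applied on the left of every isometry, together with a separate unitary $Z_j \in \mathbb{F}^{r\times r}$ applied on the right of each $\Phi_j$. Starting from an arbitrary $\alpha$-$\operatorname{EI}_{\mathbb{F}}(d,r,n)$ with isometries $\Psi_1,\ldots,\Psi_n$, I would first choose $U$ to carry the image of $\Psi_1$ onto the span of the first $r$ coordinate vectors, and a right unitary $Z_1$ to make the top $r \times r$ block of $\Phi_1$ equal to $I_r$, so that $\Phi_1$ has the form in~\eqref{eq:normalizedIsometries} (in the block structure with row-sizes $r$, $r$, $d-2r$). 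The remaining left freedom is a block-diagonal unitary $\operatorname{diag}(I_r, U^{(2)})$ with $U^{(2)} \in \mathbb{F}^{(d-r)\times(d-r)}$, which preserves $\Phi_1$.

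Next I would normalize $\Phi_2$. Writing $P_1 = \Phi_1\Phi_1^*$ for projection onto the first $r$ coordinates, Lemma~\ref{lem: isoclinism}(iv) gives $\Phi_2^* P_1 \Phi_2 = \alpha^2 I_r$, which says the top $r \times r$ block $A$ of $\Phi_2$ satisfies $A^*A = \alpha^2 I_r$; hence $A = \alpha W$ for a unitary $W$ (with $A = 0$ when $\alpha = 0$). A right multiplication by $W^*$ makes the top block $\alpha I_r$, after which the isometry relation $\Phi_2^*\Phi_2 = I_r$ forces the lower $(d-r)\times r$ block to equal $\beta V$ for some $V$ with orthonormal columns, where $\beta = \sqrt{1-\alpha^2}$. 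Choosing $U^{(2)}$ to rotate the columns of $V$ to the first $r$ coordinate directions then yields $\Phi_2$ exactly as in~\eqref{eq:normalizedIsometries}. Finally, for each $j \geq 3$ the identical computation $\Phi_j^* P_1 \Phi_j = \alpha^2 I_r$ shows the top block of $\Phi_j$ is $\alpha$ times a unitary, so a right multiplication makes it $\alpha I_r$ while leaving the lower blocks (renamed $X_j$ and $Y_j$) unconstrained. This produces~\eqref{eq:normalizedIsometries}, and the degenerate case $\alpha = 0$ is covered throughout by the vanishing of the top blocks.

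For the three conclusions I would fix $j \geq 3$ and extract the one constraint not yet used: the $\alpha$-isoclinism of $W_2$ and $W_j$. With $P_2 = \Phi_2\Phi_2^*$ in its explicit block form, a direct computation of $\Phi_j^* P_2 \Phi_j = \alpha^2 I_r$ (Lemma~\ref{lem: isoclinism}(iv)) yields the matrix identity
\[
\alpha^4 I_r + \alpha^2\beta(X_j + X_j^*) + \beta^2 X_j^* X_j = \alpha^2 I_r.
\]
Dividing by $\alpha^2$ (permissible since $\alpha \neq 0$) and regrouping shows the left side is precisely $U_j^* U_j$ for $U_j = \alpha I_r + \tfrac{\beta}{\alpha}X_j$, so $U_j^*U_j = I_r$; being square, $U_j$ is unitary, which is~(a). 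Writing $X_j = \tfrac{\alpha}{\beta}(U_j - \alpha I_r)$, both $X_j X_j^*$ and $X_j^* X_j$ collapse to $\tfrac{\alpha^2}{\beta^2}\bigl((1+\alpha^2)I_r - \alpha(U_j + U_j^*)\bigr)$ after substituting $U_jU_j^* = U_j^*U_j = I_r$, establishing normality; invertibility holds because a unitary cannot have the eigenvalue $\alpha \in (0,1)$, so $U_j - \alpha I_r$ is invertible. This gives~(b). Lastly, combining the isometry relation $\alpha^2 I_r + X_j^*X_j + Y_j^*Y_j = I_r$ with the value of $X_j^*X_j$ just computed yields~(c) after simplifying the scalar coefficient $\tfrac{(1-\alpha^2)^2 - \alpha^2(1+\alpha^2)}{1-\alpha^2} = \tfrac{1-3\alpha^2}{1-\alpha^2}$.

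The load-bearing step is the $W_2$-versus-$W_j$ isoclinism computation: the whole argument hinges on recognizing that, after normalization, this single condition is \emph{exactly} the statement $U_j^*U_j = I_r$. Everything else is routine—the normalization is bookkeeping with the equivalence moves, and (b) and (c) are algebraic manipulations of $U_j$ once (a) is in hand. The only delicate points are the degenerate $\alpha = 0$ case during normalization (where each top block is zero) and the division by $\alpha$, which is exactly the hypothesis $\alpha \neq 0$ governing (a)--(c).
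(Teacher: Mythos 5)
Your proposal is correct and follows essentially the same route as the paper: normalize $\Phi_1$ with a global unitary, fix $\Phi_2$'s lower block with a residual block-diagonal unitary, fix each top block by a right unitary coming from the isoclinism of $W_1$ and $W_j$, and then derive (a)--(c) from the isoclinism of $W_2$ and $W_j$ together with the isometry relation. The only cosmetic difference is that you invoke Lemma~\ref{lem: isoclinism}(iv) where the paper uses part~(iii), which amounts to computing $(\Phi_2^*\Phi_j)^*(\Phi_2^*\Phi_j)$ instead of $(\Phi_2^*\Phi_j)(\Phi_2^*\Phi_j)^*$; your explicit treatment of the $\alpha=0$ case is if anything slightly more careful than the paper's.
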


To clarify, if $d = 2r$ above then the isometries take the form
\[
    \Phi_1 = \begin{bmatrix}
        I_r \\
        0
    \end{bmatrix}, 
    \quad
    \Phi_2 = \begin{bmatrix}
        \alpha I_r \\
        \beta I_r
    \end{bmatrix}, 
    \quad
    \Phi_{j} = \begin{bmatrix}
        \alpha I_r \\
        X_j
    \end{bmatrix} 
    \quad
    \text{for every }j \geq 3
\]
and in (c) we interpret $Y_j \in \mathbb{F}^{0 \times r}$ to satisfy $Y_j^* Y_j = 0 \in \mathbb{F}^{r \times r}$ for every $j \geq 3$.

\begin{proof}[Proof of Lemma~\ref{lem:normalization}]
We may assume $\alpha \neq 0$, or else the lemma is trivial.
Let $\{\Psi_j\}_{j=1}^n$ be the isometries of an $\alpha$-$\operatorname{EI}_\mathbb{F}(d,r,n)$.
First, select any unitary $V \in \mathbb{F}^{d\times d}$ such that $V \Psi_1 = \Phi_1$.
To match the top blocks of the remaining isometries with~\eqref{eq:normalizedIsometries}, write
\[
V\Psi_j = \begin{bmatrix}
    A_j \\
    B_j \\
    C_j\\
\end{bmatrix}
\quad
\text{for every }j \geq 2
\]
with $A_j \in \mathbb{F}^{r \times r}$, $B_j \in \mathbb{F}^{r \times r}$, and $C_j \in \mathbb{F}^{(d-2r) \times r}$.
Here,
\[
A_j = \Phi_1^*(V \Psi_j)
= (V \Psi_1)^*(V \Psi_j) 
= \Psi_1^* \Psi_j,
\]
and $Z_j := \alpha^{-1} A_j^*$ is unitary by Lemma~\ref{lem: isoclinism}(iii).
Then
\[
V\Psi_jZ_j = 
\begin{bmatrix}
    \alpha I_r \\
    \alpha^{-1}B_jA_j^* \\
    \alpha^{-1}C_jA_j^* \\
\end{bmatrix}
\quad
\text{for every }j \geq2.
\]
Next, we make the second isometry match~\eqref{eq:normalizedIsometries}.
Abbreviate
\[
V \Psi_2 Z_2
=:
\begin{bmatrix}
    \alpha I_r \\
    B \\
    C
\end{bmatrix},
\]
that is, $B = \alpha^{-1} B_2 A_2^*$ and $C = \alpha^{-1} C_2 A_2^*$.
Since $V \Psi_2 Z_2$ is an isometry,
\[
\begin{bmatrix} B^* & C^* \end{bmatrix} \begin{bmatrix} B \\ C \end{bmatrix}
= (1-\alpha^2)I_r 
= \begin{bmatrix} \beta I_r & 0 \end{bmatrix} \begin{bmatrix} \beta I_r \\ 0 \end{bmatrix},
\]
so there exists a unitary $W \in \mathbb{F}^{(d-r) \times (d-r)}$ with 
\[
W \begin{bmatrix} B \\ C \end{bmatrix} = \begin{bmatrix} \beta I_r \\ 0 \end{bmatrix}.
\]
Now define $U := \left[\begin{smallmatrix}
    I_r & 0 \\
    0 & W
\end{smallmatrix}\right] V$ and $Z_1:=I_r$.
Then
\[
U \Psi_1 Z_1
= \begin{bmatrix}
    I_r & 0 \\
    0 & W
\end{bmatrix}V \Psi_1 Z_1
= \begin{bmatrix}
    I_r & 0 \\
    0 & W
\end{bmatrix} \begin{bmatrix} I_r \\ 0 \end{bmatrix} = \Phi_1
\]
and
\[
U \Psi_2 Z_2 = 
\begin{bmatrix}
    I_r & 0 \\
    0 & W
\end{bmatrix}V \Psi_2 Z_2 = \begin{bmatrix}
    \alpha I_r \\
    \beta I_r \\
    0
\end{bmatrix} = \Phi_2,
\]
while for every $j \geq 3$,
\[
U\Psi_jZ_j = \begin{bmatrix}
    I_r & 0 \\
    0 & W
\end{bmatrix}V \Psi_j Z_j
=
\begin{bmatrix}
\alpha I_r \\
X_j \\
Y_j
\end{bmatrix}
= \Phi_j
\]
for some $X_j \in \mathbb{F}^{r \times r}$ and $Y_j \in \mathbb{F}^{(d-2r) \times r}$, as desired.

For the ``furthermore'' part, select any $j \geq 3$.
By Lemma~\ref{lem: isoclinism}(iii), $\Phi_2^*\Phi_j = \alpha^2 I_r + \beta X_j$ satisfies 
\[
I_r = \tfrac{1}{\alpha^2}(\Phi_2^*\Phi_j)(\Phi_2^*\Phi_j)^*
= (\alpha I_r + \tfrac{\beta}{\alpha}X_j)(\alpha I_r + \tfrac{\beta}{\alpha}X_j)^*.
\]
Thus, $U_j:=\alpha I_r + \frac{\beta}{\alpha}X_j$ is unitary, and $X_j = \frac{\alpha}{\beta}(U_j - \alpha I_r)$ is normal.
Furthermore, $X_j$ is invertible since $\tfrac{\beta}{\alpha}X_j = U_j - \alpha I_r$ and $\alpha \in (0,1)$ is not an eigenvalue of the unitary $U_j$.
Finally, since $\Phi_j$ is an isometry, 
\[
Y_j^*Y_j = (1 - \alpha^2)I_r - X_j^*X_j
= (1 - \alpha^2)I_r - \frac{\alpha^2}{\beta^2}(U_j - \alpha I_r)^*(U_j-\alpha I_r),
\]
which simplifies to~(c).
\end{proof}

Now we prove this section's main result.

\begin{proof}[Proof of Theorem~\ref{thm:d2r1}]
We may assume $n \geq 3$.
Let $\{\Phi_j\}_{j=1}^n$ be corresponding isometries for $\{W_j\}_{j=1}^n$.
By passing to an equivalent subspace sequence, we assume without loss of generality that the isometries are normalized as in~\eqref{eq:normalizedIsometries}.
In that notation, we show $\operatorname{rank} Y_j^* Y_j = 0$ for each $j \geq 3$, and it follows that each $W_j$ lies in the span of the first $2r$ standard basis vectors.

Take any $j \geq 3$ and consider $Y_j \in \mathbb{R}^{1 \times r}$.
Since $Y_j^* Y_j \in \mathbb{R}^{r \times r}$ has rank at most~1, and since $r$ is even, it suffices to show $0$ has even multiplicity as an eigenvalue of $Y_j^* Y_j$.
Recall from Lemma~\ref{lem:normalization}(c) that $Y_j^*Y_j = \tfrac{1 - 3 \alpha^2}{1 - \alpha^2}I_r + \frac{\alpha^3}{1 - \alpha^2}(U_j + U_j^*)$ for some unitary $U_j \in \mathbb{R}^{r\times r}$.
By an application of the spectral theorem, the multiplicity of $0$ as an eigenvalue of $Y_j^*Y_j$ is the sum of the multiplicities of the eigenvalues of $U_j$ with real part $\frac{1 - 3 \alpha^2}{2 \alpha^3}$.
Since $U_j$ is a real unitary, it therefore suffices to show any such $\lambda$ has nonzero imaginary part, that is, $\frac{1 - 3 \alpha^2}{2 \alpha^3} \notin \{ \pm 1 \}$.
Indeed, the equation $\frac{1 - 3 \alpha^2}{2 \alpha^3} = -1$ can be rewritten as 
\[
0
= -2\alpha^3 + 3\alpha^2 - 1 
= -(\alpha-1)^2(2\alpha+1),
\]
which has no solutions since $\alpha \in (0,1)$.
Likewise, the equation $\frac{1 - 3 \alpha^2}{2 \alpha^3} = 1$ can be rewritten as
\[
0
= 2\alpha^3 + 3\alpha^2-1 
= (\alpha + 1)^2(2 \alpha - 1),
\]
which has no solutions since $\alpha \neq \tfrac{1}{2}$ by assumption.
This completes the proof that a $2r$-dimensional subspace contains each $W_j$.

Part~(a) follows immediately.
For the forward direction of (b), suppose an $\operatorname{EITFF}_{\mathbb{R}}(2r+1,r,n)$ exists.
Then its parameter of isoclinism is $\alpha = \tfrac{1}{2}$ since the subspaces of a fusion frame for $\mathbb{R}^{2r+1}$ cannot all reside in a common space of dimension $2r$.
Comparing with~\eqref{eq:EITFF alpha}, we find
\[ 
\frac{1}{2} 
= \sqrt{\frac{nr-(2r+1)}{(2r+1)(n-1)}},
\]
so that
\[
n
= \frac{6r+3}{2r-1}.
\]
Since $n$ is an integer, $2r-1$ divides $6r+3$. 
The only even $r$ for which this holds is $r=2$, since when $r \geq 4$ we find 
\[
3(2r-1) 
< 6r+3 
< 4(2r-1).
\] 
Thus, $r=2$ and $n=5$.
Finally, the reverse direction of (b) holds since an $\operatorname{EITFF}_{\mathbb{R}}(5,2,5)$ exists (Example~\ref{ex: totally symmetric spark bound}).
\end{proof}

\begin{remark}
The proof of Theorem~\ref{thm:d2r1} depends heavily on the base field being~$\mathbb{R}$.
We do not know if a similar result holds over the complex numbers.
\end{remark}


\section{Upper bounds for equi-isoclinic subspaces}
\label{sec: upper bounds}

Next, we prove a new absolute upper bound on $v_{\mathbb{F}}(r,d)$.
Recall that $\mathbb{F}_{H}^{d\times d} = \{ M \in \mathbb{F}^{d\times d} : M^* = M \}$ denotes the real vector space of Hermitian $d\times d$ matrices with entries in $\mathbb{F}$, with
\[
\operatorname{dim} \mathbb{F}_{H}^{d\times d} = 
\begin{cases}
\tfrac{d(d+1)}{2} & \text{if }\mathbb{F} = \mathbb{R}, \\[3 pt]
d^2 & \text{if } \mathbb{F} = \mathbb{C}.
\end{cases}
\]
We prove the following.

\begin{theorem}
\label{thm:biggerzonbound}
Choose any $\alpha \in [0,1)$ and $n \geq 3$.
If an $\alpha$-$\operatorname{EI}_{\mathbb{F}}(d,r,n)$ exists, then
\[
n \leq \dim \mathbb{F}_{H}^{d\times d} - 3 \dim \mathbb{F}_{H}^{r\times r} + 3.
\]
\end{theorem}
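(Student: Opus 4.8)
The plan is to realize the bound as a dimension count, following the Lemmens--Seidel strategy sketched in the introduction. Let $P_1,\dots,P_n \in \mathbb{F}_H^{d\times d}$ be the orthogonal projections onto the subspaces of the hypothesized $\alpha$-$\operatorname{EI}_{\mathbb{F}}(d,r,n)$, and equip $\mathbb{F}_H^{d\times d}$ with the trace inner product $\langle A,B\rangle = \operatorname{tr}(AB)$. By Lemma~\ref{lem: isoclinism} one has $\operatorname{tr}(P_iP_j) = \alpha^2 r$ for $i\neq j$ and $\operatorname{tr}(P_i^2)=r$, so the Gram matrix of $\{P_j\}_{j=1}^n$ equals $r(1-\alpha^2)I + \alpha^2 r J$, with $J$ the all-ones matrix. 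Its eigenvalues $r(1-\alpha^2)$ and $r(1-\alpha^2)+\alpha^2 r n$ are positive because $\alpha<1$, so the projections are linearly independent (this is Gerzon's observation, Proposition~\ref{prop: independence}). Hence $n \le \dim\operatorname{span}\{P_j\}$, and it suffices to exhibit a subspace of dimension $\dim\mathbb{F}_H^{d\times d} - 3\dim\mathbb{F}_H^{r\times r}+3$ that contains every $P_j$.

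To that end, for each $i$ let $\Pi_i\colon \mathbb{F}_H^{d\times d}\to \mathbb{F}_H^{d\times d}$ be the compression $\Pi_i(M) = P_iMP_i$, and define the nested subspaces
\[
\mathcal{K}_j := \bigl\{\, M \in \mathbb{F}_H^{d\times d} : \Pi_i(M) \in \operatorname{span}\{P_i\} \text{ for all } i \le j \,\bigr\}, \qquad \mathcal{K}_n \le \cdots \le \mathcal{K}_1 .
\]
Since $P_iP_kP_i = \alpha^2 P_i$ for $i\neq k$ by Lemma~\ref{lem: isoclinism}(v), and $P_iP_iP_i = P_i$, each $P_k$ satisfies $\Pi_i(P_k)\in\operatorname{span}\{P_i\}$ for every $i$, so all projections lie in $\mathcal{K}_n$, hence in $\mathcal{K}_3$. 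Because $n\ge 3$, the subspaces $W_1,W_2,W_3$ exist and $\mathcal{K}_3$ is well defined, and combining containment with independence gives $n \le \dim\mathcal{K}_3$. It therefore remains only to show $\dim\mathcal{K}_3 = \dim\mathbb{F}_H^{d\times d}-3\dim\mathbb{F}_H^{r\times r}+3$; this is precisely the content of the dimension computation behind Theorem~\ref{thm: dim K3}.

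For that count, assemble the three constraints into the single linear map
\[
\Pi\colon \mathbb{F}_H^{d\times d} \longrightarrow \bigoplus_{i=1}^{3}\bigl(\operatorname{ran}\Pi_i \,/\, \operatorname{span}\{P_i\}\bigr),
\qquad
\Pi(M) = \bigl(\Pi_1(M),\Pi_2(M),\Pi_3(M)\bigr),
\]
with each entry read modulo $\operatorname{span}\{P_i\}$; by construction $\ker\Pi = \mathcal{K}_3$. Each summand has dimension $\dim\mathbb{F}_H^{r\times r}-1$, since $M\mapsto P_iMP_i$ surjects onto the Hermitian operators on $\operatorname{ran} P_i$ (a space of dimension $\dim\mathbb{F}_H^{r\times r}$) and we quotient by the one-dimensional $\operatorname{span}\{P_i\}$. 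Thus $\operatorname{codim}\mathcal{K}_3 = \operatorname{rank}\Pi \le 3(\dim\mathbb{F}_H^{r\times r}-1)$ automatically, which gives only $\dim\mathcal{K}_3 \ge \dim\mathbb{F}_H^{d\times d}-3\dim\mathbb{F}_H^{r\times r}+3$; the real content is the matching upper bound, equivalently that $\Pi$ is \emph{surjective}.

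Establishing surjectivity is the main obstacle. My approach would be to apply Lemma~\ref{lem:normalization} to place $P_1,P_2$ in explicit normalized form and to parametrize $P_3$ through the blocks $X_3,Y_3$, and then, given an arbitrary target triple of Hermitian ``compression data,'' to solve the resulting block-linear system for a global Hermitian preimage $M$. The delicate point is the interaction among the three ranges: they are generally non-orthogonal, and when $2r \le d < 3r$ they cannot even be mutually disjoint, so a priori the three constraints could be dependent and $\Pi$ non-surjective. One must verify that the shared blocks of $M$ can always be chosen to meet all three prescribed compressions at once. I expect this surjectivity check --- carried out over both $\mathbb{F}=\mathbb{R}$ and $\mathbb{F}=\mathbb{C}$ --- to be the technical heart of the argument; once it is in hand, rank--nullity yields $\dim\mathcal{K}_3 = \dim\mathbb{F}_H^{d\times d}-3\dim\mathbb{F}_H^{r\times r}+3$, and the chain $n \le \dim\mathcal{K}_3$ completes the proof.
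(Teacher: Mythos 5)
Your reduction is set up correctly and matches the paper's strategy exactly: you define the corner matrix spaces $\mathcal{K}_j$, observe via Lemma~\ref{lem: isoclinism}(v) that every projection lies in $\mathcal{K}_3$, invoke Proposition~\ref{prop: independence} for linear independence, and correctly identify that the theorem reduces to the surjectivity of the compression map $\Pi$ (equivalently, that the three constraints cut down the dimension by the full $3(\dim\mathbb{F}_H^{r\times r}-1)$). You also correctly note that the ``automatic'' inequality goes the wrong way for the theorem. However, the surjectivity itself --- which you rightly call the technical heart --- is not proved; you only describe a plan (``My approach would be\dots I expect this surjectivity check\dots to be the technical heart''). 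Since everything before that point is routine bookkeeping, the proposal as written omits essentially all of the mathematical content of Theorems~\ref{thm:dimK2} and~\ref{thm: dim K3}, and your own worry about the case $2r \le d < 3r$, where the three ranges pairwise overlap, is precisely where a naive block-solving attempt could fail. This is a genuine gap.

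For the record, the paper closes it iteratively rather than jointly: it first shows $T\colon \mathcal{K}_1 \to \mathbb{F}_H^{r\times r}$, $T(M) = \alpha A + \alpha A^* + \beta B$, is onto (Theorem~\ref{thm:dimK2}), and then that $T\colon \mathcal{K}_2 \to \mathbb{F}_H^{r\times r}$, $T(M) = \Phi_3^* M \Phi_3$, is still onto (Theorem~\ref{thm: dim K3}). The second step is where isoclinism is genuinely used: Lemma~\ref{lem:normalization}(a)--(b) shows $X_3 = \tfrac{\alpha}{\beta}(U_3 - \alpha I_r)$ for a unitary $U_3$, hence $X_3$ is invertible and normal, and a short spectral argument shows $\beta \notin \sigma(X_3^*)$, so $I_r - \tfrac{1}{\beta}X_3^*$ is invertible. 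Given a target $S \in \mathbb{F}_H^{r\times r}$ one then takes
\[
A = \tfrac{1}{2\alpha}\bigl(I_r - \tfrac{1}{\beta}X_3^*\bigr)^{-1} S X_3^{-1},
\qquad
B = -\tfrac{\alpha}{\beta}(A + A^*),
\]
places them in the appropriate blocks of $M$ (so that $M \in \mathcal{K}_2$ by the explicit description from Theorem~\ref{thm:dimK2}), and verifies $\Phi_3^* M \Phi_3 = S$; the degenerate case $\alpha = 0$ is handled separately using the fact that $Y_3$ is then an isometry. If you want to salvage your joint-map formulation, you would still need to supply this explicit preimage construction (or an equivalent rank argument), and in particular the invertibility facts about $X_3$, which do not follow from dimension counting alone.
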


This improves on the bound $n \leq \dim \mathbb{F}_{H}^{d\times d} - \dim \mathbb{F}_{H}^{r\times r} + 1$ of Lemmens and Seidel whenever $r>1$~\cite{LemmensS:73,Hoggar:76,FGLI:25}.
Our argument refines their technique, which begins with the following observation of Gerzon~\cite{LS:73}.

\begin{proposition}
\label{prop: independence}
Choose any $\alpha \in [0,1)$, and let $\{ P_j \}_{j=1}^n$ be the orthogonal projections of an $\alpha$-$\operatorname{EI}_{\mathbb{F}}(d,r,n)$.
Then $\{P_j\}_{j=1}^n$ is a linearly independent sequence in the real vector space $\mathbb{F}_H^{d \times d}$.
\end{proposition}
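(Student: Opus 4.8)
The plan is to equip the real vector space $\mathbb{F}_H^{d\times d}$ with the Hilbert--Schmidt inner product $\langle A,B\rangle = \operatorname{tr}(AB)$ (which is real-valued on Hermitian matrices) and to show that the Gram matrix $G = [\langle P_i,P_j\rangle]_{i,j=1}^n$ is positive definite. Since a finite sequence of vectors in a real inner product space is linearly independent exactly when its Gram matrix is nonsingular, positive definiteness of $G$ immediately yields the claim.

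To populate $G$, I would compute the pairwise inner products. On the diagonal, $\langle P_j,P_j\rangle = \operatorname{tr}(P_j^2) = \operatorname{tr}(P_j) = r$, since $P_j$ is an orthogonal projection of rank $r$. For the off-diagonal entries with $i \neq j$, I would write $P_j = \Phi_j\Phi_j^*$ for isometries $\Phi_j$ and use the cyclic property of the trace to obtain $\operatorname{tr}(P_iP_j) = \operatorname{tr}(\Phi_i^*\Phi_j\Phi_j^*\Phi_i)$, which equals the squared Frobenius norm $\|\Phi_i^*\Phi_j\|_F^2$. By Lemma~\ref{lem: isoclinism}(iii), $\Phi_i^*\Phi_j = \alpha U$ for some unitary $U \in \mathbb{F}^{r\times r}$, so $(\Phi_i^*\Phi_j)^*(\Phi_i^*\Phi_j) = \alpha^2 I_r$ and hence $\langle P_i,P_j\rangle = \alpha^2 r$.

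These computations identify $G = r(1-\alpha^2)I_n + r\alpha^2 J_n$, where $J_n$ is the all-ones matrix. Since $J_n$ has eigenvalues $n$ (simple) and $0$ (with multiplicity $n-1$), the matrix $G$ has eigenvalues $r\bigl(1 + \alpha^2(n-1)\bigr)$ and $r(1-\alpha^2)$, both strictly positive because $\alpha \in [0,1)$ and $r \geq 1$. Thus $G$ is positive definite, completing the proof.

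The argument is essentially a short computation, so there is no deep obstacle; the points requiring care are (i) confirming that $\operatorname{tr}(P_iP_j)$ is genuinely real even when $\mathbb{F} = \mathbb{C}$, so that $G$ is a legitimate real Gram matrix---this holds because the trace equals $\|\Phi_i^*\Phi_j\|_F^2 \geq 0$---and (ii) the indispensable role of the hypothesis $\alpha < 1$. It is precisely the eigenvalue $r(1-\alpha^2)$ that degenerates to $0$ when $\alpha = 1$, where linear independence can genuinely fail, so I would emphasize that $\alpha \in [0,1)$ is exactly what keeps $G$ nonsingular.
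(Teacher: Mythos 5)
Your proposal is correct and follows essentially the same route as the paper: both compute the Gram matrix of $\{P_j\}_{j=1}^n$ under the Frobenius inner product, identify it as $\alpha^2 r J_n + (1-\alpha^2)r I_n$, and conclude nonsingularity from $\alpha < 1$. The only cosmetic difference is that you derive the off-diagonal entry from Lemma~\ref{lem: isoclinism}(iii) via $\|\Phi_i^*\Phi_j\|_F^2 = \alpha^2 r$, while the paper uses the equivalent condition $P_iP_jP_i = \alpha^2 P_i$ from Lemma~\ref{lem: isoclinism}(v).
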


\begin{proof}
For $i \neq j,$ 
\[
\langle P_j, P_i \rangle_F = \operatorname{tr}(P_i^*P_j) 
= \operatorname{tr}(P_iP_iP_j) 
= \operatorname{tr}(P_iP_jP_i) 
= \operatorname{tr}(\alpha^2P_i) = \alpha^2 r,
\] 
since $P_i P_j P_i = \alpha^2 P_i$ by isoclinism.
Furthermore, $\langle P_j, P_j \rangle_F = \operatorname{rank}P_j = r$, so the Gram matrix of $\{P_j\}_{j=1}^n$ is $\alpha^2rJ_n + (r-\alpha^2r)I_n$, which has full rank since $\alpha \neq 1$.
\end{proof}

In Proposition~\ref{prop: independence}, linear independence gives the naive bound $n \leq \dim \mathbb{F}_H^{d\times d}$.
However, this can be improved by replacing $\mathbb{F}_H^{d\times d}$ with a smaller subspace that contains each of $P_1,\ldots,P_n$.
For instance, Lemmens and Seidel observe that every $P_j$ belongs to the vector space $\{ M \in \mathbb{F}_H^{d \times d} : P_1 M P_1 = c P_1 \text{ for some }c \in \mathbb{R} \}$ since $P_1^3 = P_1$ and $P_1P_jP_1 = \alpha^2 P_1$ for every $j >1$.
To further improve their bound, we intersect many such subspaces, as follows.

\begin{definition}
Given an $\alpha$-$\operatorname{EI}_\mathbb{F}(d,r,n)$ with orthogonal projections $\{ P_j \}_{j=1}^n$, and given $j \in [n]$, the $j$th \textbf{corner matrix space} is the real vector space
\[ 
\mathcal{K}_j:= \bigcap_{i=1}^j \big\{M \in \mathbb{F}_H^{d \times d}: P_i M P_i \in \operatorname{span}\{P_i\} \big\} \leq \mathbb{F}_H^{d \times d}.
\]
Equivalently, for any choice of isometries $\{ \Phi_j \}_{j=1}^n$ with $\Phi_j \Phi_j^* = P_j$,
\begin{equation}
\label{eq: Kj isoms}
\mathcal{K}_j = \bigcap_{i=1}^j \big\{M \in \mathbb{F}_H^{d \times d}: \Phi_i^*M\Phi_i \in \operatorname{span} \{I_r\} \big\}.
\end{equation}
\end{definition}
Then every $P_i \in \mathcal{K}_j$, and Proposition~\ref{prop: independence} gives the following.

\begin{proposition}
\label{prop: korner bound}
Choose any $\alpha \in [0,1)$.
Then for any $\alpha$-$\operatorname{EI}_{\mathbb{F}}(d,r,n)$ with corner matrix spaces $\mathcal{K}_j \leq \mathbb{F}^{d \times d}$, $j \in [n]$, it holds that
\[
n 
\leq \dim \mathcal{K}_j 
\quad
\text{for every }
j \in [n].
\]
\end{proposition}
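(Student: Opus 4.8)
The plan is to exhibit $n$ linearly independent vectors inside each $\mathcal{K}_j$, which forces $\dim \mathcal{K}_j \geq n$ immediately. The natural candidates are the projections $P_1,\ldots,P_n$ themselves, so the whole argument reduces to two observations: that each $P_k$ belongs to $\mathcal{K}_j$, and that the $P_k$ are linearly independent. The second is already in hand from Proposition~\ref{prop: independence}, so the only thing left to check is the containment $\{P_1,\ldots,P_n\} \subseteq \mathcal{K}_j$.

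First I would verify the membership $P_k \in \mathcal{K}_j$ for each $k \in [n]$ and each fixed $j \in [n]$. By the definition of $\mathcal{K}_j$ as an intersection, this amounts to showing $P_i P_k P_i \in \operatorname{span}\{P_i\}$ for every $i \in \{1,\ldots,j\}$. There are two cases. When $i = k$, idempotency gives $P_i P_k P_i = P_i^3 = P_i \in \operatorname{span}\{P_i\}$. When $i \neq k$, isoclinism supplies $P_i P_k P_i = \alpha^2 P_i$ via Lemma~\ref{lem: isoclinism}(v), which again lies in $\operatorname{span}\{P_i\}$. Either way the defining condition holds, so $P_k \in \mathcal{K}_j$ for every $k \in [n]$.

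With the membership in place, I would invoke Proposition~\ref{prop: independence}, which asserts that $\{P_1,\ldots,P_n\}$ is a linearly independent sequence in the real vector space $\mathbb{F}_H^{d\times d}$. Since all $n$ of these independent vectors lie in the subspace $\mathcal{K}_j \leq \mathbb{F}_H^{d\times d}$, we conclude $n \leq \dim \mathcal{K}_j$, and as the argument is uniform in $j$, this holds for every $j \in [n]$, completing the proof.

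There is no serious obstacle here: the proposition is essentially a repackaging of Proposition~\ref{prop: independence} once the containment $\{P_1,\ldots,P_n\} \subseteq \mathcal{K}_j$ is checked, and the text preceding the statement already flags this containment. The only point requiring any care is the split into the diagonal case $i = k$ (handled by $P_i^3 = P_i$) and the off-diagonal case $i \neq k$ (handled by $\alpha$-isoclinism); all the genuine difficulty is deferred to the downstream task of actually computing $\dim \mathcal{K}_j$ for $j > 1$.
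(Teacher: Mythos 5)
Your proposal is correct and matches the paper's (implicit) argument exactly: the paper also notes that each $P_k$ lies in every $\mathcal{K}_j$ (via $P_i^3 = P_i$ and $P_iP_kP_i=\alpha^2P_i$) and then cites Proposition~\ref{prop: independence} for linear independence. Your write-up simply makes the containment check explicit, which is a fine elaboration of the same route.
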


Larger values of $j$ give tighter bounds since $\mathcal{K}_n \leq \cdots \leq \mathcal{K}_1$, but it becomes harder to control $\dim \mathcal{K}_j$ as $j$ grows.
To prove Theorem~\ref{thm:biggerzonbound}, we compute $\dim \mathcal{K}_j$ for $j=1,2,3$.

\begin{remark}
\label{rem: KJ}
There are two possible variants of our approach that we did not follow.
First, our definition of $\mathcal{K}_j$ depends on the order of the subspaces in the sequence $\{W_j\}_{j=1}^n$.
More generally, for each $J \subseteq [n]$, one might define
\[
\mathcal{K}_J
:= \bigcap_{i\in J} \big\{M \in \mathbb{F}_H^{d \times d}: P_i M P_i \in \operatorname{span}\{P_i\} \big\} 
\leq \mathbb{F}_H^{d \times d},
\]
and then of course it holds that $n \leq \dim \mathcal{K}_J$.
We found no need for this level of generality.
In fact, our results imply $\dim \mathcal{K}_J$ depends only on $|J|$ when $|J| \leq 3$.

Second, considering the order of quantifiers in the expression
\[
\mathcal{K}_J 
= \{ M \in \mathbb{F}_H^{d\times d} : \forall i \in J, \, \exists c_i \in \mathbb{R} \text{ s.t.\ } P_i M P_i = c_i P_i \} \leq \mathbb{F}_H^{d \times d},
\]
one might be inspired to define
\[
\mathcal{L}_J 
:= \{ M \in \mathbb{F}_H^{d\times d} : \exists c\in \mathbb{R} \text{ s.t.\ } \forall i \in J, \, P_i M P_i = c P_i \} \leq \mathbb{F}_H^{d \times d}.
\]
Here, $P_i \in \mathcal{L}_J$ whenever $i \notin J$, which gives the bound $n \leq \dim \mathcal{L}_J + |J|$.
However, a better bound holds since one can show that 
\[
\dim \mathcal{L}_J 
= \dim \mathcal{K}_J -|J|+1,
\]
so that $n \leq \dim \mathcal{L}_J + |J| - 1$.
(Indeed, $\mathcal{L}_J$ is the preimage of the span of the all-ones vector under the linear map $\mathcal{K}_J \to \mathbb{R}^J \colon M \mapsto \begin{bmatrix} \langle M, P_j \rangle_F \end{bmatrix}_j$.)
In practice, we found it easier to control $\dim \mathcal{K}_J$ than $\dim \mathcal{L}_J$.
\end{remark}

We now compute the dimensions of $\mathcal{K}_1$, $\mathcal{K}_2$, and $\mathcal{K}_3$.
Our proofs rely on the normalization of Lemma~\ref{lem:normalization}.
In that context, note that $\dim \mathcal{K}_j$ is preserved under equivalence of subspace sequences, as seen by conjugating with an appropriate unitary.
This leads to the following result, versions of which appear in~\cite{LemmensS:73,Hoggar:76}.

\begin{proposition}
\label{prop:dimK1}
Suppose $\alpha \in [0,1)$ and $n \geq 2$. Then for any \fei{d}{r}{n} with corner matrix spaces $\mathcal{K}_j \leq \mathbb{F}_H^{d \times d}$, $j \in [n]$, it holds that 
\[
\dim\mathcal{K}_1 
= \dim \mathbb{F}_H^{d \times d} - \dim \mathbb{F}_H^{r \times r} + 1.
\]
Specifically, for any $\alpha$-$\operatorname{EI}_{\mathbb{F}}(d,r,n)$ with isometries as in~\eqref{eq:normalizedIsometries}, $\mathcal{K}_1$ consists of all
\[
M 
= \begin{bmatrix}
    cI_r & A & C \\
    A^* & B & D \\
    C^* & D^* & E
\end{bmatrix}
\in \mathbb{F}_H^{d \times d}
\]
with $A \in \mathbb{F}^{r \times r}$, $B \in \mathbb{F}^{r \times r}_H$, $C,D \in \mathbb{F}^{r \times (d-2r)}$, $E \in \mathbb{F}^{(d-2r) \times (2-dr)}_H$, and $c \in \mathbb{R}$.
\end{proposition}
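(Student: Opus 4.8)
The plan is to compute $\dim \mathcal{K}_1$ by explicitly describing the matrices in $\mathcal{K}_1$ using the normalized isometries from Lemma~\ref{lem:normalization}, then counting free parameters. Since $\dim \mathcal{K}_1$ is invariant under equivalence of subspace sequences (conjugating a matrix $M \in \mathcal{K}_1$ by a unitary $U$ sends it to $UMU^*$, which lies in the corresponding corner space for the equivalent sequence), I may assume the isometries are normalized as in~\eqref{eq:normalizedIsometries}. The key simplification is that $\mathcal{K}_1$ only involves the single constraint indexed by $i=1$, and with $\Phi_1 = \left[\begin{smallmatrix} I_r \\ 0 \\ 0 \end{smallmatrix}\right]$, the characterization~\eqref{eq: Kj isoms} reads $\Phi_1^* M \Phi_1 \in \operatorname{span}\{I_r\}$.

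First I would write an arbitrary $M \in \mathbb{F}_H^{d \times d}$ in block form according to the partition $d = r + r + (d-2r)$, namely
\[
M = \begin{bmatrix} M_{11} & A & C \\ A^* & B & D \\ C^* & D^* & E \end{bmatrix},
\]
where $M_{11}, B \in \mathbb{F}_H^{r \times r}$, $E \in \mathbb{F}_H^{(d-2r) \times (d-2r)}$, $A \in \mathbb{F}^{r \times r}$, and $C, D \in \mathbb{F}^{r \times (d-2r)}$. Then $\Phi_1^* M \Phi_1 = M_{11}$ simply extracts the top-left block. The defining condition $M_{11} \in \operatorname{span}\{I_r\}$ is therefore equivalent to $M_{11} = c I_r$ for some scalar $c \in \mathbb{R}$, which is exactly the claimed form. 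Every other block of $M$ remains completely free (subject only to the Hermitian symmetry already encoded in the block structure), so the description in the statement is immediate.

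For the dimension count, I would tally the real dimensions of each free block: the scalar $c$ contributes $1$; the block $B \in \mathbb{F}_H^{r \times r}$ contributes $\dim \mathbb{F}_H^{r \times r}$; the block $E \in \mathbb{F}_H^{(d-2r)\times(d-2r)}$ contributes $\dim \mathbb{F}_H^{(d-2r)\times(d-2r)}$; and the three off-diagonal blocks $A, C, D$ (general matrices over $\mathbb{F}$, whose conjugate-transpose entries are determined) contribute their full real dimensions. The cleanest way to reach the stated formula is to observe that $\dim \mathcal{K}_1 = \dim \mathbb{F}_H^{d \times d} - (\dim \mathbb{F}_H^{r \times r} - 1)$: replacing the free Hermitian top-left block of a general element of $\mathbb{F}_H^{d\times d}$ by a scalar multiple of $I_r$ removes exactly $\dim \mathbb{F}_H^{r\times r} - 1$ degrees of freedom (from $\dim \mathbb{F}_H^{r \times r}$ parameters down to $1$), while leaving all remaining blocks untouched. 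This gives $\dim \mathcal{K}_1 = \dim \mathbb{F}_H^{d\times d} - \dim \mathbb{F}_H^{r\times r} + 1$ directly, avoiding any field-by-field case split.

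I do not anticipate a genuine obstacle here, since the $j=1$ case involves only one projection and the normalized form makes the top-left block trivially accessible; the entire argument is a parameter count once the invariance under equivalence is noted. The only point requiring a small amount of care is confirming that the invariance of $\dim \mathcal{K}_1$ under equivalence lets me pass to the normalized isometries without loss of generality, and that no hidden constraints are imposed on the off-diagonal blocks — the single condition genuinely touches only $M_{11}$. The real value of this computation is as a warm-up whose method (normalize, impose the block constraints, count) sets the template for the substantially harder $\dim \mathcal{K}_2$ and $\dim \mathcal{K}_3$ computations, where constraints from multiple projections interact.
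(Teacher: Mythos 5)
Your proposal is correct and matches the approach the paper intends: the paper omits an explicit proof of Proposition~\ref{prop:dimK1}, but the remarks immediately preceding it (normalize via Lemma~\ref{lem:normalization}, use invariance of $\dim\mathcal{K}_j$ under equivalence, then read off the single constraint $\Phi_1^*M\Phi_1 = cI_r$ on the top-left block) are exactly your argument. The parameter count replacing a free block of $\mathbb{F}_H^{r\times r}$ by $\operatorname{span}\{I_r\}$ is the right way to get the formula uniformly over $\mathbb{F}$.
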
 

Next, we leverage Proposition~\ref{prop:dimK1} to compute $\dim \mathcal{K}_2$.

\begin{theorem}
\label{thm:dimK2}
Suppose $\alpha \in [0,1)$ and $n \geq 2$. 
Then for any \fei{d}{r}{n} with corner matrix spaces $\mathcal{K}_j \leq \mathbb{F}_H^{d \times d}$, $j \in [n]$, it holds that 
\[
\dim\mathcal{K}_2 
= \dim \mathbb{F}_H^{d \times d} - 2\dim \mathbb{F}_H^{r \times r} + 2.
\]
Specifically, for any $\alpha$-$\operatorname{EI}_{\mathbb{F}}(d,r,n)$ with isometries as in~\eqref{eq:normalizedIsometries}, $\mathcal{K}_2$ consists of all
\[
M=
\left[ \begin{array}{ccc}
c I_r & A & C \\
A^* & B & D \\
C^* & D^* & E
\end{array} \right]
\in \mathbb{F}^{d \times d}_H
\]
with $A \in \mathbb{F}^{r \times r}$, $B \in \mathbb{F}^{r \times r}_H$, $C,D \in \mathbb{F}^{r \times (d-2r)}$, $E \in \mathbb{F}^{(d-2r) \times (2-dr)}_H$, and $c \in \mathbb{R}$, such that
\[
\alpha A + \alpha A^* + \beta B \in \operatorname{span}\{I_r\},
\]
where $\beta = \sqrt{1-\alpha^2}$.
\end{theorem}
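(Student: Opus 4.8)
The plan is to leverage Proposition~\ref{prop:dimK1}, which already pins down $\mathcal{K}_1$ explicitly, and then impose only the single extra constraint coming from $\Phi_2$. Since $\mathcal{K}_2 = \mathcal{K}_1 \cap \{M : \Phi_2^* M \Phi_2 \in \operatorname{span}\{I_r\}\} \subseteq \mathcal{K}_1$, I need only understand the second condition on matrices already of the block form supplied by Proposition~\ref{prop:dimK1}. Because $\dim \mathcal{K}_2$ is invariant under equivalence of subspace sequences (as noted before Proposition~\ref{prop:dimK1}), I may assume throughout that the isometries are normalized as in~\eqref{eq:normalizedIsometries}.

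First I would take an arbitrary $M \in \mathcal{K}_1$, written as in Proposition~\ref{prop:dimK1}, and compute $\Phi_2^* M \Phi_2$ using $\Phi_2 = \left[\begin{smallmatrix} \alpha I_r \\ \beta I_r \\ 0 \end{smallmatrix}\right]$. A direct block multiplication gives
\[
\Phi_2^* M \Phi_2 = \alpha^2 c\, I_r + \alpha\beta (A + A^*) + \beta^2 B.
\]
The term $\alpha^2 c\, I_r$ already lies in $\operatorname{span}\{I_r\}$, and since $\alpha \in [0,1)$ forces $\beta = \sqrt{1-\alpha^2} > 0$, the requirement $\Phi_2^* M \Phi_2 \in \operatorname{span}\{I_r\}$ is equivalent, after dividing by $\beta$, to $\alpha(A+A^*) + \beta B \in \operatorname{span}\{I_r\}$. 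This is precisely the stated characterization of $\mathcal{K}_2$.

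For the dimension count I would introduce the real-linear map $T \colon \mathcal{K}_1 \to \mathbb{F}_H^{r\times r}$ defined by $T(M) = \alpha(A+A^*) + \beta B$, which is well defined and Hermitian-valued since $A+A^*$ and $B$ are Hermitian. By the previous paragraph $\mathcal{K}_2 = T^{-1}(\operatorname{span}\{I_r\}) = \ker(\pi \circ T)$, where $\pi \colon \mathbb{F}_H^{r\times r} \to \mathbb{F}_H^{r\times r}/\operatorname{span}\{I_r\}$ is the quotient map. The key point is that $T$ is surjective: given any $B_0 \in \mathbb{F}_H^{r\times r}$, the matrix $M \in \mathcal{K}_1$ with $A = 0$, $B = \beta^{-1} B_0$, and all remaining blocks zero satisfies $T(M) = B_0$. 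Hence $\pi \circ T$ is surjective as well, and rank--nullity gives $\dim \mathcal{K}_2 = \dim \mathcal{K}_1 - (\dim \mathbb{F}_H^{r\times r} - 1)$. Substituting $\dim \mathcal{K}_1 = \dim \mathbb{F}_H^{d\times d} - \dim \mathbb{F}_H^{r\times r} + 1$ from Proposition~\ref{prop:dimK1} yields the claimed value $\dim \mathbb{F}_H^{d\times d} - 2\dim \mathbb{F}_H^{r\times r} + 2$.

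I expect no serious obstacle: the argument is a short block computation followed by rank--nullity. The only point requiring care is the bookkeeping, namely confirming that the new constraint cuts out exactly codimension $\dim \mathbb{F}_H^{r\times r} - 1$ rather than $\dim \mathbb{F}_H^{r\times r}$. This is exactly why the surjectivity of $T$ must be verified rather than assumed, and it is there that the freedom to choose $B$ arbitrarily, enabled by $\beta \neq 0$, does the essential work.
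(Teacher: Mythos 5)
Your proposal is correct and follows essentially the same route as the paper's proof: the same block computation of $\Phi_2^* M \Phi_2$, the same linear map $T(M) = \alpha(A+A^*) + \beta B$ on $\mathcal{K}_1$, the same surjectivity witness with $A = 0$ and $B = \beta^{-1}B_0$, and the same rank--nullity conclusion. Your explicit remark that $\beta > 0$ (since $\alpha < 1$) is a nice touch of care that the paper leaves implicit.
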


\begin{proof} 
Let $\{\Phi_j\}_{j=1}^n$ be the isometries of an \fei{d}{r}{n} with $\alpha \in [0,1)$ and $n \geq 2$, normalized as in~\eqref{eq:normalizedIsometries}.
Then $\mathcal{K}_2$ consists of all $M \in \mathcal{K}_1$ with $\Phi_2^* M \Phi_2 \in \operatorname{span}\{I_r\}$.
With this in mind, choose any $M \in \mathcal{K}_1$.
By Proposition~\ref{prop:dimK1}, 
\begin{equation}
\label{eq: block M}
M 
= \begin{bmatrix}
        c I_r & A & C \\
        A^* & B & D \\
        C^* & D^* & E
\end{bmatrix},
\end{equation}
for some $A \in \mathbb{F}^{r \times r}$, $B \in \mathbb{F}^{r \times r}_H$, $C,D \in \mathbb{F}^{r \times (d-2r)}$, $E \in \mathbb{F}^{(d-2r) \times (2-dr)}_H$, and $c \in \mathbb{R}$. 
Now we calculate 
\[
\Phi_2^*M\Phi_2 
= \begin{bmatrix}
    \alpha I_r & \beta I_r & 0
\end{bmatrix}\begin{bmatrix}
        c I_r & A & C \\
        A^* & B & D \\
        C^* & D^* & E
\end{bmatrix} \begin{bmatrix}
    \alpha I_r \\
    \beta I_r \\
    0
\end{bmatrix}
= \alpha^2c I_r + \alpha \beta A + \alpha \beta A^* + \beta^2 B
\]
to see that $M \in \mathcal{K}_2$ if and only if $\alpha A + \alpha A^* + \beta B \in \operatorname{span}\{I_r\}$.
To find the dimension of $\mathcal{K}_2$, consider the linear map $T \colon \mathcal{K}_1 \to \mathbb{F}_H^{r \times r}$ given by
\[
T(M) 
= \alpha A + \alpha A^* + \beta B,
\]
for any $M \in \mathcal{K}_1$ partitioned as in~\eqref{eq: block M}.
Then $\mathcal{K}_2$ is the preimage of $\operatorname{span}\{ I_r \}$ under $T$, so that
\[
\dim \mathcal{K}_2 
= \dim \ker T + 1 
= \dim \mathcal{K}_1 - \operatorname{rank} T + 1.
\]
Furthermore, $T$ surjects $\mathcal{K}_1$ onto $\mathbb{F}_H^{r \times r}$. 
(Given $S \in \mathbb{F}_H^{r \times r}$, take any $M \in \mathcal{K}_1$ with $A=0$ and $B = \frac{1}{\beta}S$ in~\eqref{eq: block M} to find $T(M) = S$.)
Applying Proposition~\ref{prop:dimK1}, we find
\[
\dim \mathcal{K}_2
= \dim \mathcal{K}_1 - \operatorname{rank} T + 1
= \dim \mathbb{F}_H^{d \times d} - 2\dim \mathbb{F}_H^{r \times r} + 2.
\]
The theorem follows since any \fei{d}{r}{n} is equivalent to one with isometries as in~\eqref{eq:normalizedIsometries}.
\end{proof}

Finally, we compute $\dim \mathcal{K}_3$.
The following proves Theorem~\ref{thm:biggerzonbound}.

\begin{theorem}
\label{thm: dim K3}
Suppose $\alpha \in [0,1)$ and $n \geq 3$.
Then for any $\alpha$-$\operatorname{EI}_{\mathbb{F}}(d,r,n)$,
\[
\dim \mathcal{K}_3
= \dim\mathbb{F}_H^{d \times d} - 3 \dim \mathbb{F}_H^{r \times r} + 3.
\]
\end{theorem}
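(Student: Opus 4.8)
The plan is to mirror the inductive step used to pass from $\mathcal{K}_1$ to $\mathcal{K}_2$ in Theorem~\ref{thm:dimK2}, now cutting $\mathcal{K}_2$ down to $\mathcal{K}_3$ by one more linear condition. After normalizing the isometries as in Lemma~\ref{lem:normalization} (noting that $\dim\mathcal{K}_3$ is preserved under equivalence by conjugating with a unitary), I would define the linear map $S\colon\mathcal{K}_2\to\mathbb{F}_H^{r\times r}$ by $S(M)=\Phi_3^*M\Phi_3$, so that $\mathcal{K}_3=S^{-1}(\operatorname{span}\{I_r\})$. If $S$ is surjective, then $\operatorname{span}\{I_r\}\subseteq\operatorname{image}S$ and rank--nullity gives $\dim\mathcal{K}_3=\dim\ker S+1=\dim\mathcal{K}_2-\dim\mathbb{F}_H^{r\times r}+1$; substituting the value of $\dim\mathcal{K}_2$ from Theorem~\ref{thm:dimK2} yields exactly $\dim\mathbb{F}_H^{d\times d}-3\dim\mathbb{F}_H^{r\times r}+3$. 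So the entire theorem reduces to proving that $S$ is onto.

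The hard part is this surjectivity, and I expect it to be the main obstacle, since the domain $\mathcal{K}_2$ is already constrained: its defining condition $\alpha A+\alpha A^*+\beta B\in\operatorname{span}\{I_r\}$ couples the blocks $A$ and $B$. To exploit this efficiently I would restrict $S$ to a well-chosen subspace of $\mathcal{K}_2$ on which the computation collapses; surjectivity on a subspace forces surjectivity on all of $\mathcal{K}_2$. Using the block description from Theorem~\ref{thm:dimK2}, take matrices $M_A$ with $c=0$, $C=D=E=0$, arbitrary $A\in\mathbb{F}^{r\times r}$, and $B=-\tfrac{\alpha}{\beta}(A+A^*)$; these satisfy $\alpha(A+A^*)+\beta B=0\in\operatorname{span}\{I_r\}$, hence $M_A\in\mathcal{K}_2$. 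A direct block multiplication (the zero blocks kill every term involving $Y_3$) gives
\[
S(M_A)=\alpha A X_3+\alpha X_3^*A^*-\tfrac{\alpha}{\beta}X_3^*(A+A^*)X_3.
\]
Because Lemma~\ref{lem:normalization}(b) guarantees $X_3$ is invertible, I would substitute $G:=AX_3$ (a bijection of $\mathbb{F}^{r\times r}$), which after simplification produces
\[
S(M_A)=\alpha\bigl(N^*G+(N^*G)^*\bigr),\qquad N:=I_r-\tfrac{1}{\beta}X_3.
\]

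The final step is to verify that $N$ is invertible, for then $G\mapsto N^*G$ is a bijection of $\mathbb{F}^{r\times r}$, the quantity $N^*G+(N^*G)^*$ ranges over all of $\mathbb{F}_H^{r\times r}$, and $S$ is onto. Here I would invoke Lemma~\ref{lem:normalization}(a): writing $X_3=\tfrac{\alpha}{\beta}(U_3-\alpha I_r)$ with $U_3$ unitary, a short computation using $\alpha^2+\beta^2=1$ gives $N=\tfrac{1}{\beta^2}(I_r-\alpha U_3)$, which is invertible since the eigenvalues of $U_3$ lie on the unit circle while $1/\alpha>1$. This settles $\alpha\in(0,1)$. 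The excluded case $\alpha=0$ is immediate and should be dispatched separately: then $W_1,W_2,W_3$ are mutually orthogonal, and in adapted coordinates $\mathcal{K}_3$ consists of the Hermitian matrices whose first three diagonal $r\times r$ blocks are scalar, which directly gives the claimed dimension. The only bookkeeping to double-check is that $I_r$ genuinely lies in $\operatorname{image}S$ (automatic once $S$ is onto), so that the preimage of $\operatorname{span}\{I_r\}$ has the asserted dimension.
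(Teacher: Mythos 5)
Your proposal is correct and follows essentially the same route as the paper: normalize via Lemma~\ref{lem:normalization}, realize $\mathcal{K}_3$ as the preimage of $\operatorname{span}\{I_r\}$ under the map $M\mapsto\Phi_3^*M\Phi_3$ on $\mathcal{K}_2$, reduce to surjectivity of that map by rank--nullity, and establish surjectivity with witness matrices whose only nonzero blocks are $A$, $A^*$, and $B=-\tfrac{\alpha}{\beta}(A+A^*)$, handling $\alpha=0$ separately. The only cosmetic differences are that the paper exhibits an explicit preimage of each target $S$ using $(I_r-\tfrac{1}{\beta}X_3^*)^{-1}$ and $X_3^{-1}$ and verifies invertibility by a short eigenvalue case analysis, whereas you argue surjectivity by factoring $A\mapsto S(M_A)$ through bijections and note directly that $N=\tfrac{1}{\beta^2}(I_r-\alpha U_3)$ is invertible since $\alpha<1$.
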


\begin{proof}
Let $\{\Phi_j\}_{j=1}^n$ be the isometries of an \fei{d}{r}{n} with $\alpha \in [0,1)$ and $n \geq 3$.
Without loss of generality, we assume these isometries are normalized as in~\eqref{eq:normalizedIsometries}.
By~\eqref{eq: Kj isoms}, $\mathcal{K}_3$ consists of all $M \in \mathcal{K}_2$ with $\Phi_3^* M \Phi_3 \in \operatorname{span}\{I_r\}$.
With this in mind, consider the linear map $T \colon \mathcal{K}_2 \to \mathbb{F}^{r \times r}_H$ given by $T(M) = \Phi_3^*M\Phi_3$.
Then $\mathcal{K}_3$ is the preimage of $\operatorname{span}\{I_r\}$ under $T$, so that
\[
\dim \mathcal{K}_3 
= \dim \ker T + 1 
= \dim \mathcal{K}_2 - \operatorname{rank}T + 1.
\]
We will show that $T$ maps $\mathcal{K}_2$ onto $\mathbb{F}_H^{r \times r}$, and it will follow from Theorem~\ref{thm:dimK2} that
\[
\dim \mathcal{K}_3 
= \dim \mathcal{K}_2 - \operatorname{rank}T + 1 
= \dim\mathbb{F}_H^{d \times d} - 3 \dim \mathbb{F}_H^{r \times r} + 3.
\]

We argue in cases.
First suppose $\alpha = 0$.
Then the hypothesis $n \geq 3$ implies $d \geq 3r$.
Furthermore, referring to~\eqref{eq:normalizedIsometries}, we have $X_3 = \Phi_2^* \Phi_3 = 0$, from which it follows that $Y_3 \in \mathbb{F}^{(d-2r) \times r}$ is an isometry.
Given a desired output $S \in \mathbb{F}_H^{r \times r}$, define
\[
M := \begin{bmatrix}
0 & 0 & 0 \\
0 & 0 & 0 \\
0 & 0 & E
\end{bmatrix}
\in \mathbb{F}_H^{d \times d},
\qquad
E := Y_3 S Y_3^* \in \mathbb{F}_H^{(d-2r) \times (d-2r)}.
\]
Then $M \in \mathcal{K}_2$ by Theorem~\ref{thm:dimK2}, and it is easy to check that $T(M) = S$.

Now suppose $\alpha \neq 0$.
Referring to~\eqref{eq:normalizedIsometries}, we claim that $I_r - \tfrac{1}{\beta} X_3^*$ is invertible, that is, $\beta$ is not an eigenvalue of $X_3^*$.
By Lemma \ref{lem:normalization}(a), $X_3^* = [\tfrac{\alpha}{\beta}(U_3 - \alpha I)]^*$ for some unitary $U_3$, so every eigenvalue of $X_3^*$ takes the form $\tfrac{\alpha}{\beta}(\lambda - \alpha)$, where $\lambda$ is a unimodular complex number.
Furthermore, if $\beta = \tfrac{\alpha}{\beta}(\lambda-\alpha)$ then $\lambda$ is real, hence $\lambda \in \{\pm 1\}$.
However, when $\lambda = -1$, $\tfrac{\alpha}{\beta}(\lambda - \alpha) 
< 0 
< \beta,$
and when $\lambda =1$, dividing both sides of the inequality $\alpha - \alpha^2 < 1 - \alpha^2$ by $\beta > 0$ gives $\tfrac{\alpha}{\beta}(\lambda-\alpha) < \beta$.
This proves the claim.

Next, given a desired output $S \in \mathbb{F}_H^{r \times r}$ of $T$, define 
\[
A:= \tfrac{1}{2\alpha}( I_r - \tfrac{1}{\beta} X_3^*)^{-1} S X_3^{-1} \in \mathbb{F}^{r \times r},
\]
where $I - \tfrac{1}{\beta} X_3^*$ is invertible by the claim above and $X_3$ is invertible by Lemma~\ref{lem:normalization}(b).
Then define $B:=-\tfrac{\alpha}{\beta}(A+A^*) \in \mathbb{F}_H^{r \times r}$ and
\[
M:= \begin{bmatrix}
0 & A & 0 \\
A^* & B & 0 \\
0 & 0 & 0
\end{bmatrix}
\in \mathbb{F}_H^{d \times d}.
\]
Then $M \in \mathcal{K}_2$ by Theorem~\ref{thm:dimK2}, and we have
\begin{align*}
T(M) 
= \Phi_3^* M \Phi_3
&= \begin{bmatrix} \alpha I_r & X_3^* & Y_3^* \end{bmatrix}
\begin{bmatrix}
0 & A & 0 \\
A^* & B & 0 \\
0 & 0 & 0
\end{bmatrix}
\begin{bmatrix} \alpha I_r \\ X_3 \\ Y_3 \end{bmatrix} \\[5 pt]
&= \alpha X_3^* A^* + \alpha A X_3 + X_3^* B X_3 \\[5 pt]
&= \alpha X_3^* A^* (I_r - \tfrac{1}{\beta} X_3) + \alpha(I_r - \tfrac{1}{\beta} X_3^*) A X_3.
\end{align*}
Here, $\alpha(I_r - \tfrac{1}{\beta} X_3^*) A X_3 = \tfrac{1}{2} S$ by the definition of $A$, so $T(M) = S$, as desired.
\end{proof}

\begin{example}
\label{ex:K3 tight}
Equality is achieved in the bound $n \leq \dim \mathcal{K}_3 = \mathbb{F}_H^{d \times d} - 3 \mathbb{F}_H^{r \times r} + 3$ for an $\operatorname{EITFF}_{\mathbb{R}}(4,2,4)$, that is, an $\tfrac{1}{\sqrt{3}}$-$\operatorname{EI}_{\mathbb{R}}(4,2,4)$, whose existence follows from Proposition~\ref{prop:d2r}.
(In fact, one can produce such an EITFF by applying Hoggar's ``$\mathbb{C}$-to-$\mathbb{R}$'' trick~\cite{Hoggar:77} to an $\operatorname{EITFF}_{\mathbb{C}}(2,1,4)$, which can in turn be obtained from a simplex in the Bloch sphere $\mathbb{C} P^1$.)
\end{example}

\begin{example}
For an $\alpha$-$\operatorname{EI}_{\mathbb{F}}(d,r,n)$ with $\alpha \in [0,1)$, one might be inclined to define $\mathcal{K}_0 := \mathbb{F}^{d \times d}_H$, since it then holds that $\mathcal{K}_n \leq \cdots \leq \mathcal{K}_1 \leq \mathcal{K}_0$ and $n \leq \dim \mathcal{K}_j$ for every $j \geq 0$.
With this definition, the results of Proposition~\ref{prop:dimK1}, Theorem~\ref{thm:dimK2}, and Theorem~\ref{thm: dim K3} are neatly summarized as follows:
\[
\dim \mathcal{K}_j = \dim \mathbb{F}^{d \times d}_H - j \dim \mathbb{F}^{r \times r}_H + j
\quad
\text{for }
j \in \{ 0,1,2,3 \}.
\]
However, this pattern does not continue, and there are many examples where
\[
\dim \mathcal{K}_4 
\neq \dim \mathbb{F}_H^{d \times d} - 4 \mathbb{F}_H^{r \times r} + 4.
\]
For instance, the $\operatorname{EITFF}_{\mathbb{R}}(4,2,4)$ from Example~\ref{ex:K3 tight} has $\dim \mathcal{K}_4 = 4$ since
\[
4 = n \leq \dim \mathcal{K}_4 \leq \dim \mathcal{K}_3 = 4, 
\]
while $\dim \mathbb{R}_H^{4 \times 4} - 4 \mathbb{R}_H^{2 \times 2} + 4 = 2$.
\end{example}

\begin{example}
In contrast with Theorem~\ref{thm: dim K3}, the dimension of $\mathcal{K}_4$ cannot be expressed as a function of $d$, $r$, and $\mathbb{F}$, as demonstrated by numerical calculations for the following examples with $(d,r,\mathbb{F})=(5,2,\mathbb{R})$.
We find $\dim \mathcal{K}_4 = 9$ for the $\tfrac{1}{\sqrt{3}}$-$\operatorname{EI}_{\mathbb{R}}(5,2,4)$ obtained by injecting the subspaces of $\mathbb{R}^4$ from Example~\ref{ex:K3 tight} into $\mathbb{R}^5$.
Meanwhile, we find $\dim \mathcal{K}_4 = 7$ for the $\operatorname{EITFF}_{\mathbb{R}}(5,2,5)$ in Example~3.4 of~\cite{FIJM:24}.
\end{example}

\begin{example}
Unlike $\mathcal{K}_1$, $\mathcal{K}_2$, and $\mathcal{K}_3$, the dimension of $\mathcal{K}_4$ may depend on the ordering of the subspaces.
That is, $\dim \mathcal{K}_J$ may depend on $J$ and not only on $|J|$.
In particular, $\dim \mathcal{K}_4$ cannot be expressed as a function of $(\alpha,d,r,\mathbb{F})$.
We observe this numerically for the $\operatorname{EITFF}_{\mathbb{C}}(10,4,10)$ given in Example~3.5 of~\cite{FIJM:24}:
working with the MATLAB file from~\cite{FIJM:24}, we find $\dim \mathcal{K}_{\{1,2,3,4\}}=40$ but $\dim \mathcal{K}_{\{1,2,4,8\}}=42$.
Interestingly, the collection of all $J$ with $|J|=4$ and $\dim \mathcal{K}_J = 42$ forms a $(10,4,2)$ design, that is, a $\operatorname{BIBD}(10,15,6,4,2)$.
(This is a consequence of symmetry, where this EITFF has doubly homogeneous automorphism group.)
\end{example}


\section{Achieving $\dim \mathcal{K}_n = n$}
\label{sec: dim Kn}

For $j > 3$, it seems to be more challenging to bound $\dim \mathcal{K}_j$, and further progress may require additional techniques.
As a preliminary test of the value of such bounds, we now investigate saturation of the final inequality $n \leq \dim \mathcal{K}_n$.
In this section, we show this occurs infinitely often in both the real and complex settings.
Such examples are necessarily EITFFs, by the following proposition, whose proof follows that for equiangular lines attaining Gerzon's bound~\cite{LS:73}.

\begin{proposition}
If $\alpha \neq 1$, then any $\alpha$-$\operatorname{EI}_{\mathbb{F}}(d,r,n)$ with $\dim \mathcal{K}_n = n$ is an $\operatorname{EITFF}_{\mathbb{F}}(d,r,n)$.
\end{proposition}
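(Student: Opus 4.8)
The plan is to exploit the fact that the identity matrix belongs to $\mathcal{K}_n$ and combine this with the Gram matrix of the projections computed in Proposition~\ref{prop: independence}. First I would observe that $I \in \mathcal{K}_n$: for each $i \in [n]$ we have $P_i I P_i = P_i^2 = P_i \in \operatorname{span}\{P_i\}$, so $I$ satisfies every defining condition of $\mathcal{K}_n$. By Proposition~\ref{prop: independence}, the projections $\{P_j\}_{j=1}^n$ form a linearly independent sequence in $\mathbb{F}_H^{d\times d}$, and each $P_j$ lies in $\mathcal{K}_n$ by definition. Since $\dim \mathcal{K}_n = n$, these projections are therefore a basis of $\mathcal{K}_n$, and in particular we may write $I = \sum_{j=1}^n c_j P_j$ for some real scalars $c_1,\ldots,c_n$.

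Next I would show that all the coefficients coincide. Taking the Frobenius inner product of both sides with a fixed $P_k$ and applying the Gram matrix from Proposition~\ref{prop: independence} (namely $\langle P_j,P_k\rangle_F = \alpha^2 r$ for $j \neq k$ and $\langle P_k,P_k\rangle_F = r$, while $\langle I,P_k\rangle_F = r$) and dividing by $r$ yields
\[
1 = c_k(1-\alpha^2) + \alpha^2 \sum_{j=1}^n c_j.
\]
Since $\alpha \neq 1$, the factor $1-\alpha^2$ is nonzero, so this equation determines $c_k$ entirely in terms of the index-independent quantity $\sum_j c_j$. Hence $c_k$ does not depend on $k$; writing $c$ for the common value and solving gives $c = (1+\alpha^2(n-1))^{-1} > 0$.

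Finally, with $c > 0$ the relation $I = c\sum_{j} P_j$ rearranges to $\sum_{j=1}^n P_j = (1+\alpha^2(n-1)) I$, a positive multiple of the identity. Thus $\{W_j\}_{j=1}^n$ is a tight fusion frame, and being $\alpha$-equi-isoclinic by hypothesis, it is an $\operatorname{EITFF}_{\mathbb{F}}(d,r,n)$. (As a sanity check, equating this constant with the trace-forced value $rn/d$ recovers exactly the EITFF parameter~\eqref{eq:EITFF alpha}, so the computation is internally consistent.)

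As for difficulty, there is no real obstacle once one notices that $I \in \mathcal{K}_n$; that observation is the crux, and everything after it is a short linear-algebra computation driven by the explicit Gram matrix. The only point requiring a moment's care is confirming $c \neq 0$, which is immediate since $I \neq 0$ forces the common coefficient to be nonzero, and the closed form $c = (1+\alpha^2(n-1))^{-1}$ makes it manifestly positive.
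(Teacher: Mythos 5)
Your proof is correct and follows essentially the same route as the paper's: observe $I \in \mathcal{K}_n$, note that the $n$ linearly independent projections must then form a basis of $\mathcal{K}_n$, expand $I = \sum_j c_j P_j$, and use the Gram matrix from Proposition~\ref{prop: independence} to force all coefficients to coincide. The only cosmetic difference is that the paper evaluates the common coefficient via the trace identity $\sum_j c_j = d/r$, obtaining $c_i = \frac{1-\alpha^2(d/r)}{1-\alpha^2}$, whereas you solve for it as $(1+\alpha^2(n-1))^{-1}$; both are valid.
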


\begin{proof}
Let $P_1,\ldots,P_n \in \mathbb{F}^{d \times d}$ be the orthogonal projections of such an $\operatorname{EI}$.
By Proposition~\ref{prop: independence} and its proof, these projections form a basis for $\mathcal{K}_n$, and $\langle P_i, P_j \rangle_F = \alpha^2 r$ when $i \neq j$.
Since $I \in \mathcal{K}_n$, there exist $c_1,\ldots,c_n \in \mathbb{R}$ to expand $I = \sum_{j=1}^n c_j P_j$, and it suffices to show $c_1 = \cdots = c_n$.
Taking a trace, we find $\sum_{j=1}^n c_j = d/r$.
Calculating $\langle I, P_i \rangle_F$ in two different ways, we find 
\[
r = c_i r + \sum_{j \neq i} c_j \alpha^2 r 
= c_i r + \alpha^2 r(d/r) - c_i \alpha^2 r.
\]
Thus,
\[
c_i = \frac{ 1 - \alpha^2(d/r)}{1-\alpha^2}
\qquad
\text{for every }i.
\qedhere
\]
\end{proof}

\begin{example}
In the case where $r=1$, it follows from~\eqref{eq: Kj isoms} that any $\alpha$-$\operatorname{EI}_{\mathbb{F}}(d,1,n)$ has $\mathcal{K}_j = \mathbb{F}_H^{d \times d}$ for every $j \leq n$.
In particular, when $r=1$ the inequality $n \leq \dim \mathcal{K}_n$ reduces to Gerzon's bound $n \leq \operatorname{dim} \mathbb{F}_H^{d\times d}$.
When $\mathbb{F} = \mathbb{R}$, examples that saturate this bound are known for $d \in \{2,3,7,23\}$.
It is conjectured that no other examples exist~\cite[p263]{GodsilR:01}.
By contrast, when $\mathbb{F} = \mathbb{C}$, Gerzon's bound is known to be saturated in many dimensions, and \textit{Zauner's conjecture} asserts that an example exists for every $d$~\cite{Zauner:11,RenesBSC:04}.
This is a major open problem~\cite{HorodeckiRZ:22}.
\end{example}

\begin{example}
Examples~3.5--3.8 of~\cite{FIJM:24} construct, respectively, an $\operatorname{EITFF}_{\mathbb{C}}(10,4,10)$, an $\operatorname{EITFF}_{\mathbb{R}}(11,3,11)$, an $\operatorname{EITFF}_{\mathbb{C}}(16,5,12)$, and an $\operatorname{EITFF}_{\mathbb{C}}(10,3,15)$.
Working numerically with the MATLAB files from~\cite{FIJM:24}, we find $\operatorname{dim} \mathcal{K}_n = n$ in each case.
\end{example}

\begin{example}
\label{ex: lines}
Given any integer $b\geq 2$, let $C_b = \tfrac{1}{b+1}\binom{2b}{b}$ be the $b$th Catalan number.
Selecting $a=2$ in Theorem~5.3(i) of~\cite{FIJM:24} constructs an $\operatorname{EITFF}_{\mathbb{R}}(C_{b+1},C_b,2b+1)$ with the surprising property that any permutation of subspaces can be achieved by an appropriate orthogonal transformation.
For each $b\in \{2,3,4,5\}$, we observe numerically that this EITFF satisfies $\dim \mathcal{K}_n = n$.
The authors do not know if this holds for every $b \geq 2$
(and would love to find out).
\end{example}

In the remainder of this section, we focus on EIs consisting of subspaces with dimension half that of the ambient space.
As mentioned in the introduction, the existence of such EIs has been completely resolved in terms of the \textbf{Radon--Hurwitz number}, given as follows for $r=(2a+1)2^{4b+c}$ with $0 \leq c \leq 3$:
\begin{equation}
\label{eq: radon hurwitz number values}
    \rho_\mathbb{F}(r) = \begin{cases}
        8b+2^c & \text{if }\mathbb{F} = \mathbb{R}, \\
        8b+2c+2 & \text{if }\mathbb{F} = \mathbb{C}. \\
    \end{cases}
\end{equation}

\begin{proposition}[\cite{LemmensS:73,Hoggar:76,FGLI:25}]
\label{prop: EITFF d=2r existence}
For any $r \geq 1$,
    \begin{itemize}
    \item[(a)]
    $v_{\mathbb{F}}(r,2r) = \rho_{\mathbb{F}}(r)+2$, and
    \smallskip
    \item[(b)]
    any $\alpha$-$\operatorname{EI}_{\mathbb{F}}(2r,r,n)$ with $\alpha \neq 1$ and $n = \rho_{\mathbb{F}}(r)+2$ is an $\operatorname{EITFF}_{\mathbb{F}}(2r,r,n)$.
    \end{itemize}
\end{proposition}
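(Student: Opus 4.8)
The plan is to recognize this as essentially a restatement of the classical Radon--Hurwitz theorem, reached through the normalization of Lemma~\ref{lem:normalization}. Since $\alpha = 0$ is degenerate (the subspaces are mutually orthogonal, so $n \le 2 \le \rho_{\mathbb{F}}(r)+2$ and tightness is automatic), I would assume $\alpha \in (0,1)$ throughout. Specializing Lemma~\ref{lem:normalization} to $d = 2r$ forces $Y_j = 0$, so the isometry condition gives $X_j^* X_j = \beta^2 I$ and we may write $X_j = \beta O_j$ for a unitary $O_j \in \mathbb{F}^{r\times r}$, with $O_2 = I$.

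Next I would translate the remaining isoclinic constraints into matrix relations. Expanding $\Phi_i^* \Phi_j = \alpha^2 I + \beta^2 O_i^* O_j$ and imposing $(\Phi_i^*\Phi_j)(\Phi_i^*\Phi_j)^* = \alpha^2 I$ from Lemma~\ref{lem: isoclinism}(iii) on the pairs $(\Phi_2,\Phi_j)$ and $(\Phi_i,\Phi_j)$, a short computation yields, for $j \ge 3$ and $i \ne j$,
\[
O_j + O_j^* = \tfrac{2\alpha^2-1}{\alpha^2}\, I, \qquad O_i^* O_j + O_j^* O_i = \tfrac{2\alpha^2-1}{\alpha^2}\, I.
\]
Setting $\tau := \tfrac{2\alpha^2-1}{2\alpha^2}$, $\nu := \sqrt{1-\tau^2}$, and writing $O_j = \tau I + \nu K_j$ for $j \ge 3$, the first relation shows each $K_j$ is a skew-Hermitian complex structure ($K_j^* = -K_j$, $K_j^2 = -I$), while the second becomes a constant anticommutation relation $K_i K_j + K_j K_i = \gamma I$ with $\gamma = \tfrac{-2\tau}{1+\tau}$. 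Thus an $\alpha$-$\operatorname{EI}_{\mathbb{F}}(2r,r,n)$ is equivalent to a family of $n-2$ complex structures on $\mathbb{F}^r$ with a prescribed constant anticommutator.

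To prove (a) I would count such families. Embedding the $K_j$ in a maximal mutually anticommuting family $J_1,\dots,J_p$ of complex structures on $\mathbb{F}^r$, where $p = \rho_{\mathbb{F}}(r)-1$ by the classical Radon--Hurwitz theorem, and writing $K_j = \sum_i c^{(j)}_i J_i$ with $\sum_i (c^{(j)}_i)^2 = 1$, the anticommutation relation reads $\langle c^{(i)}, c^{(j)}\rangle = -\gamma/2$ for $i \ne j$. The problem then reduces to bounding the number of unit vectors in $\mathbb{R}^p$ with constant pairwise inner product $\kappa$. Their Gram matrix $(1-\kappa)I + \kappa J$ must have rank at most $p$, which forces $m \le p+1$, with equality exactly for a regular simplex ($\kappa = -1/p$). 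Hence $n - 2 \le p+1$, i.e.\ $n \le \rho_{\mathbb{F}}(r)+2$; reversing the construction from a regular simplex inside a maximal anticommuting family produces an example attaining $n = \rho_{\mathbb{F}}(r)+2$, which proves (a).

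For (b), at the extremal value $n = \rho_{\mathbb{F}}(r)+2$ we have $m = n-2 = p+1$ unit vectors of constant inner product in $\mathbb{R}^p$, which are forced to be a regular simplex: singularity of the Gram matrix gives $\kappa = -1/p$, which pins $\tau = -\tfrac{1}{n-2}$ and hence $\alpha^2 = \tfrac{n-2}{2(n-1)}$ (the Welch value), while the vanishing centroid $\sum_j c^{(j)} = 0$ gives $\sum_{j\ge 3} K_j = 0$. Computing $\sum_{j=1}^n P_j$ in the normalized coordinates, its off-diagonal block equals $\alpha\beta\,(I + \sum_{j\ge3} O_j) = \alpha\beta\,(1+(n-2)\tau)\,I = 0$ and its two diagonal blocks both equal $\tfrac{n}{2}I$, so $\sum_j P_j = \tfrac{n}{2}I$ and the configuration is an EITFF. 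The hard part is the exact Radon--Hurwitz input: establishing that the deformed Clifford relations $K_i^2 = -I$, $K_iK_j + K_jK_i = \gamma I$ force the $K_j$ into a maximal anticommuting family of size precisely $p = \rho_{\mathbb{F}}(r)-1$, so that both the upper bound and the rigidity in (b) land on the correct value. This is the classical content carried by Clifford-algebra representation theory, and is where I would lean on~\cite{LemmensS:73,Hoggar:76,FGLI:25}; care is also needed for the degenerate cases ($p=0$, the balanced case $\alpha = \tfrac{1}{\sqrt2}$ with $\gamma = 0$, and the differing real and complex counts).
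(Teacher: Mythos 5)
The paper does not actually prove Proposition~\ref{prop: EITFF d=2r existence}; it imports the result from~\cite{LemmensS:73,Hoggar:76,FGLI:25}, so the relevant comparison is with the standard argument in those references --- which is exactly the route you reconstruct, and which the paper itself encodes in Proposition~\ref{prop:RHform}. Your reduction is correct: normalizing via Lemma~\ref{lem:normalization} with $d=2r$ gives $X_j=\beta O_j$ with $O_j$ unitary, and the isoclinism identities $O_j+O_j^*=\tfrac{2\alpha^2-1}{\alpha^2}I$ and $O_i^*O_j+O_j^*O_i=\tfrac{2\alpha^2-1}{\alpha^2}I$ check out, as do the values $\tau$, $\gamma$, the forced Welch value $\alpha^2=\tfrac{n-2}{2(n-1)}$ at the extremal $n$, the vanishing of $\sum_{j\ge 3}K_j$, and the computation $\sum_j P_j=\tfrac{n}{2}I$. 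In the paper's own language, your $\{O_j\}_{j=2}^n$ are unitaries with constant pairwise $\rho$-inner product $\tau$, i.e.\ an "equiangular" configuration in a $\rho$-space, and the extremal case is a simplex --- the same structure as Proposition~\ref{prop:RHform}(c).

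Two steps need repair. First, "embedding the $K_j$ in a maximal mutually anticommuting family $J_1,\dots,J_p$ with $p=\rho_{\mathbb F}(r)-1$" presumes that every anticommuting family of complex structures extends to one of \emph{maximum} size, which is not part of the classical statement (maximal families need not be maximum). The embedding is also unnecessary: the deformed Clifford relations show directly that every real linear combination $aI+\sum_j b_jK_j$ satisfies $(aI+K)(aI+K)^*=(a^2+\|\sum b_jK_j\|_\rho^2)I$, so $\operatorname{span}_{\mathbb R}(\{I\}\cup\{K_j\})$ is a $\rho$-space; its dimension is at most $\rho_{\mathbb F}(r)$ by Hurwitz--Radon, which bounds the rank of the Gram matrix $\bigl[\langle K_i,K_j\rangle_\rho\bigr]$ by $\rho_{\mathbb F}(r)-1$ and yields $n-2\le\rho_{\mathbb F}(r)$ without any extension. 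Second, the decomposition $O_j=\tau I+\nu K_j$ degenerates when $\nu=0$, i.e.\ $\tau=-1$, i.e.\ $\alpha=\tfrac12$ (then $O_j=-I$ for all $j\ge 3$, so necessarily $n=3$); this case must be, and easily is, treated separately, and it is precisely the case realizing $n=\rho_{\mathbb R}(r)+2=3$ for odd $r$ over $\mathbb R$, so it cannot be dismissed. With those repairs, and granting the classical Hurwitz--Radon theorem (both the dimension bound and the construction of a maximum family) as you do --- reasonable, since the proposition is itself a citation --- the argument is sound and is essentially the Lemmens--Seidel proof.
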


Our main result in this section is as follows.

\begin{theorem}
\label{thm:RHdimKn=n}
~
    \begin{itemize}
    \item[(a)]
    For any $r$, there exists an $\operatorname{EITFF}_{\mathbb{C}}(2r,r,n)$ with $\operatorname{dim} \mathcal{K}_n = n = \rho_{\mathbb{C}}(r)+2$.

    \smallskip
    
    \item[(b)]
    If $r$ is a power of~2, then every $\operatorname{EITFF}_{\mathbb{F}}(2r,r,n)$ with $n = \rho_{\mathbb{F}}(r)+2$ satisfies $\operatorname{dim} \mathcal{K}_n = n$.
    Here, $\mathbb{F}$ can be either $\mathbb{R}$ or $\mathbb{C}$.
    \end{itemize}
\end{theorem}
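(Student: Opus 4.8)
The plan is to prove the two inequalities $n \le \dim\mathcal{K}_n \le n$ separately. The lower bound is free from Proposition~\ref{prop: korner bound}, so the entire content is the upper bound $\dim\mathcal{K}_n \le n$. Since $d=2r$, I first apply Lemma~\ref{lem:normalization} to assume $\Phi_1 = \begin{bmatrix} I_r \\ 0\end{bmatrix}$, $\Phi_2 = \begin{bmatrix}\alpha I_r \\ \beta I_r\end{bmatrix}$, and $\Phi_j = \begin{bmatrix}\alpha I_r \\ X_j\end{bmatrix}$ for $j\ge 3$ (with $Y_j$ vacuous). Writing the unitary $U_j$ from Lemma~\ref{lem:normalization}(a) as $U_j = tI_r + sJ_j$ (its eigenvalues share the common real part $t=\tfrac{3\alpha^2-1}{2\alpha^3}$, forced by Lemma~\ref{lem:normalization}(c) with $Y_j=0$), I get $X_j = pI_r + qJ_j$ where each $J_j$ is a unitary complex structure ($J_j^* = -J_j$, $J_j^2 = -I_r$) and $q = \tfrac{\alpha s}{\beta}\neq 0$, since $n=\rho_{\mathbb{F}}(r)+2\ge 4$ forces $\alpha\neq\tfrac12$. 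Using the $\Phi_1$- and $\Phi_2$-conditions to solve for the corner $M_{11}$ and the block $M_{22}$ in terms of the off-diagonal block $A:=M_{12}$ and two free scalars, I reduce to the identity $\dim\mathcal{K}_n = 2 + \dim\mathcal{A}$, where $\mathcal{A}\le\mathbb{F}^{r\times r}$ is cut out by the remaining $n-2$ conditions $AX_j + X_j^*A^* - \tfrac1\beta X_j^*(A+A^*)X_j \in\operatorname{span}\{I_r\}$.

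To analyze $\mathcal{A}$, I decompose $A = S + K$ into Hermitian and skew-Hermitian parts and split each condition according to the $\pm1$-eigenspaces of the involution $M\mapsto J_jMJ_j$, i.e.\ into the parts commuting and anticommuting with $J_j$. The part anticommuting with $J_j$ must vanish because $\operatorname{span}\{I_r\}$ commutes with $J_j$; a short computation turns this into an equation $S_+^{(j)}\bigl[(\ast)I_r + (\ast)J_j\bigr] = 0$ whose bracket is invertible (its eigenvalues $(\ast)\pm(\ast)\mathrm{i}$ are nonzero for $\alpha\in(0,1)$, $\alpha\neq\tfrac12$), so the $J_j$-anticommuting part of $S$ is zero. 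Hence $S$ commutes with every $J_j$. The surviving commuting-part equation then says that the $J_j$-commuting part of $K$ lies in $\operatorname{span}\{J_j\}$, for each $j$.

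Now I bring in representation theory. By the classification behind Proposition~\ref{prop: EITFF d=2r existence}, the structures $J_3,\dots,J_n$ form a maximal Clifford system ($J_jJ_k+J_kJ_j\in\operatorname{span}\{I_r\}$, with $n-2=\rho_{\mathbb{F}}(r)$). When these $J_j$ generate the full matrix algebra $\mathbb{F}^{r\times r}$, Schur's lemma forces the $S$ above to be a real multiple of $I_r$, and a Clifford-grading argument forces $K\in\operatorname{span}\{J_3,\dots,J_n\}$: decomposing $K$ along the degrees of an orthonormal anticommuting basis of $\operatorname{span}\{J_j\}$, the degree-$0$ part is excluded by the identity component of the conditions and the degree-$\ge 2$ parts by the commuting-part conditions. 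Thus $\mathcal{A}\subseteq\operatorname{span}\{I_r,J_3,\dots,J_n\}$. Finally, maximality of the system (the Radon--Hurwitz bound caps anticommuting families at $\rho_{\mathbb{F}}(r)-1$) produces exactly one linear relation, so $\dim\operatorname{span}\{I_r,J_3,\dots,J_n\} = n-2$; combined with the containment $\operatorname{span}\{I_r,J_3,\dots,J_n\}\subseteq\mathcal{A}$ (equivalently $\dim\mathcal{K}_n\ge n$), this yields $\dim\mathcal{A}=n-2$ and hence $\dim\mathcal{K}_n=n$.

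It remains to supply the generation hypothesis. For part~(b) with $r=2^k$, a maximal anticommuting family on $\mathbb{F}^{2^k}$ realizes the unique irreducible module of the associated Clifford algebra, so the $J_j$ automatically generate $\mathbb{F}^{r\times r}$; since by Proposition~\ref{prop: EITFF d=2r existence} every $\operatorname{EITFF}_{\mathbb{F}}(2r,r,n)$ with $n=\rho_{\mathbb{F}}(r)+2$ arises from such a system, the argument applies to all of them (the case $r=1$ is the classical equiangular-lines/Gerzon case). For part~(a) I instead exhibit a single complex example: starting from the irreducible representation of the relevant complex Clifford algebra on $\mathbb{C}^r$, the Lemmens--Seidel/Hoggar construction yields an $\operatorname{EITFF}_{\mathbb{C}}(2r,r,\rho_{\mathbb{C}}(r)+2)$ whose complex structures generate $\mathbb{C}^{r\times r}$ by construction, so $\dim\mathcal{K}_n=n$ there. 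I expect the main obstacle to be the Clifford-grading step proving $K\in\operatorname{span}\{J_j\}$, together with cleanly establishing the irreducibility (generation) of the maximal system exactly when $r=2^k$; the normalization, the eigenspace split, and the dimension count are then routine bookkeeping.
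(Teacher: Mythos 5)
Your reduction to the off-diagonal block $A$ is sound and, after unwinding, is essentially equivalent to the paper's Lemma~\ref{lem:Sform}: both reduce the theorem to showing that the ``anticommutant'' $\{C_0 : C_0^*C_j + C_j^*C_0 = 0 \ \forall j\}$ of a maximal $\rho$-orthonormal family is trivial. The genuine gap is in how you propose to establish that triviality. Your mechanism rests on the hypothesis that the complex structures $J_j$ generate the full matrix algebra $\mathbb{F}^{r\times r}$, and this hypothesis is simply false in most of the cases you need. Over $\mathbb{R}$ with $r=2^k$, a maximal anticommuting family of $\rho_{\mathbb{R}}(r)-1$ skew-symmetric complex structures generates only the image of the Clifford algebra $Cl_{0,\rho_{\mathbb{R}}(r)-1}$, which is a proper subalgebra of $M_r(\mathbb{R})$ for most $k$: for $r=2$ it is a copy of $\mathbb{C}$, for $r=4$ a copy of $\mathbb{H}$, for $r=32$ a copy of $M_{16}(\mathbb{C})$, for $r=64$ a copy of $M_{16}(\mathbb{H})$, etc. So Schur's lemma does not force $S\in\operatorname{span}\{I_r\}$ by the route you describe, and the grading argument for $K$ has no algebra to grade. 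Similarly, for part~(a) with $r$ not a power of $2$, the family you get from ``the irreducible representation of the relevant complex Clifford algebra'' is (up to equivalence) a family on $\mathbb{C}^{2^k}$ tensored against $I_{2a+1}$, which never generates $M_r(\mathbb{C})$ --- and indeed the paper's closing remark shows that for exactly such tensor-product families one gets $\dim\mathcal{K}_n>n$, so part~(a) requires exhibiting a \emph{different} maximal family, which your sketch does not do. Even in the cases where generation does hold ($\mathbb{C}$, $r=2^k$), the ``Clifford-grading'' step is obstructed by the fact that the $2^m$ monomials in the anticommuting generators are linearly dependent in the $r^2$-dimensional image (the volume element acts as a scalar), so a decomposition of $K$ ``by degree'' is not well defined without further work.

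What the paper does instead: Lemma~\ref{lem:Sform} converts $\dim\mathcal{K}_n-n$ into the dimension of the anticommutant; Lemma~\ref{lem: Kay's reduction} proves by a clean induction on $k$ (normalizing $C_m=I$ and $C_{m-1}$ block-diagonal, then peeling off a $2\times 2$ block structure) that for $r=2^k$ any $C_0$ in the anticommutant of a family of length $\geq\rho_{\mathbb{C}}(r)-2$ is a scalar multiple of a unitary, whence maximality of $\rho_{\mathbb{F}}(r)$ forces $C_0=0$; and part~(a) is handled by an explicit doubling construction of a family with trivial anticommutant, starting from $\{\mathrm{i}I_r, I_r\}$ for odd $r$ (a family that generates almost nothing, showing that generation is not the right lever). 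To salvage your approach you would need to replace ``the $J_j$ generate $\mathbb{F}^{r\times r}$'' with an argument tailored to the actual Clifford image in each case, at which point you would essentially be reproving Lemma~\ref{lem: Kay's reduction}.
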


A proof is forthcoming.
This result may seem surprising since each inequality $n \leq \dim \mathcal{K}_j$ is a refinement of Gerzon's bound, which gives the poor estimate $v_{\mathbb{F}}(r,2r) \leq \dim \mathbb{F}_H^{2r \times 2r}$.
Indeed, $v_{\mathbb{F}}(r,2r) = \rho_{\mathbb{F}}(r)+2 = O(\log r)$, while $\dim \mathbb{F}_H^{2r \times 2r} = \Theta(r^2)$.
As mentioned in~\cite{FGLI:25}, this discrepancy might explain why Lemmens and Seidel expressed dissatisfaction with the bound $v_{\mathbb{F}}(r,d) \leq \dim \mathcal{K}_1 = \dim \mathbb{F}_H^{d\times d} - \dim \mathbb{F}_H^{r \times r} + 1$.
For $d=2r$, our bound $n \leq \dim \mathcal{K}_3 = \Theta(r^2)$ is similarly weak.
In contrast, the ultimate refinement $n \leq \dim \mathcal{K}_n$ saturates infinitely often.
Furthermore, this occurs in both the real and complex settings, which may seem surprising when compared with the conjectured performance of Gerzon's bound for real equiangular lines (Example~\ref{ex: lines}).

To prove Theorem~\ref{thm:RHdimKn=n}, we rely on the characterization of EITFFs with $d=2r$ from~\cite{FGLI:25}, which uses the following terminology.
A nonempty subset $\mathcal{R}$ of $\mathbb{F}^{r\times r}$ is called a \textbf{$\rho$-space} if every $A \in \mathcal{R}$ is a scalar multiple of a unitary and $\mathcal{R}$ is a real vector space (that is, closed under addition and real scalar multiplication).
Then one can show that any $A,B \in \mathcal{R}$ satisfy
\[
A^*B + B^*A = 2z I
\]
for
\[
z = \langle A, B \rangle_{\rho} 
:= \tfrac{1}{r} \operatorname{Re} \langle A, B \rangle_F
= \tfrac{1}{r} \operatorname{Re} \operatorname{tr}(A B^*).
\]
In the following, we treat any $\rho$-space as a real Hilbert space with inner product $\langle \cdot, \cdot \rangle_\rho$.
A sequence $\{ C_j \}_{j=1}^n$ in $\mathbb{F}^{r\times r}$ is called \textbf{$\rho$-orthonormal} if each $C_j$ is a unitary and
\[
C_i^* C_j + C_j^* C_i = 0
\quad
\text{whenever $i \neq j$}.
\]
Clearly any orthonormal basis for a $\rho$-space is $\rho$-orthonormal; conversely, the real span of any $\rho$-orthonormal sequence is a $\rho$-space.
Classically, the Radon--Hurwitz number $\rho_{\mathbb{F}}(r)$ was defined as the maximum length of a $\rho$-orthonormal sequence in $\mathbb{F}^{r\times r}$, that is, the maximum dimension of a $\rho$-space in $\mathbb{F}^{r\times r}$.
Non-obviously, this number is given by~\eqref{eq: radon hurwitz number values}; see~\cite{Radon:22,Hurwitz:22,Adams:62,AdamsLP:65,Rajwade:93,FGLI:25}.
Finally, a sequence $\{ B_j \}_{j=1}^m$ in a $\rho$-space $\mathcal{R}$ is called a \textbf{simplex} if each $B_j$ is a unitary, and $\langle B_i, B_j \rangle_\rho = -\tfrac{1}{m-1}$ for $i \neq j$.
In that case, the real span of $\{ B_j \}_{j=1}^m$ has dimension $m-1$, as seen from the rank of $\left[ \langle B_i, B_j \rangle_\rho \right]_{i,j=1}^m$.

\begin{proposition}[Theorems 3.1 and 3.2(a) of \cite{FGLI:25}]
\label{prop:RHform}
Let $r \geq 1$ and $n \geq 3$ be integers.
Then the following hold.
\smallskip

    \begin{itemize}

    \item[(a)]
    Any $\operatorname{EITFF}_\mathbb{F}(2r,r,n)$ is equivalent to an $\operatorname{EITFF}$ with isometries of the form 
    \begin{equation}
    \label{eq:rhnorm}
    \Phi_1 = \begin{bmatrix}
        I_r \\
        0
    \end{bmatrix}, \quad \Phi_j = \begin{bmatrix}
    \alpha I_r \\
    \beta B_j
    \end{bmatrix}, \quad j > 1, 
    \end{equation}
    where 
    \begin{equation}
    \label{eq:alpha for RH form}
    \alpha := \sqrt{\frac{n-2}{2n-2}}, \quad \beta := \sqrt{1 - \alpha^2},
    \end{equation}
     and $\{B_j\}_{j=2}^n$ are unitaries in $\mathbb{F}^{r \times r}$ that satisfy: 
    \begin{equation}
    \label{eq:simplex BiBj relation}
    B_j^*B_i + B_i^*B_j 
    = \frac{-2}{n-2}I_r, \quad \forall i,j \in \{2, \dots n\}, i \neq j.
    \end{equation}

    \item[(b)]
    For any matrices $\{B_j\}_{j=2}^n$ that are unitary and satisfy~\eqref{eq:simplex BiBj relation}, defining $\{\Phi_j\}_{j=1}^n$ by \eqref{eq:rhnorm} and \eqref{eq:alpha for RH form} yields the isometries of an $\operatorname{EITFF}_\mathbb{F}(2r,r,n)$.

    \medskip

    \item[(c)]
    Matrices $\{B_j\}_{j=2}^n$ are unitary and satisfy \eqref{eq:simplex BiBj relation} if and only if they form a simplex in a $\rho$-space.

    \end{itemize}
\end{proposition}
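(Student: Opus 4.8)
The plan is to handle (a), (b), (c) in turn, reducing each to the normalized form of Lemma~\ref{lem:normalization} specialized to $d = 2r$ together with the isoclinism criterion of Lemma~\ref{lem: isoclinism}(iii). For~(a), note first that an $\operatorname{EITFF}$ has $d/r = 2$, so~\eqref{eq:EITFF alpha} forces $\alpha = \sqrt{(n-2)/(2n-2)}$, matching~\eqref{eq:alpha for RH form}; in particular $\alpha \in (0,1)$ when $n \geq 3$. By Lemma~\ref{lem:normalization} (in its $d=2r$ form), after replacing the given $\operatorname{EITFF}$ by an equivalent one, its isometries have top block $\alpha I_r$ and a bottom block $X_j$ for each $j \geq 2$, with $X_2 = \beta I_r$. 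Setting $B_2 := I_r$ and $B_j := \beta^{-1} X_j$ for $j \geq 3$ puts the isometries in the form~\eqref{eq:rhnorm}. Each $B_j$ is automatically unitary, since $\Phi_j$ being an isometry gives $\alpha^2 I_r + X_j^* X_j = \Phi_j^* \Phi_j = I_r$, whence $B_j^* B_j = \beta^{-2} X_j^* X_j = I_r$. The simplex relation~\eqref{eq:simplex BiBj relation} then comes from isoclinism: for $i \neq j$ in $\{2,\dots,n\}$, Lemma~\ref{lem: isoclinism}(iii) gives that $\Phi_i^*\Phi_j = \alpha^2 I_r + \beta^2 B_i^* B_j$ equals $\alpha$ times a unitary, so writing $M := B_i^* B_j$ and expanding $(\alpha^2 I_r + \beta^2 M)(\alpha^2 I_r + \beta^2 M^*) = \alpha^2 I_r$ reduces, via the identity $\alpha^2 - \alpha^4 - \beta^4 = (2\alpha^2 - 1)\beta^2$, to $M + M^* = \tfrac{2\alpha^2-1}{\alpha^2} I_r = \tfrac{-2}{n-2} I_r$, which is exactly~\eqref{eq:simplex BiBj relation}.

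For~(b), I would run this computation in reverse and then verify tightness. Given unitaries $\{B_j\}_{j=2}^n$ obeying~\eqref{eq:simplex BiBj relation}, each $\Phi_j$ from~\eqref{eq:rhnorm} is an isometry since $\Phi_j^*\Phi_j = \alpha^2 I_r + \beta^2 B_j^* B_j = I_r$; moreover $\Phi_1^*\Phi_j = \alpha I_r$ for $j \geq 2$, and for $i,j \geq 2$ the simplex relation makes $\Phi_i^*\Phi_j = \alpha^2 I_r + \beta^2 B_i^* B_j$ equal to $\alpha$ times a unitary, so all pairs are $\alpha$-isoclinic. The one remaining property is tightness, and the key point — which I would emphasize — is that it is forced by~\eqref{eq:simplex BiBj relation} rather than assumed: computing
\[
\Big(\sum_{i=2}^n B_i\Big)^*\Big(\sum_{j=2}^n B_j\Big) = (n-1) I_r + \sum_{i \neq j} B_i^* B_j = (n-1) I_r - (n-1) I_r = 0
\]
(the cross terms summing over $\binom{n-1}{2}$ pairs via~\eqref{eq:simplex BiBj relation}) shows $\sum_{j=2}^n B_j = 0$. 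This annihilates the off-diagonal block of $\sum_j \Phi_j\Phi_j^*$, while the diagonal blocks evaluate to $(1 + (n-1)\alpha^2) I_r = \tfrac n2 I_r$ and $\beta^2(n-1) I_r = \tfrac n2 I_r$; hence $\sum_j \Phi_j\Phi_j^* = \tfrac n2 I_{2r}$ and the subspaces form an $\operatorname{EITFF}_{\mathbb F}(2r,r,n)$.

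For~(c), the ($\Leftarrow$) direction is immediate from the $\rho$-space identity $B_i^* B_j + B_j^* B_i = 2\langle B_i, B_j\rangle_\rho I_r$: a simplex of $m = n-1$ unitaries has $\langle B_i, B_j\rangle_\rho = -\tfrac1{m-1} = -\tfrac1{n-2}$, which gives~\eqref{eq:simplex BiBj relation}. For ($\Rightarrow$) I would take $\mathcal R := \operatorname{span}_{\mathbb R}\{B_2,\dots,B_n\}$; the main obstacle is to upgrade the pairwise relation~\eqref{eq:simplex BiBj relation} to the global fact that $\mathcal R$ is a $\rho$-space. This rests on a single computation: for real scalars $t_2,\dots,t_n$ and $A := \sum_j t_j B_j$,
\[
A^* A = \sum_j t_j^2\, I_r + \sum_{i<j} t_i t_j\,(B_i^* B_j + B_j^* B_i) = \Big(\sum_j t_j^2 - \tfrac{2}{n-2}\sum_{i<j} t_i t_j\Big) I_r,
\]
a nonnegative scalar multiple of $I_r$, so every element of $\mathcal R$ is a scalar multiple of a unitary. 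Finally, taking the normalized real trace of~\eqref{eq:simplex BiBj relation} yields $\langle B_i, B_j\rangle_\rho = \tfrac1r \operatorname{Re}\operatorname{tr}(B_i B_j^*) = -\tfrac1{n-2} = -\tfrac1{m-1}$ for $i \neq j$, so $\{B_j\}_{j=2}^n$ is a simplex in the $\rho$-space $\mathcal R$. Apart from this $\rho$-space upgrade, every step is a short matrix identity resting on Lemma~\ref{lem:normalization}, so I expect the bulk of the difficulty to be concentrated in that single computation, together with the (equally short but conceptually notable) observation in~(b) that tightness is automatic.
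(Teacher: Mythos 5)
Your proof is correct in all three parts, but note that the paper itself offers no proof of this proposition to compare against: it is imported verbatim as Theorems~3.1 and~3.2(a) of~\cite{FGLI:25}. What you have produced is a self-contained derivation from the paper's own toolkit---Lemma~\ref{lem:normalization} specialized to $d=2r$ (so $Y_j^*Y_j=0$ and the isometry condition forces $X_j^*X_j=\beta^2 I_r$, making $B_j:=\beta^{-1}X_j$ unitary) together with Lemma~\ref{lem: isoclinism}(iii)---and every computational hinge checks out: the identity $\alpha^2-\alpha^4-\beta^4=(2\alpha^2-1)\beta^2$, the evaluation $(2\alpha^2-1)/\alpha^2=-2/(n-2)$ under~\eqref{eq:alpha for RH form}, and the simplex relation holding for all pairs including $i=2$ since $B_2=I_r$ is unitary. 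Your observation in~(b) that tightness is \emph{forced} by~\eqref{eq:simplex BiBj relation}, via $\bigl(\sum_i B_i\bigr)^*\bigl(\sum_j B_j\bigr)=(n-1)I_r-(n-1)I_r=0$ and hence $\sum_{j\geq 2}B_j=0$, is exactly the mechanism at work; the same cancellation reappears inside the paper's proof of Lemma~\ref{lem:Sform}, where $\sum_{j\geq 2}B_j=0$ is instead extracted from the kernel of the Gram matrix $\bigl[\langle B_i,B_j\rangle_\rho\bigr]$. In~(c), your ``$\rho$-space upgrade'' is also sound: for $A=\sum_j t_jB_j$ the relation collapses $A^*A$ to a scalar multiple of $I_r$, and nonnegativity of that scalar is automatic since $A^*A\succeq 0$, so $\operatorname{span}_{\mathbb{R}}\{B_2,\ldots,B_n\}$ is genuinely a $\rho$-space and the normalized real trace of~\eqref{eq:simplex BiBj relation} gives $\langle B_i,B_j\rangle_\rho=-\tfrac{1}{n-2}=-\tfrac{1}{m-1}$ with $m=n-1$, i.e., a simplex. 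The one step worth keeping fully explicit in a final writeup is the opening move of~(a): that the isoclinism parameter of any $\operatorname{EITFF}_{\mathbb{F}}(2r,r,n)$ is pinned to~\eqref{eq:alpha for RH form} by Welch-bound equality~\eqref{eq:EITFF alpha}---you do state this, and it is what licenses applying Lemma~\ref{lem:normalization} with $\alpha\in(0,1)$.
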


Our proof of Theorem~\ref{thm:RHdimKn=n} relies on the next two lemmas.

\begin{lemma}
\label{lem:Sform}
Let $r \geq 1$ and $n \geq 3$ be integers.
Given an $\operatorname{EITFF}_{\mathbb{F}}(2r,r,n)$ with isometries as in Proposition~\ref{prop:RHform}(a), choose any $\rho$-orthonormal basis $\{C_j\}_{j=1}^{n-2}$ for the $\rho$-space spanned by $\{ B_j : j > 1 \}$.
Then
\[
\dim \mathcal{K}_n = n + \dim \{ C_0 \in \mathbb{F}^{r \times r} : C_0^*C_j + C_j^*C_0 = 0 \text{ for every } j > 0 \},
\]
where the set on the right-hand side is a real vector space.
\end{lemma}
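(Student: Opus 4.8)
The plan is to work in the normalized coordinates of Proposition~\ref{prop:RHform}(a) and to exploit the $\rho$-space structure of $B_2,\dots,B_n$. Writing a general Hermitian matrix in block form $M = \left[\begin{smallmatrix} M_{11} & X \\ X^* & S\end{smallmatrix}\right]$ with $M_{11},S \in \mathbb{F}_H^{r\times r}$ and $X \in \mathbb{F}^{r\times r}$, the condition $\Phi_1^* M \Phi_1 \in \operatorname{span}\{I_r\}$ forces $M_{11}=cI_r$ for a free scalar $c\in\mathbb{R}$, while expanding $\Phi_j^* M \Phi_j = \alpha^2 c I_r + \alpha\beta(XB_j + B_j^* X^*) + \beta^2 B_j^* S B_j$ and discarding the multiples of $I_r$ shows that $\Phi_j^* M \Phi_j \in \operatorname{span}\{I_r\}$ is equivalent to
\[
\alpha\,\big[X B_j + B_j^* X^*\big]_0 + \beta\, B_j^* S_0\, B_j = 0
\qquad (2 \le j \le n),
\]
where $[\cdot]_0$ denotes the traceless part and $S_0 := S - \tfrac{1}{r}(\operatorname{tr} S)\,I_r$. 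Since $c$ is unconstrained, $\dim\mathcal{K}_n = 1 + \dim\{(X,S) : \text{the constraints above hold}\}$, and the whole problem reduces to a dimension count for this solution set.

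The crux, and the step I expect to be hardest, is to show that every solution has $S_0 = 0$, i.e. that the lower-right block is forced to be a scalar matrix. First I would record that $\Sigma := \sum_{j=2}^n B_j = 0$: the simplex relation~\eqref{eq:simplex BiBj relation} gives $\langle B_i,B_j\rangle_\rho = -\tfrac{1}{n-2}$ for $i\neq j$ and $\langle B_j,B_j\rangle_\rho = 1$, so $\langle \Sigma,\Sigma\rangle_\rho = (n-1) + (n-1)(n-2)\bigl(-\tfrac{1}{n-2}\bigr) = 0$; since $\Sigma$ lies in the $\rho$-space it is a scalar multiple of a unitary, whence $\Sigma^*\Sigma = \langle\Sigma,\Sigma\rangle_\rho\, I_r = 0$ and $\Sigma = 0$. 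To deploy this, I would conjugate the $j$-th constraint: multiplying on the left by $B_j$ and on the right by $B_j^*$, and using $B_jB_j^* = I_r$, turns $B_j^* S_0 B_j$ back into $S_0$ and the bracketed term into $[B_j X + X^* B_j^*]_0$, giving $S_0 = -\tfrac{\alpha}{\beta}\,[B_j X + X^* B_j^*]_0$ for every $j$. Summing over $j=2,\dots,n$ and invoking $\Sigma = 0$ yields $(n-1)S_0 = 0$, so $S_0 = 0$ because $n \ge 3$.

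With $S_0 = 0$ established, $S = sI_r$ is scalar (one free dimension $s\in\mathbb{R}$) and each constraint collapses to $X B_j + B_j^* X^* \in \operatorname{span}\{I_r\}$. Because $\{B_j\}$ and $\{C_i\}$ both span the same $(n-2)$-dimensional $\rho$-space and $C\mapsto XC + (XC)^*$ is $\mathbb{R}$-linear, this is equivalent to $X C_i + C_i^* X^* \in \operatorname{span}\{I_r\}$ for all $i$. I would then study the linear map $\theta(D) := (D^* C_i + C_i^* D)_{i=1}^{n-2}$ evaluated at $D = X^*$: its kernel is exactly the space $\mathcal{S}$ on the right-hand side of the lemma, and $\rho$-orthonormality gives $\theta\bigl(\sum_i d_i C_i\bigr) = (2 d_i I_r)_i$, so every scalar tuple lies in $\operatorname{im}\theta$. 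The identity $\dim\theta^{-1}(Z) = \dim\ker\theta + \dim(\operatorname{im}\theta\cap Z)$ with $Z = (\operatorname{span}\{I_r\})^{n-2}$ then shows the space of admissible $X$ has dimension $\dim\mathcal{S} + (n-2)$.

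Assembling the pieces gives
\[
\dim\mathcal{K}_n = 1 + \big[(n-2) + \dim\mathcal{S}\big] + 1 = n + \dim\mathcal{S},
\]
where the leading $1$ is the free scalar $c$ and the trailing $1$ is the free scalar $s$ in $S = sI_r$. The only genuinely delicate point is the vanishing of $S_0$; everything after that is bookkeeping with the $\rho$-orthonormal basis, and the fact that $\mathcal{S}$ is a real vector space is immediate since its defining conditions $C_0^*C_j + C_j^*C_0 = 0$ are $\mathbb{R}$-linear and homogeneous in $C_0$.
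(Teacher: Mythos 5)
Your proof is correct, and it reaches the dimension formula by a genuinely different decomposition of $\mathcal{K}_n$ than the paper's. The paper splits $\mathcal{K}_n$ orthogonally (in the Frobenius inner product) as $\operatorname{span}\{P_1,\ldots,P_n\}$ plus its orthocomplement inside $\mathcal{K}_n$: the first summand contributes $n$ via Gerzon's linear independence (Proposition~\ref{prop: independence}), and the second is shown to equal $\bigl\{\left[\begin{smallmatrix}0 & C_0^*\\ C_0 & 0\end{smallmatrix}\right] : C_0^*C_j + C_j^*C_0 = 0\ \forall j>0\bigr\}$. You instead parametrize all of $\mathcal{K}_n$ by the blocks $(c,s,X)$ and obtain $n = 1+1+(n-2)$ from the two free scalars plus the containment $Z\subseteq\operatorname{im}\theta$, which you get from $\rho$-orthonormality of the $C_i$ --- a step the paper's route does not need. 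The two arguments share their engine: expanding $\Phi_j^*M\Phi_j$ in block form and using $\sum_{j\ge 2}B_j=0$ to force the lower-right block to be scalar (in the paper's orthocomplement setting, zero); your conjugate-and-sum derivation of $S_0=0$ is the paper's computation for its block $Z$ carried out modulo $\operatorname{span}\{I_r\}$. What each buys: your version is self-contained (no appeal to linear independence of the projections) and mechanically reduces everything to rank--nullity for $\theta$, while the paper's version exhibits an explicit isomorphism between $\operatorname{span}\{P_j\}^\perp\cap\mathcal{K}_n$ and the anticommutant space, which makes the later use of the lemma (reading off when $\dim\mathcal{K}_n=n$) especially transparent; your identification of that space as $\ker\theta$ delivers the same conclusion.
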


\begin{proof}
Select any isometries $\{\Phi_j\}_{j=1}^n$ of the form~\eqref{eq:rhnorm}.
That is,
$\Phi_1 = \left[\begin{smallmatrix}
    I_r \\
    0
\end{smallmatrix}\right]$ and $\Phi_j = \left[\begin{smallmatrix}
\alpha I_r \\
\beta B_j
\end{smallmatrix}\right]$ for $j > 1$, where $\alpha$ and $\beta$ are as in~\eqref{eq:alpha for RH form}, and where $\{B_j\}_{j=2}^n$ forms a simplex in a $\rho$-space, and so the real span $\mathcal{R}$ of $\{B_j : j > 1 \}$ is a $\rho$-space of dimension $n-2$.
Choose any $\rho$-orthonormal basis $\{C_j\}_{j=1}^{n-2}$ for $\mathcal{R}$.

Recall from Proposition~\ref{prop: independence} that the orthogonal projections $P_j = \Phi_j\Phi_j^*$ form a linearly independent sequence in
\begin{align*}
\mathcal{K}_n 
&= \bigcap_{j=1}^n \big\{M \in \mathbb{F}_H^{2r \times 2r}: P_j M P_j \in \operatorname{span}\{P_j\} \big\} \\
&= \bigcap_{j=1}^n \big\{M \in \mathbb{F}_H^{2r \times 2r}: \Phi_j^*M\Phi_j \in \operatorname{span} \{I_r\} \big\}.
\end{align*}
We prove that
\[
\operatorname{span}\{P_1, \dots P_n\}^\perp \cap \mathcal{K}_n = \biggl\{ \begin{bmatrix} 0 & C_0^* \\ C_0 & 0 \end{bmatrix} : C_0 \in \mathbb{F}^{r \times r},\, C_0^* C_j + C_j^* C_0 = 0\; \forall j > 0 \biggr\},
\]
and the lemma follows.

For one containment, select any $C_0 \in \mathbb{F}^{r \times r}$ with $C_0^* C_j + C_j^* C_0 = 0$ for all $j > 0$, and define
\[
S:= \begin{bmatrix} 0 & C_0^* \\ C_0 & 0 \end{bmatrix} \in \mathbb{F}^{2r \times 2r}.
\]
Then $C_0^* B + B^* C_0 = 0$ for any $B \in \mathcal{R}$, so for any $j > 1$,
\[
\Phi_j^* S \Phi_j
= \begin{bmatrix} \alpha I_r & \beta B_j^* \end{bmatrix}
\begin{bmatrix} 0 & C_0^* \\ C_0 & 0 \end{bmatrix}
\begin{bmatrix} \alpha I_r \\ \beta B_j \end{bmatrix}
= \alpha \beta B_j^* C_0 + \alpha \beta C_0^* B_j
= 0.
\]
Likewise,
\[
\Phi_1^* S \Phi_1
= \begin{bmatrix} I_r & 0 \end{bmatrix}
\begin{bmatrix} 0 & C_0^* \\ C_0 & 0 \end{bmatrix}
\begin{bmatrix} I_r \\ 0 \end{bmatrix}
= 0.
\]
Consequently, $S \in \mathcal{K}_n$.
Furthermore, $S \in \operatorname{span}\{P_1,\ldots,P_n\}^\perp$ since
\[
\langle S, P_j \rangle_F 
= \operatorname{tr}(S \Phi_j \Phi_j^*)
= \operatorname{tr}(\Phi_j^* S \Phi_j)
= 0
\]
for each $j >0$.

For the other containment, select any $S \in \operatorname{span}\{P_1, \dots P_n\}^\perp \cap \mathcal{K}_n$.
To begin, we show that $\Phi_j^*S \Phi_j = 0$ for each $j > 0$.
Indeed, since $S \in \mathcal{K}_n$, there exists $c_j \in \mathbb{R}$ such that $\Phi_j^*S \Phi_j = c_jI_r$, and since $S \in \operatorname{span}\{P_1,\ldots,P_n\}^\perp$,
\[
0 = \langle S, P_j \rangle_F
= \operatorname{tr}(\Phi_j^* S \Phi_j)
= \operatorname{tr}(c_jI_r)
= rc_j.
\]
The claim follows.
Applying the claim with $j = 1$, we find
\[
S 
= \begin{bmatrix}
    0 & C_0^* \\
    C_0 & Z
\end{bmatrix}
\]
for some $C_0 \in \mathbb{F}^{r \times r}$ and $Z \in \mathbb{F}_H^{r \times r}$.
Applying the claim with $j > 1$, we find
\[
0 = \Phi_j^*S\Phi_j
= \begin{bmatrix}
    \alpha I_r & \beta B_j^*
\end{bmatrix} 
\begin{bmatrix}
    0 & C_0^* \\
    C_0 & Z
\end{bmatrix}
\begin{bmatrix}
    \alpha I_r \\
    \beta B_j
\end{bmatrix} 
= \alpha \beta B_j^*C_0 + \alpha \beta C_0^*B_j + \beta^2 B_j^*ZB_j.
\]
After multiplying on the left by the unitary $B_j$ and on the right by $B_j^*$, this equation rearranges to show
\begin{equation}
\label{eq: Z}
Z = 
\tfrac{- \alpha}{\beta} (B_j C_0^* + C_0 B_j^*)
\quad
\text{for every }j > 1.
\end{equation}
Adding over all $j \in \{2, \ldots, n\}$, we find
\[
(n-1)Z 
= \tfrac{- \alpha}{\beta} \sum_{j=2}^{n} (B_j C_0^* + C_0 B_j^*) 
= \biggl(\tfrac{- \alpha}{\beta} \sum_{j=2}^{n}B_j \biggr) C_0^*  + C_0 \biggl(\tfrac{- \alpha}{\beta} \sum_{j=2}^{n}B_j^* \biggr).
\]

Next, we show that $\sum_{j=2}^n B_j = 0$.
Recall that $\langle B_j, B_i \rangle_\rho = - \tfrac{1}{n-2}$ when $i \neq j$, while $\langle B_j, B_j \rangle = 1$ since $B_j$ is unitary.
Consequently, the all-ones vector lies in the kernel of $\begin{bmatrix} \langle B_j, B_i \rangle_\rho \end{bmatrix}_{i,j=2}^n$, and the claim follows by a standard argument.
Taking an adjoint, we find $\sum_{j=2}^n B_j^* = 0$, so that $Z=0$.
Furthermore,~\eqref{eq: Z} now reads
\[
B_j C_0^* + C_0 B_j^* = 0
\quad
\text{for every } j > 1.
\]
For each $j > 0$, $C_j$ can be expressed as a real linear combination of $B_2,\ldots,B_n$, and it follows that
\[
C_j C_0^* + C_0 C_j^* = 0
\quad
\text{for every } j > 0.
\]
Finally, multiplying on the left by the unitary $C_j^*$ and on the right by $C_j$ gives
\[
C_0^*C_j + C_j^*C_0 = 0
\quad
\text{for every } j > 0,
\]
as desired.
\end{proof}

\begin{lemma}
\label{lem: Kay's reduction}
Let $r$ be a power of~2, and suppose $\{ C_j \}_{j=1}^m$ is a $\rho$-orthonormal sequence in $\mathbb{C}^{r \times r}$ with $m \geq \rho_{\mathbb{C}}(r) - 2$.
If $C_0 \in \mathbb{C}^{r \times r}$ satisfies
\[
C_0^* C_j + C_j^*C_0 = 0
\quad
\text{for every }
j >0,
\]
then $C_0$ is a scalar multiple of a unitary (possibly zero).
\end{lemma}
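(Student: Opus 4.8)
The plan is to reduce the statement to a question about anticommuting complex structures and then exploit the precise growth of $\rho_{\mathbb{C}}$ on powers of two. First I would normalize: replacing each $C_j$ by $C_1^* C_j$ and $C_0$ by $C_1^* C_0$ leaves every hypothesis intact (the sequence stays $\rho$-orthonormal, and each relation $C_0^* C_j + C_j^* C_0 = 0$ is preserved) while leaving the conclusion unchanged, since $C_1$ is unitary and so $C_0$ is a scalar multiple of a unitary if and only if $C_1^* C_0$ is. Thus I may assume $C_1 = I$. Then $\rho$-orthonormality forces each $C_j$ with $j \geq 2$ to be a skew-Hermitian unitary with $C_j^2 = -I$, and forces $C_i, C_j$ to anticommute for $2 \leq i < j$; that is, $C_2, \dots, C_m$ are $m-1$ pairwise anticommuting complex structures. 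The relation with $j=1$ makes $C_0$ skew-Hermitian, and the relations with $j \geq 2$ say $C_0$ anticommutes with every $C_j$. I would also record the standard dictionary (via the same prepend-the-identity trick) that the maximal number of pairwise anticommuting complex structures in $\mathbb{C}^{s \times s}$ is exactly $\rho_{\mathbb{C}}(s) - 1$.

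Next I would run an eigenspace argument on $H := C_0^* C_0 = -C_0^2$, which is positive semidefinite because $C_0$ is skew-Hermitian. Anticommuting with $C_0$ twice, each $C_j$ with $j \geq 2$ commutes with $C_0^2$, hence with $H$, and $C_0$ commutes with $H$ as well; so every eigenspace $E$ of $H$ is a reducing subspace for $C_0$ and for each $C_j$. On an eigenspace $E$ with eigenvalue $\lambda > 0$, the restrictions $C_2|_E, \dots, C_m|_E$ remain pairwise anticommuting complex structures, and $(C_0|_E)^2 = C_0^2|_E = -\lambda I_E$, so $C_0|_E/\sqrt{\lambda}$ is one more complex structure anticommuting with all of them. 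Hence $E$ carries $m$ pairwise anticommuting complex structures, giving $m \leq \rho_{\mathbb{C}}(\dim E) - 1$.

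Now I would close the numerics, and this is where $r = 2^k$ is essential and where I expect the only real subtlety to lie. From \eqref{eq: radon hurwitz number values} one reads off $\rho_{\mathbb{C}}(s) = 2\,v_2(s) + 2$, where $v_2$ denotes the $2$-adic valuation; in particular $\rho_{\mathbb{C}}(r) = 2k+2$. Combining $m \geq \rho_{\mathbb{C}}(r) - 2 = 2k$ with $m \leq \rho_{\mathbb{C}}(\dim E) - 1 = 2\,v_2(\dim E) + 1$ yields $v_2(\dim E) \geq k - \tfrac12$, hence $v_2(\dim E) \geq k$; since $\dim E \leq r = 2^k$, this forces $\dim E = r$. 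The point is that any eigenspace of dimension smaller than $r$ would have $2$-adic valuation below $k$, costing at least two anticommuting structures, whereas adjoining $C_0$ only gains one, so the inequality has exactly enough slack to pin the eigenspace at full dimension. Consequently $H$ has at most one positive eigenvalue, and if it has one then its eigenspace is all of $\mathbb{C}^r$, so $C_0^* C_0 = \lambda I$ and $C_0$ is $\sqrt{\lambda}$ times a unitary; otherwise $H = 0$ and $C_0 = 0$. Either way $C_0$ is a scalar multiple of a unitary, as claimed.
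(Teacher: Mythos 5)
Your proof is correct, and it takes a genuinely different route from the paper's. The paper normalizes so that $C_m = I$ and $C_{m-1}$ is diagonal with $\pm\mathrm{i}$ blocks, shows the remaining matrices are then forced into off-diagonal $2\times 2$ block form, and runs an explicit induction on $k$ that halves $r$ at each step; apart from the closed formula for $\rho_{\mathbb{C}}$, that argument is self-contained. You instead normalize $C_1 = I$, recast the hypotheses as saying that $C_2,\dots,C_m$ are pairwise anticommuting complex structures and that the skew-Hermitian $C_0$ anticommutes with all of them, and then run a spectral argument on $H = C_0^*C_0$: each positive eigenspace $E$ is reducing for everything, carries $m$ anticommuting complex structures after rescaling $C_0|_E$, and the identity $\rho_{\mathbb{C}}(s) = 2v_2(s)+2$ combined with $m \geq 2k$ forces $\dim E = r$, so $H$ is a scalar. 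What your approach buys is brevity and the absence of any block bookkeeping or induction; what it costs is reliance on the upper-bound half of the classical Hurwitz--Radon--Eckmann theorem (that no $\rho$-orthonormal sequence in $\mathbb{C}^{s\times s}$ exceeds length $\rho_{\mathbb{C}}(s)$), which the paper treats as known background and itself invokes at the end of the proof of Theorem~\ref{thm:RHdimKn=n}, so this is a legitimate import. Two small housekeeping points: your normalization presupposes $m \geq 1$, so you should dispatch $r=1$ (where $m=0$ is allowed but the claim is vacuously true) as the paper does; and you should note explicitly that two anticommuting complex structures are automatically distinct, so the count of $m$ structures on $E$ is honest.
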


\begin{proof}
We may assume $r \geq 2$ since every $1 \times 1$ matrix is a scalar multiple of a unitary; then $m \geq \rho_{\mathbb{C}}(r) - 2 \geq 2$.
For another reduction, observe the following.
Given any unitaries $U,V \in \mathbb{C}^{r \times r}$, define $\tilde{C}_j:=U^* C_j V$ for every $j \geq 0$.
Then it is easy to show that $\{ \tilde{C}_j \}_{j=1}^m$ is $\rho$-orthonormal, that
\[
\tilde{C}_0^* \tilde{C}_j + \tilde{C}_j^* \tilde{C}_0 = 0
\quad
\text{for every }
j > 0,
\]
and that $\tilde{C}_0$ is a scalar multiple of a unitary if and only if $C_0$ is a scalar multiple of a unitary.
Hence, we may replace $C_j$ by $\tilde{C}_j$ without loss of generality.

We now perform this reduction twice.
First, by taking $U = C_m$, we may assume that $C_m = I$.
Then
\[
C_j^* + C_j = C_j^* C_m + C_m^* C_j = 0
\quad
\text{for every }
j \leq m-1,
\]
so that $C_j^* = -C_j$ whenever $j \leq m-1$.
In particular, the skew-Hermitian unitary $C_{m-1}$ has eigenvalues in $\{ \pm \mathrm{i} \}$.
Second, by selecting $U=V$ to be a unitary that diagonalizes $C_{m-1}$, we may assume that $C_m = I$ and that
\[
C_{m-1}
= 
\begin{bmatrix}
\mathrm{i} I_\ell & 0 \\
0 & -\mathrm{i} I_{r-\ell}
\end{bmatrix}
\]
for some $\ell \in \{0,\ldots, r \}$.
It continues to hold that $C_j^* = - C_j$ whenever $j \leq m-1$.

With these reductions in place, write $r = 2^k$ and induct on $k \geq 1$.
In the base case $k=1$, the skew-Hermitian matrix $C_0 \in \mathbb{C}^{2 \times 2}$ takes the form
\[
C_0 = \begin{bmatrix}
a \mathrm{i} & -\overline{z} \\
z & b \mathrm{i}
\end{bmatrix}
\]
for some $a,b\in \mathbb{R}$ and $z \in \mathbb{C}$, while
\[
C_{m-1} = \begin{bmatrix}
\varepsilon \mathrm{i} & 0 \\
0 & \eta \mathrm{i}
\end{bmatrix}
\]
for some $\varepsilon,\eta \in \{ \pm 1 \}$.
Since
\[
0 = C_0^* C_{m-1} + C_{m-1}^* C_0
= \begin{bmatrix}
2 \varepsilon a & (\varepsilon + \eta) \mathrm{i} \overline{z} \\
-(\varepsilon + \eta) \mathrm{i} z & 2 \eta b
\end{bmatrix},
\]
$a = b = 0$, and so $C_0$ is a scalar multiple of a unitary.

Now assume $k \geq 2$, so that 
\[
m \geq \rho_{\mathbb{C}}(r) - 2 = 2k \geq 4.
\]
Then our reductions above have further consequences.
Namely, for every $j \leq m-2$, the skew-Hermitian matrix $C_j$ can be written as
\[
C_j = \begin{bmatrix}
D_j & -E_j^* \\
E_j & F_j
\end{bmatrix}
\]
for some $D_j \in \mathbb{C}^{\ell \times \ell}$, $E_j \in \mathbb{C}^{(r - \ell) \times \ell}$, and $F_j \in \mathbb{C}^{(r - \ell) \times (r - \ell)}$ with $D_j^* = -D_j$ and $F_j^* = -F_j$.
Then the relation
\[
0 = C_j^* C_{m-1} + C_{m-1}^* C_j 
= \begin{bmatrix}
-2 \mathrm{i} D_j & 0 \\
0 & 2 \mathrm{i} F_j
\end{bmatrix}
\]
shows that $D_j = 0$ and $F_j=0$, so that
\[
C_j =
\begin{bmatrix}
0 & -E_j^* \\
E_j & 0
\end{bmatrix}.
\]

Since $m \geq 4$, $C_{m-2}$ is invertible.
It takes the form above, so $E_{m-2}$ must be square.
In other words, $\ell = r/2 = 2^{k-1}$.
Next, since $C_j$ is unitary whenever $1 \leq j \leq m-2$, $\{E_j\}_{j=1}^{m-2}$ is a sequence of unitaries.
Furthermore, for any $i,j \leq m-2$ with $i \neq j$, it holds that
\[
0 = C_i^* C_j + C_j^* C_i
= \begin{bmatrix}
E_i^* E_j + E_j^* E_i & 0 \\
0 & -E_i E_j^* - E_j E_i^*
\end{bmatrix}.
\]
Consequently, $\{E_j\}_{j=1}^{m-2}$ is a $\rho$-orthonormal sequence in $\mathbb{C}^{\tfrac{r}{2} \times \tfrac{r}{2}}$, where 
\[
m-2 \geq \rho_{\mathbb{C}}(r)-4
= \rho_{\mathbb{C}}(r/2)-2.
\]
Applying the inductive hypothesis, we find $E_0$ is a scalar multiple of a unitary.
The same then holds for $C_0$.
\end{proof}

We now prove this section's main result.

\begin{proof}[Proof of Theorem~\ref{thm:RHdimKn=n}]
To prove (a), we show there is a $\rho$-orthonormal sequence $\{C_j\}_{j=1}^{\rho_{\mathbb{C}}(r)}$ in $\mathbb{C}^{r \times r}$ such that 
\begin{equation}
\label{eq: C0 property}
\{ C_0 \in \mathbb{C}^{r \times r} : C_0^* C_j + C_j^* C_0 = 0 \text{ for every }j > 0 \}
= \{0\}.
\end{equation}
Assuming for the moment that this holds, select any simplex $\{B_j\}_{j=1}^{\rho_{\mathbb{C}}(r)+1}$ in the $\rho$-space spanned by this $\rho$-orthonormal sequence.
(Such a simplex can be constructed, for example, as in Theorem~3.2(a) of~\cite{FGLI:25}.)
Then the $\operatorname{EITFF}_{\mathbb{C}}(2r,r,n)$ constructed in Proposition~\ref{prop:RHform}(b) has $\dim \mathcal{K}_n = n = \rho_{\mathbb{F}}(r)+2$, by Lemma~\ref{lem:Sform}.

To prove the claim, write $r = (2a+1)2^k$ for some integers $a,k \geq 0$.
Holding $a$ fixed, we induct on $k$.
Here, $\rho_{\mathbb{C}}(r) = 2k+2$.
In the base case $k=0$, define $C_1 := \mathrm{i}I_r$ and $C_2 := I_r$.
Given $C_0 \in \mathbb{C}^{r \times r}$ with $C_0^* C_j + C_j^* C_0 = 0$ for all $j > 0$, we have $C_0^* = C_0$ (taking $j = 1$) and $C_0^* = -C_0$ (taking $j = 2$), so that $C_0 = 0$.

For the inductive step, let $k \geq 0$ be arbitrary, put $r := (2a+1)2^k$, and suppose $\{C_j\}_{j=1}^{\rho_{\mathbb{C}}(r)}$ is a $\rho$-orthonormal sequence in $\mathbb{C}^{r \times r}$ for which~\eqref{eq: C0 property} holds.
We produce such a $\rho$-orthonormal sequence in $\mathbb{C}^{2r \times 2r}$, where $\rho_{\mathbb{C}}(2r) = \rho_{\mathbb{C}}(r)+2$.
First, define
\[
D_j 
:= \begin{bmatrix}
    0 & -C_j^* \\
    C_j & 0
\end{bmatrix},
\quad
1 \leq j \leq \rho_{\mathbb{C}}(r).
\]
Next, define
\[
D_{\rho_{\mathbb{C}}(r)+1} :=
\begin{bmatrix}
\mathrm{i} I_r & 0 \\
0 & -\mathrm{i} I_r
\end{bmatrix}
\qquad
\text{and}
\qquad
D_{\rho_{\mathbb{C}}(r)+2} := 
\begin{bmatrix}
I_r & 0 \\
0 & I_r
\end{bmatrix}.
\]
Then it is easy to verify that $\{ D_j \}_{j=1}^{\rho_{\mathbb{C}}(2r)}$ is $\rho$-orthonormal.
To see it has the desired property, suppose $D_0 \in \mathbb{C}^{2r \times 2r}$ satisfies
\[
D_0^* D_j + D_j^* D_0 = 0
\quad
\text{ for all }
j > 0.
\]
Taking $j = \rho_{\mathbb{C}}(r)+2$, we find $D_0^* = -D_0$.
Consequently,
\[
D_0 =
\begin{bmatrix}
A & -C_0^* \\
C_0 & B
\end{bmatrix}
\]
for some $A,B,C_0 \in \mathbb{C}^{r \times r}$ with $A^* = -A$ and $B^* = -B$.
Next, taking $j = \rho_{\mathbb{C}}(r)+1$, we find
\[
0 = D_0^* D_{\rho_{\mathbb{C}}(r)+1} + D_{\rho_{\mathbb{C}}(r)+1}^* D_0
= \begin{bmatrix}
-2\mathrm{i} A & 0 \\
0 & 2 \mathrm{i} B
\end{bmatrix},
\]
so that $A=B=0$ and
\[
D_0 =
\begin{bmatrix}
0 & -C_0^* \\
C_0 & 0
\end{bmatrix}.
\]
Finally, taking any $j \leq \rho_{\mathbb{C}}(r)$, we find
\[
0 = D_0^* D_j + D_j^* D_0
= \begin{bmatrix}
C_0^* C_j + C_j^* C_0 & 0 \\
0 & C_0C_j^* + C_j C_0^*
\end{bmatrix}.
\]
Thus, $C_0^* C_j + C_j^* C_0 = 0$ for all $j > 0$, and $C_0 = 0$ by the inductive hypothesis.
It follows that $D_0 = 0$.
This completes the proof of (a).

For (b), let $r$ be any power of~2, and pick either $\mathbb{F} \in \{\mathbb{R},\mathbb{C}\}$.
By Lemma \ref{lem:Sform}, it suffices to show that, for any $\rho$-orthonormal sequence $\{C_j \}_{j=1}^{\rho_{\mathbb{F}}(r)}$ in $\mathbb{F}^{r \times r}$,
\[
\{ C_0 \in \mathbb{F}^{r \times r} : C_0^* C_j + C_j^* C_0 = 0 \text{ for every }j>0\} 
= \{0\}.
\]
To accomplish this, we apply Lemma~\ref{lem: Kay's reduction}.
Indeed, by considering each of the four cases $c \in \{0,1,2,3\}$ in~\eqref{eq: radon hurwitz number values} separately, we find $\rho_{\mathbb{F}}(r) \geq \rho_{\mathbb{C}}(r) - 2$.
Now suppose $C_0 \in \mathbb{F}^{r \times r}$ satisfies
\[
C_0^* C_j + C_j^* C_0 = 0
\quad
\text{for every }
j>0.
\]
By Lemma~\ref{lem: Kay's reduction}, $C_0$ is a scalar multiple of a unitary.
If $C_0$ were not zero, then defining $C_{\rho_{\mathbb{F}}(r)+1}:=\|C_0\|_\rho^{-1}C_0$ would produce a $\rho$-orthonormal sequence $\{C_j\}_{j=1}^{\rho_{\mathbb{F}}(r)+1}$ of length $\rho_{\mathbb{F}}(r)+1$ in $\mathbb{F}^{r \times r}$.
This is impossible, so $C_0 = 0$, as desired.
\end{proof}

\begin{remark}
Theorem~\ref{thm:RHdimKn=n}(b) fails when $r$ is not a power of~2, since there exists an $\operatorname{EITFF}_\mathbb{F}(2r,r,n)$ with  $\dim \mathcal{K}_n \neq n = \rho_\mathbb{F}(r)+2$.
To see this, first write $r = (2a+1)2^k$ for integers $a \geq 1$ and $k \geq 0$, so that $\rho_\mathbb{F}(r) = \rho_\mathbb{F}(2^k)$.
As in the proof of Lemma~\ref{lem: Kay's reduction}, there exists a $\rho$-orthonormal sequence $\{A_j\}_{j=1}^{\rho_\mathbb{F}(r)}$ in $\mathbb{F}^{2^k \times 2^k}$ with $A_{\rho_\mathbb{F}(r)} = I$.
Put
\[
D := 
\begin{bmatrix}
    1 & 0 \\
    0 & -I_{2a}
\end{bmatrix}
\in \mathbb{F}^{(2a+1) \times (2a+1)},
\]
and define $\{C_j\}_{j=1}^{\rho_\mathbb{F}(r)}$ by $C_j := D \otimes A_j \in \mathbb{F}^{r \times r}$ when $1 \leq j < \rho_\mathbb{F}(2^k)$ and $C_{\rho_\mathbb{F}(2^k)} := I_{r}$.
Then it is easy to check that $\{C_j\}_{j=1}^{\rho_\mathbb{F}(r)}$ is a $\rho$-orthonormal sequence, and that the nonzero matrix
\[
C_0 
:= \begin{bmatrix}
    0 & I_{2^k} & 0 \\
    -I_{2^k} & 0 & 0 \\
   0 & 0 & 0
\end{bmatrix} \in \mathbb{F}^{r \times r}
\]
satisfies $C_0^*C_j +C_j^*C_0$ for every $j >0$.
Next, select $\{B_j\}_{j=1}^{\rho_\mathbb{F}(r)+1}$ to be any simplex in the $\rho$-space spanned by $C_1,\ldots,C_{\rho_\mathbb{F}(r)}$, and construct an $\operatorname{EITFF}_{\mathbb{F}}(2r,r,\rho_\mathbb{F}(r)+2)$ as in Proposition~\ref{prop:RHform}.
By Lemma~\ref{lem:Sform}, this EITFF has $\dim \mathcal{K}_n > n$.

Similarly, Theorem~\ref{thm:RHdimKn=n}(a) fails when $\mathbb{C}$ is replaced by $\mathbb{R}$: when $r > 1$ is odd, one can show that every $\operatorname{EITFF}_\mathbb{R}(2r,r,\rho_\mathbb{R}+2)$ has $\dim \mathcal{K}_n > n$.
Indeed, $\rho_{\mathbb{R}}(r) = 1$ and for any unitary $C_1 \in \mathbb{R}^{r \times r}$ there exists $C_0 \neq 0$ with $C_0^* C_1 + C_1^* C_0 = 0$, namely $C_0 = A C_1$ for any choice of $A = -A^* \neq 0$.
The claim follows by Lemma~\ref{lem:Sform}.

However, there is room to extend Theorem~\ref{thm:RHdimKn=n}(a) over the reals, since there exists $r \neq 2^k$ for which some $\operatorname{EITFF}_{\mathbb{R}}(2r,r,n)$ has $\dim \mathcal{K}_n = n = \rho_{\mathbb{R}}(r)+2$.
Writing $r = (2a+1)2^k$ as above, for each $a \in \{1,2\}$ and each $k \in \{2,3\}$ there exists a $\rho$-orthonormal sequence $\{ C_j \}_{j=1}^{\rho_{\mathbb{R}}(r)}$ in $\mathbb{R}^{r \times r}$ for which a computer calculation shows the only $C_0 \in \mathbb{R}^{r \times r}$ with $C_0^* C_j + C_j^* C_0 = 0$ for every $j > 0$ is $C_0 = 0$.
(Specifically, one can take $C_j = A_j \otimes I_{2a+1}$, where $A_{\rho_{\mathbb{R}}(2^k)} = I_{2^k}$ and $\{A_j\}_{j=1}^{\rho_{\mathbb{R}}(2^k)-1}$ is a $\rho$-orthonormal sequence of skew-Hermitian unitaries that appears on page~3207 of~\cite{FGLI:25}.)
Beyond these four examples and powers of~2, the authors do not know for which even $r$ there exists an $\operatorname{EITFF}_{\mathbb{R}}(2r,r,n)$ with $\dim \mathcal{K}_n = n = \rho_{\mathbb{R}}(r)+2$, and we leave this problem for future research.
\end{remark}


\section*{Acknowledgments}

JWI was supported by NSF DMS 2220301 and a grant from the Simons Foundation.
The authors are pleased to thank Dustin Mixon for comments that led to an improved exposition in Section~\ref{sec: lower bound}.

\bigskip

\bibliographystyle{abbrv}
\bibliography{refs}

\begin{thebibliography}{10}

\bibitem{Adams:62}
J.~F. Adams.
\newblock Vector fields on spheres.
\newblock {\em Ann. Math.}, 75:603--632, 1962.

\bibitem{AdamsLP:65}
J.~F. Adams, P.~D. Lax, and R.~S. Phillips.
\newblock On matrices whose real linear combinations are non-singular.
\newblock {\em Proc. Amer. Math. Soc.}, 16:318--322, 1965.

\bibitem{BallaDKS:17}
I.~Balla, F.~Dräxler, P.~Keevash, and B.~Sudakov.
\newblock Equiangular lines and subspaces in euclidean spaces.
\newblock {\em Electronic Notes in Discrete Mathematics}, 61:85--91, 2017.

\bibitem{BallaS:19}
I.~Balla and B.~Sudakov.
\newblock Equiangular subspaces in {E}uclidean spaces.
\newblock {\em Discrete Comput. Geom.}, 61(1):81--90, 2019.

\bibitem{BandeiraFMW:13}
A.~S. Bandeira, M.~Fickus, D.~G. Mixon, and P.~Wong.
\newblock The road to deterministic matrices with the restricted isometry
  property.
\newblock {\em J. Fourier Anal. Appl.}, 19(6):1123--1149, 2013.

\bibitem{Blokhuis:93}
A.~Blokhuis.
\newblock Polynomials in finite geometries and combinatorics.
\newblock In {\em Surveys in combinatorics, 1993 ({K}eele)}, volume 187 of {\em
  London Math. Soc. Lecture Note Ser.}, pages 35--52. Cambridge Univ. Press,
  Cambridge, 1993.

\bibitem{CalderbankTX:15}
R.~Calderbank, A.~Thompson, and Y.~Xie.
\newblock On block coherence of frames.
\newblock {\em Appl. Comput. Harmon. Anal.}, 38(1):50--71, 2015.

\bibitem{CohnKM:16}
H.~Cohn, A.~Kumar, and G.~Minton.
\newblock Optimal simplices and codes in projective spaces.
\newblock {\em Geom. Topol.}, 20(3):1289--1357, 2016.

\bibitem{ConwayHS:96}
J.~H. Conway, R.~H. Hardin, and N.~J.~A. Sloane.
\newblock Packing lines, planes, etc.: {P}ackings in {G}rassmannian spaces.
\newblock {\em Experiment. Math.}, 5(2):139--159, 1996.

\bibitem{DhillonHST:08}
I.~S. Dhillon, R.~W. Heath, Jr., T.~Strohmer, and J.~A. Tropp.
\newblock Constructing packings in {G}rassmannian manifolds via alternating
  projection.
\newblock {\em Experiment. Math.}, 17(1):9--35, 2008.

\bibitem{EldarKB:10}
Y.~C. Eldar, P.~Kuppinger, and H.~B\"olcskei.
\newblock Block-sparse signals: uncertainty relations and efficient recovery.
\newblock {\em IEEE Trans. Signal Process.}, 58(6):3042--3054, 2010.

\bibitem{EldarM:09}
Y.~C. Eldar and M.~Mishali.
\newblock Robust recovery of signals from a structured union of subspaces.
\newblock {\em IEEE Trans. Inform. Theory}, 55(11):5302--5316, 2009.

\bibitem{EtTaoui:06}
B.~Et-Taoui.
\newblock Equi-isoclinic planes of {E}uclidean spaces.
\newblock {\em Indag. Math.}, 17(2):205--219, 2006.

\bibitem{EtTaoui:18}
B.~Et-Taoui.
\newblock Infinite family of equi-isoclinic planes in {E}uclidean odd
  dimensional spaces and of complex symmetric conference matrices of odd
  orders.
\newblock {\em Linear Algebra Appl.}, 556:373--380, 2018.

\bibitem{EtTaoui:20}
B.~Et-Taoui.
\newblock Quaternionic equiangular lines.
\newblock {\em Adv. Geom.}, 20(2):273--284, 2020.

\bibitem{FGLI:25}
M.~Fickus, E.~Gomez-Leos, and J.~W. Iverson.
\newblock Radon-{H}urwitz {G}rassmannian codes.
\newblock {\em IEEE Trans. Inform. Theory}, 71(4):3203--3213, 2025.

\bibitem{FIJM:24}
M.~Fickus, J.~W. Iverson, J.~Jasper, and D.~G. Mixon.
\newblock Equi-isoclinic subspaces from symmetry.
\newblock arXiv:2406.19542.

\bibitem{FickusIJM:23}
M.~Fickus, J.~W. Iverson, J.~Jasper, and D.~G. Mixon.
\newblock Harmonic {G}rassmannian codes.
\newblock {\em Appl. Comput. Harmon. Anal.}, 65:1--39, 2023.

\bibitem{FickusIJM:24}
M.~Fickus, J.~W. Iverson, J.~Jasper, and D.~G. Mixon.
\newblock Equi-isoclinic subspaces, covers of the complete graph, and complex
  conference matrices.
\newblock {\em Linear Algebra Appl.}, 702:240--249, 2024.

\bibitem{FickusJKM:18}
M.~Fickus, J.~Jasper, E.~J. King, and D.~G. Mixon.
\newblock Equiangular tight frames that contain regular simplices.
\newblock {\em Linear Algebra Appl.}, 555:98--138, 2018.

\bibitem{FickusJMW:17}
M.~Fickus, J.~Jasper, D.~G. Mixon, and C.~E. Watson.
\newblock {A brief introduction to equi-chordal and equi-isoclinic tight fusion
  frames}.
\newblock In Y.~M. Lu, D.~V.~D. Ville, and M.~Papadakis, editors, {\em Wavelets
  and Sparsity XVII}, volume 10394, page 103940T. International Society for
  Optics and Photonics, SPIE, 2017.

\bibitem{FickusMW:21}
M.~Fickus, B.~R. Mayo, and C.~E. Watson.
\newblock Certifying the novelty of equichordal tight fusion frames.
\newblock arXiv:2103.03192.

\bibitem{FickusM:15}
M.~Fickus and D.~G. Mixon.
\newblock Tables of the existence of equiangular tight frames.
\newblock arXiv:1504.00253.

\bibitem{GodsilR:01}
C.~Godsil and G.~Royle.
\newblock {\em Algebraic graph theory}, volume 207 of {\em Graduate Texts in
  Mathematics}.
\newblock Springer-Verlag, New York, 2001.

\bibitem{GodsilH:92}
C.~D. Godsil and A.~D. Hensel.
\newblock Distance regular covers of the complete graph.
\newblock {\em J. Combin. Theory Ser. B}, 56(2):205--238, 1992.

\bibitem{Hoggar:76}
S.~G. Hoggar.
\newblock Quaternionic equi isoclinic {$n$}-planes.
\newblock {\em Ars Combin.}, 2:11--13, 1976.

\bibitem{Hoggar:77}
S.~G. Hoggar.
\newblock New sets of equi-isoclinic {$n$}-planes from old.
\newblock {\em Proc. Edinburgh Math. Soc.}, 20(4):287--291, 1977.

\bibitem{HorodeckiRZ:22}
P.~Horodecki, {\L{}}.~Rudnicki, and K.~\ifmmode~\dot{Z}\else
  \.{Z}\fi{}yczkowski.
\newblock Five open problems in quantum information theory.
\newblock {\em PRX Quantum}, 3:010101, 2022.

\bibitem{Hurwitz:22}
A.~Hurwitz.
\newblock \"{U}ber die {K}omposition der quadratischen {F}ormen.
\newblock {\em Math. Ann.}, 88(1-2):1--25, 1922.

\bibitem{IversonKM:21}
J.~W. Iverson, E.~J. King, and D.~G. Mixon.
\newblock A note on tight projective 2-designs.
\newblock {\em J. Combin. Des.}, 29(12):809--832, 2021.

\bibitem{Jordan:75}
C.~Jordan.
\newblock Essai sur la g\'eom\'etrie \`a{} {$n$} dimensions.
\newblock {\em Bull. Soc. Math. France}, 3:103--174, 1875.

\bibitem{LemmensS:73}
P.~W.~H. Lemmens and J.~J. Seidel.
\newblock Equi-isoclinic subspaces of {E}uclidean spaces.
\newblock {\em Indag. Math.}, 35:98--107, 1973.

\bibitem{LS:73}
P.~W.~H. Lemmens and J.~J. Seidel.
\newblock Equiangular lines.
\newblock {\em J. Algebra}, 24:494--512, 1973.

\bibitem{Radon:22}
J.~Radon.
\newblock Lineare {S}charen orthogonaler {M}atrizen.
\newblock {\em Abh. Math. Sem. Univ. Hamburg}, 1(1):1--14, 1922.

\bibitem{Rajwade:93}
A.~R. Rajwade.
\newblock {\em Squares}, volume 171 of {\em London Mathematical Society Lecture
  Note Series}.
\newblock Cambridge University Press, Cambridge, 1993.

\bibitem{RenesBSC:04}
J.~M. Renes, R.~Blume-Kohout, A.~J. Scott, and C.~M. Caves.
\newblock Symmetric informationally complete quantum measurements.
\newblock {\em J. Math. Phys.}, 45(6):2171--2180, 2004.

\bibitem{Tretter:08}
C.~Tretter.
\newblock {\em Spectral theory of block operator matrices and applications}.
\newblock Imperial College Press, London, 2008.

\bibitem{Tropp:04}
J.~A. Tropp.
\newblock Greed is good: {A}lgorithmic results for sparse approximation.
\newblock {\em IEEE Trans. Inform. Theory}, 50(10):2231--2242, 2004.

\bibitem{Waldron:20}
S.~Waldron.
\newblock Tight frames over the quaternions and equiangular lines.
\newblock arXiv:2006.06126.

\bibitem{WangY:25}
L.~Wang and K.~Ye.
\newblock Upper bounds for $s$-distance sets.
\newblock arXiv:2511.09076.

\bibitem{Welch:74}
L.~Welch.
\newblock Lower bounds on the maximum cross correlation of signals.
\newblock {\em {IEEE} Trans. Inform. Theory}, 20(3):397--399, 1974.

\bibitem{Wong:61}
Y.~C. Wong.
\newblock Isoclinic {$n$}-planes in {E}uclidean {$2n$}-space, {C}lifford
  parallels in elliptic {$(2n-1)$}-space, and the {H}urwitz matrix equations.
\newblock {\em Mem. Amer. Math. Soc.}, 41, 1961.

\bibitem{Zauner:11}
G.~Zauner.
\newblock Quantum designs: {F}oundations of a noncommutative design theory.
\newblock {\em Int. J. Quantum Inf.}, 9(1):445--507, 2011.

\end{thebibliography}

\end{document}